\documentclass[a4paper,UKenglish,cleveref, autoref, thm-restate]{lipics-v2021}

\usepackage{enumitem}
\usepackage{thmtools}

\usepackage{xspace}

\usepackage{tikz}
\usetikzlibrary{shapes,shapes.geometric,arrows,arrows.meta,fit,calc,positioning,automata}

\theoremstyle{plain}

\theoremstyle{definition}

\theoremstyle{remark}

\theoremstyle{remark}
\newtheorem{notation}[theorem]{Notation}

\theoremstyle{definition}
\newtheorem{problem}[theorem]{Problem}



\makeatletter

\let\c@claim\relax

\makeatother

\makeatletter
\let\c@observation\relax
\makeatother

\declaretheorem[
  style=definition, 
  name=Observation,
  sibling=theorem   
]{observation}

\theoremstyle{plain}
\newtheorem{claim}[theorem]{Claim}

\numberwithin{theorem}{section}

\counterwithin{theorem}{section}
\counterwithin{lemma}{section}
\counterwithin{corollary}{section}
\counterwithin{proposition}{section}
\counterwithin{exercise}{section}
\counterwithin{definition}{section}
\counterwithin{conjecture}{section}
\counterwithin{remark}{section}
\counterwithin{notation}{section}
\counterwithin{note}{section}
\counterwithin{claim}{section}
\counterwithin{figure}{section}
\counterwithin{example}{section}
\counterwithin{observation}{section}

\newcommand{\blue}[1]{\textcolor{blue}{#1}}

\newcommand{\bb}[1]{\mathbb{#1}}
\newcommand\nindent{.5pt}
\newcommand\noverline[1]{%
  \kern\nindent\overline{\kern-\nindent#1\kern-\nindent}\kern\nindent}

\newcommand{\cl}[1]{\mathcal{#1}}

\newcommand{\Roxy}{\text{Roxy}\xspace}
\newcommand{\Sam}{\text{Sam}\xspace}
\newcommand{\DPBG}{\text{DPBG}\xspace}

\newcommand{\CPBG}{\text{CPBG}\xspace}

\newcommand{\txt}[1]{\text{#1}\xspace}

\newcommand{\compclass}[1]{\textsf{#1}\xspace}
\newcommand{\NP}{\compclass{NP}}
\newcommand{\coNP}{\compclass{coNP}}

\newcommand{\PP}{\compclass{P}}

\newcommand{\EXP}{\compclass{EXP}}

\newcommand{\gamesabbr}[1]{\textsl{#1}}

\newcommand{\Win}{\gamesabbr{Win}}
\newcommand{\bidmech}{\gamesabbr{Bid}}
\newcommand{\History}{\gamesabbr{History}}

\newcommand{\chargevec}[2]{\blue{\begin{bmatrix} #1\\
                            #2 
                            \end{bmatrix}}}

\newcommand{\pathbetween}[2]{#1 \rightsquigarrow #2}
\newcommand{\connTo}[2]{#1_{\rightsquigarrow #2}}

\newcommand{\term}[1]{\textcolor{brown}{\textsl{#1}}}
\newcommand{\mathterm}[1]{\textcolor{brown}{{#1}}}

\newcommand{\explayseq}{\Pi}

\newcommand{\ppathseq}{\textbf{\textsc{v}}}
\newcommand{\pbold}[1]{\textbf{\textsc{#1}}}

\newcommand{\ppath}{\textbf{\textsc{v}}}
\newcommand{\pBz}{\textbf{\textsc{B}}_r}
\newcommand{\pBo}{\textbf{\textsc{B}}_s}
\newcommand{\pB}[1]{\textbf{\textsc{B}}_{#1}}
\newcommand{\pseq}[2]{\textbf{\textsc{#1}}_{#2}}
\newcommand{\Rz}{R_r}
\newcommand{\Ro}{R_s}
\newcommand{\Ri}{R_\iota}
\newcommand{\rz}{r_r}
\newcommand{\ro}{r_s}
\newcommand{\ri}{r_\iota}

\newcommand{\Br}{B_r}
\newcommand{\Bs}{B_s}

\newcommand{\flBrack}[1]{\lfloor #1 \rfloor}
\newcommand{\clBrack}[1]{\lceil #1 \rceil}

\newcommand{\boxedpair}[2]{%
  \mathpalette\boxedpairaux{{#1},{#2}}%
}

\newcommand{\boxedpairaux}[2]{%
  \begingroup
  \setlength{\fboxsep}{1pt}%
  \colorbox{gray!20}{$#1(#2)$}%
  \endgroup
}

\newcommand{\twopairarrowbox}[5]{%
  \mathrel{\xrightarrow{%
    \;%
    \ifstrequal{#1}{L}
      {\boxedpair{#3}{#2},\; (#5,#4)}
      {(#3,#2),\; \boxedpair{#5}{#4}}%
    \;%
  }}%
}

\newcommand{\dogexboxedpair}[2]{%
  \mathpalette\dogexboxedpairaux{{#1},{#2}}%
}

\newcommand{\dogexboxedpairaux}[2]{%
  \begingroup
  \setlength{\fboxsep}{1pt}%
  \colorbox{gray!20}{$#1\text{\faDog}(#2)$}%
  \endgroup
}

\newcommand{\catexboxedpair}[2]{%
  \mathpalette\catexboxedpairaux{{#1},{#2}}%
}

\newcommand{\catexboxedpairaux}[2]{%
  \begingroup
  \setlength{\fboxsep}{1pt}%
  \colorbox{gray!20}{$#1\text{\faCat}(#2)$}%
  \endgroup
}

\newcommand{\extwopairarrowbox}[5]{%
  \mathrel{\xrightarrow{%
    \;%
    \ifstrequal{#1}{L}
      {\catexboxedpair{#3}{#2},\; \text{\faDog}(#5,#4)}
      {\text{\faCat}(#3,#2),\; \dogexboxedpair{#5}{#4}}
    \;%
  }}
}

\def\vecdist{0.9cm}
\def\leftvecdist{0.8cm}
\def\labeldist{1.0cm}
\def\gamesdist{5.0cm}
\def\nodedist{2.0cm}
\def\largegamesdist{{1.8*\gamesdist}}
\nolinenumbers
\title{Bidding Games with Rewards: Taming Infinite Configuration Space}

\author{Matan Pinkas}{The Stein Faculty of Computer and Information Science, Ben-Gurion University of the Negev}{Matan.Pinkas@gmail.com}{https://orcid.org/0009-0003-3082-4043}{Supported by ISF grant 2507/21}

\authorrunning{J. Open Access and J.\,R. Public} 

\Copyright{Jane Open Access and Joan R. Public} 
\ccsdesc{Theory of computation~Formal languages and automata theory}

\ccsdesc{Theory of computation~Algorithmic game theory}

\ccsdesc{Theory of computation~Theory and algorithms for application domains}

\ccsdesc{Theory of computation~Design and analysis of algorithms}

\keywords{Bidding games, Concurrent games} 

\category{} 

\relatedversion{} 

\EventEditors{John Q. Open and Joan R. Access}
\EventNoEds{2}
\EventLongTitle{42nd Conference on Very Important Topics (CVIT 2016)}
\EventShortTitle{CVIT 2016}
\EventAcronym{CVIT}
\EventYear{2016}
\EventDate{December 24--27, 2016}
\EventLocation{Little Whinging, United Kingdom}
\EventLogo{}
\SeriesVolume{42}
\ArticleNo{23}

\begin{document}

\maketitle

\begin{abstract}
    Bidding games are graph games in which a token is placed on a vertex, each player starts with an initial budget, and a simultaneous auction determines which player moves the token; the players' budgets are then updated accordingly.
    Motivated by scenarios such as resource-allocation systems in which agents receive periodic rewards (e.g., credits, energy) while competing for control, we introduce and study \emph{bidding games with rewards}, in which, at each vertex, players may receive additional budget, incentivizing desired behaviors.
    We focus on \emph{reachability discrete poorman bidding games with rewards} (\txt{DPBGr}).
    The main challenge when compared to discrete bidding games without rewards is that the configuration graph is infinite.
    To this end we introduce a novel technique to eliminate plays with suboptimal infixes. This enables focusing on a finite part of the infinite configuration graph in order to solve the game via approximation to continuous bidding games with overall complexity in \EXP. Finally, we discuss a new type of strategy, usable on a subclass of \txt{DPBGr}, which guarantee a winning strategy for the Reachability player. Membership in this subclass is shown to be in \NP.  
\end{abstract}

\section{Introduction}
\label{sec:intro}

Bidding games are graph games in which a token is placed on a vertex of a directed graph and each player is endowed with an initial budget.
Traversal of the graph is determined by a sequence of bidding rounds, in which the players simultaneously bid for the right to choose the next vertex, and the winner of the bidding selects the successor.
The outcome of a round also determines how the players' budgets are updated, and different \emph{bidding mechanisms} give rise to different classes of games.
In addition, bidding can be either \emph{discrete}, where bids range over $\bb{N}$ (e.g., coins), or \emph{continuous}, where arbitrarily small bids are allowed.
Several mechanisms have been studied in the literature, including \emph{Richman bidding}, where the winner pays its bid to the loser~\cite{DiscreteRichman2010, Richman1999}; \emph{poorman bidding}, where the winner pays its bid to the bank~\cite{Richman1999}; \emph{taxman bidding}, where a fraction of the bid is transferred to the opponent and the remainder is discarded; and \emph{all-pay bidding}, where both players pay their bids regardless of the outcome~\cite{Survey}.
These models provide a natural framework for reasoning about systems in which control is resolved through competition for limited resources, and have been studied for a variety of qualitative objectives, such as reachability and parity.
A central question in this line of work is the computation of \emph{thresholds}, which determine the minimal initial budget required for a player to guarantee winning the game.

Motivated by scenarios such as resource-allocation systems in which agents receive periodic rewards (e.g., credits, energy, or tokens) while competing for control, we introduce and study \term{bidding games with rewards} (\term{\txt{BGr}}), in which, at each vertex, players may receive additional budget, incentivizing desired behaviors.
We focus on \term{reachability discrete poorman bidding games with rewards} (\term{\txt{DPBGr}}), a natural extension of the classical model that combines both consumption and replenishment of resources. The reachability objective captures the fundamental task of steering the system toward a desired set of target states, and serves as a basic building block for more expressive specifications.
Moreover, poorman bidding naturally models settings in which resources are consumed rather than transferred between agents, and thus reflects scenarios where budget represents expendable quantities such as energy or time.
Together, these choices yield a practically motivated setting that isolates the interplay between resource consumption, replenishment, and strategic control.

We begin with a simple example without rewards, which serves to introduce our notation and illustrate the basic dynamics of bidding games.
Henceforth, we refer to Player~0 (the reachability player) as \emph{Roxy} and to Player~1 (the safety player) as \emph{Sam}.
Note that while in continuous bidding games one may assume no ties are made, in discrete bidding games, one must specify how ties are broken (this choice is important as some tie breaking mechanisms result in games that are not determined, see \cite{determinacy}).  
Throughout this paper, we do not explicitly mention what tie breaking mechanisms are used and assume implicitly that a mechanism for which the game is determined is used (see \cref{dpbgr-determined} for examples of \txt{DBPGr} that are determined).


\begin{example}[Simple \txt{DPBG}]\label{ex:simple}
    Consider the \txt{DPBG} $\cl{G}_1$ of \Cref{fig:first} where ties are broken in Sam's favor, and the initial configuration $c_0=(v_0,1,1)$, in which the token is at $v_0$ and both players have budget~$1$.
    \Roxy wins this game as follows.
    She bids $(t,1)$ at $v_0$.
    If \Sam bids less, she wins immediately.
    Otherwise, \Sam bids $1$ and chooses $v_1$; since ties are broken in his favor, he wins the bidding and the next configuration is $c_1=(v_1,1,0)$.
    From this configuration, \Roxy again bids $(t,1)$.
    Since \Sam can only bid $0$, she wins the bidding, the token reaches $t$, and she wins.

    We depict this play as follows, where transitions between configurations are labeled with the players' bids, and the winning bid is shaded gray:
    \begin{center}
    \scalebox{0.9}{
    \[
    (v_0,\overset{\text{\faCat}}{1},\overset{\text{\faDog}}{1})\xrightarrow[\colorbox{gray!20}{\text{\faDog $~v_1,1$}}]{\colorbox{white!20}{\text{\faCat $~t,1$}}}(v_1,1,0)\xrightarrow[\colorbox{white!20}{\text{\faDog $~v_0,0$}}]{\colorbox{gray!20}{\text{\faCat $~t,1$}}}(t,0,0)
    \]
    }
        \end{center}

    Note that \Roxy's (resp.\ \Sam's) budgets and bids are placed above (resp.\ below) the arrow, as annotated here just this once for clarity.

    With regard to the \emph{threshold problem}, let $\mathterm{th_v(B_1)}$ denote the budget \Roxy requires in order to win from $v$ when \Sam\ has budget $B_1$.
    In this game, we have $th_{v_0}(1)=1$, as \Roxy can win with budget~$1$, as shown above, whereas she cannot win with budget~$0$, since she must win at least one bidding round.
\end{example}

Let us now illustrate how adding rewards may change the game.
Our first example shows that adding rewards may change the winner of the game.

\begin{example}[Adding rewards may change outcomes]
Continuing \Cref{ex:simple}, consider the \txt{DPBGi} $\cl{G}'_1$ of \Cref{fig:first}, in which at node $v_1$, \Sam receives a reward of $2$ (and \Roxy receives $0$).
Henceforth, rewards vectors are depicted in blue next to each vertex, with \Roxy's (resp.\ \Sam's) rewards shown on top (resp.\ bottom).
For clarity, we omit vectors in which both players receive zero rewards.
In this game, \Sam can win by repeatedly cycling between $v_0$ and $v_1$, a loop that can be sustained indefinitely due to the rewards he receives.
That is, the following play can repeat forever:
\begin{center}
    \scalebox{0.9}{
\[
    (v_0,1,1)\xrightarrow[\colorbox{gray!20}{\text{$v_1,1$}}]{\colorbox{white!20}{\text{$t,1$}}}(v_1,1,0)\xrightarrow[\colorbox{gray!20}{\text{ $v_0,1$}}]{\colorbox{white!20}{\text{$t,1$}}}(v_0,1,1)
    \]
}\end{center}
\end{example}

Next, we show that adding rewards may change the optimal strategy.

\begin{example}[Adding rewards changes winning strategies]
Consider the \txt{DPBGr} of \Cref{fig:second}, call it $\cl{G}'_2$, and let $\cl{G}_2$ be the corresponding \DPBG obtained by removing the rewards.
We first analyze the game without rewards.
Reaching $t$ via $v_4$ requires \Roxy to win at least one bidding round, as otherwise \Sam can remain in $v_4$ indefinitely without depleting his budget (since ties are broken in his favor).
Similarly, reaching $t$ via $v_1$ requires \Roxy to win at least three bidding rounds.
Accordingly, in $\cl{G}_2$, \Roxy wins from $(v_0,2,0)$ by moving to $v_4$.

In contrast, in $\cl{G}'_2$, this strategy is losing for \Roxy, as she cannot outbid \Sam at $v_4$ in the presence of rewards.
However, she can win by moving to $v_1$, where she receives a reward of $4$, enabling her to win three consecutive bidding rounds and reach $t$.
\end{example}

\begin{figure}
    \centering
    \begin{minipage}[b]{0.48\textwidth}
    \centering
        \scalebox{0.75}{
        \begin{tikzpicture}[->,>=stealth',shorten >=1pt,auto,node distance=\nodedist,semithick,initial text=, initial left, font=\large]
    
        \node[] (v0) {};
        \node[] (v1) [above right of=v0] {};
        \node[]  (v2) [below right of=v0]  {};
        \node[] (t) [below right of=v1] {};
        
        \node[state] (u0) [left of=v1, node distance={\largegamesdist}] {$v_0$};
        \node[state] (u1) [left of=v2, node distance={\largegamesdist}] {$v_1$};
        \node[state,accepting]  (ut) [left of=t, node distance={\largegamesdist}]  {$t$};
        \node[] (g1) [below of=u1, node distance=\labeldist] {$\cl{G}_1$};
        
        \path (u0) edge [bend right] (u1);
        \path (u1) edge  (u0);
        \path (u0) edge (ut);
        \path (u1) edge (ut);
        
        \node[state] (uu0) [left of=v1, node distance=\gamesdist] {$v_0$};
        \node[state] (uu1) [left of=v2, node distance=\gamesdist] {$v_1$};
        \node[state,accepting]  (uut) [left of=t, node distance=\gamesdist]  {$t$};
        \node[] (gg1) [below of=uu1, node distance=\labeldist] {$\cl{G}'_1$};
        \node[] (vec) [left of=uu1, node distance={\leftvecdist}] {$\chargevec{0}{2}$};
        \node[] (stam) [left of=uu1, node distance={1.6*\leftvecdist}] { };
        \node[] (cat) [above of=stam, node distance={0.21cm}] {\textcolor{blue}{\faCat}};
        \node[] (dog) [below of=stam, node distance={0.21cm}] {\textcolor{blue}{\faDog}};

        \path (uu0) edge [bend right] (uu1);
        \path (uu1) edge  (uu0);
        \path (uu0) edge (uut);
        \path (uu1) edge (uut);
        \end{tikzpicture}
        }
    \caption{Adding rewards changes winner}
    \label{fig:first}
    \end{minipage}
    \hfill
    \begin{minipage}[b]{0.48\textwidth}
    \centering
        \scalebox{0.75}{
        \begin{tikzpicture}[->,>=stealth',shorten >=1pt,auto,node distance=2.0cm,semithick,initial text=, initial left, font=\large]
        \node[state] (v0) {$v_0$};
        \node[state] (v1) [above right of=v0] {$v_1$};
        \node[state] (v2) [ right of=v1]  {$v_2$};
        \node[state] (v3) [ right of=v2]  {$v_3$};
        \node[state] (v4) [below right of=v0]  {$v_4$};
        \node[state,accepting] (t) [right of=v4, node distance={2*\nodedist}] {$t$};
        \node[] (vec1) [above left of=v1, node distance=\vecdist] {$\chargevec{4}{0}$};
        \node[] (vec4) [below left of=v4, node distance=\vecdist] {$\chargevec{0}{3}$};

        \path (v0) edge (v1);
        \path (v1) edge (v2);
        \path (v2) edge (v3);
        \path (v3) edge (t);

        \path (v0) edge (v4);
        \path (v4) edge (t);
        \path (v1) edge [loop above] (v1);
        \path (v2) edge [loop above] (v2);
        \path (v3) edge [loop above] (v3);
        \path (v4) edge [loop above] (v4);

        \end{tikzpicture}
        }
    \caption{Adding rewards changes strategy}
    \label{fig:second}
    \end{minipage}
\end{figure}

We note that a related variant of continuous bidding games with rewards, termed \term{bidding games with charging} (\txt{\term{CBGc}})~\cite{AvniGHM24}, has been studied in the literature.
In this setting, after the outcome of each round of bidding, a \emph{normalization} step is performed so that the sum of the players' budgets is rescaled to~$1$.
Avni et al.~\cite{AvniGHM24} show that a central property of classical bidding games extends to this setting: for every vertex, there exists a threshold ratio that characterizes the minimal fraction of the total budget required for a player to win from that vertex.
However, the situation is more intricate than in standard bidding games.
While in reachability continuous bidding games (reachability \txt{CBG}) the thresholds correspond to unique fixed points of linear systems of equations, in \txt{CBGc} these fixed points are no longer unique.
Moreover, Avni et al.~\cite{AvniGHM24} show that removing the normalization step—as we do in our model—leads to fundamentally different dynamics.
In particular, the threshold of one player becomes a non-linear function of the opponent's budget (see~\cite[Ex.~4]{AvniGHM24}).

From a modeling perspective, the normalization step in \txt{CBGc} enforces a constant total budget, effectively coupling the players' resources.
In contrast, in many applications, resources are accumulated and consumed independently by each agent.
Our model captures such scenarios more directly: budgets evolve through local updates (bids and rewards) without global rescaling.
This leads to a more faithful representation of systems in which resources such as energy, credits, or time are replenished and expended over time.

From a technical perspective, removing normalization also gives rise to qualitatively different phenomena.
As illustrated by our examples, the presence of rewards may change not only the outcome of the game but also the structure of optimal strategies, and leads to new algorithmic challenges in computing thresholds.


For example, play of a \txt{DPBGr} is more akin to a game played on an infinite graph. One may think of a discrete bidding game (with rewards or without) as taking place on an infinite graph of configurations (where a configuration holds the current vertex and current player budgets). Due to the decaying nature of budgets in \txt{DPBG}, the configuration visited in the infinite graph are necessarily finite. On the other hand, in \txt{DPBGr} budgets may grow and as such infinitely many configurations may be visited. 
In \txt{DPBG} the number of steps required to reach a target can be assumed to be polynomial in the size of the graph.
On the other hand in \txt{DPBGr}, budgets may grow in a variety of ways, and this polynomial bound no longer holds.

In this work, we show that for \txt{DPBGr} with a reachability objective, we may still compute the threshold by examining a finite segment of the infinite configuration graph. 
This is achieved by considering only plays in which both players play reasonably optimally---specifically, by avoiding \emph{disadvantaged loops} (see def. \cref{disadvantaged-loops}). Under this constraint we show that if a play is sufficiently long, optimal bids from that point on approximate optimal strategies for \term{continuous poorman bidding games} (\txt{CPBG}) which have been studied in \cite{Richman1999}, where thresholds were shown to exist and optimal strategies were described. 
Though, we show that this finite segment is exponential, there is reason to believe that this is not a tight bound.   
Finally, we examine a subset of \txt{DPBGr} in which \Roxy has access to a unique strategy, which we term a $B$-growth strategy (for some $B\in \bb{N}$). In such games, $th_v(B_1) \le B$ for all $B_1 \in \bb{N}$.

\section{Preliminaries}
\label{sec:prelim}

\subparagraph{Sequences notations} 
For a finite or infinite sequence $\textbf{c}=c_1,c_2,\ldots$ we use $\textbf{c}[i]$ for the index-$i$ element $c_i$, and $\textbf{c}[..i]$ for the prefix up to index $i$, inclusive.  We use $|\textbf{c}|$ for the length of $\textbf{c}$. Accordingly, if $\textbf{c}$ is finite, then $\textbf{c}[|\textbf{c}|]$ is the last element of $\textbf{c}$. 

Given a graph $(V,E)$ we use $\mathterm{\rightsquigarrow}$ to denote the Reachability relation. That is, for $u,v\in V$ we have $u \rightsquigarrow v$ iff there exists a path from $u$ to $v$. We extend this to sets, so $\mathterm{\pathbetween{S}{T}}$ if there exists a path from some $s\in S$ to some $t\in T$.  
Finally we use $\mathterm{\connTo{S}{T}}$ for the set $\{s\in S ~|~\pathbetween{s}{T}\}$.

\subparagraph{Bidding Games} 
A (two player) \term{bidding game} is a tuple $\cl{G}=(V,E,\Win,\bidmech)$, where $(V,E)$ is a directed graph, $\Win \subseteq V^\omega$ is a winning objective, and $\bidmech$ is a bidding mechanism which consists of 3 parameter - 
\begin{enumerate}
    \item \textbf{Bid granularity} -- is bidding discrete or continuous, meaning over $\bb{N}$ or $\bb{R}_{\ge0}$ respectively.
    \item \textbf{Update mechanism} -- What happens after bids take place. In Richman the winner pays the loser, in poorman the bid is paid to the bank and in all-pay everyone pays the bank. 
    \item \textbf{Tie breaking mechanism} -- How ties are resolved. For example, all ties are broken in favor of player $0$.
\end{enumerate}
\begin{remark}
    While in discrete bidding games tie breaking mechanisms are necessary, in continuous bidding games as bids can be arbitrarily small one may assume ties do not occur. 
\end{remark}
\begin{remark}
    Bidding games are not necessarily determined and can be impacted by the choice of tie breaking mechanism as was shown in \cite{determinacy} for discrete bidding games. For example, bidding games with alternating tie-breaking, in which ties are broken in turn, are not determined.
    Throughout this paper we implicitly assume only tie-breaking mechanisms for which \txt{DPBG} are determined.
\end{remark} 

A \term{configuration} of the game is a triple $(v,B_0,B_1)$, where $v\in V$ is the current vertex, and $B_0$ and $B_1$ are the budgets of Players~$0$ and~$1$, respectively.

An \term{initialized game} consists of a game $\cl{G}$ together with an initial configuration $(v,B_0,B_1)$. For a vertex $u\in V$ we use $\mathterm{succ(u)}$ for the set of successors of $u$, namely $\{ v \mid (u,v)\in E \}$.

\subparagraph{Bids, Bidding Rounds}
We use $\iota\in\{0,1\}$ for the player identity, and $\overline{\iota}$ for his opponent.
Consider a configuration $(v,B_0,B_1)$. 
A \term{bid} of Player~$\iota$ is a pair $(v_\iota,b_\iota)$, where $v_\iota$ is a successor of $v$ that Player~$\iota$ wishes to move the token to, and $b_\iota\leq B_\iota$ is the amount the player is willing to bid for this move.

A \term{bidding round} consists of the bids $(v_0,b_0)$ and $(v_1,b_1)$ of the players. 
If $b_\iota  > b_{\overline{\iota}}$, player $\iota$ wins the round. In the case of a tie $\bidmech$ determines who wins the round. 
Further, $\bidmech$ determines how a configuration $(v,B_0,B_1)$ is updated to a new configuration $(v',B'_0,B'_1)$. 
If Player~$\iota$ wins the bidding, then $v'=v_\iota$. The update budgets $B'_0$ and $B'_1$ depend on the bidding mechanism.
We henceforth consider only \term{poorman bidding}, in which the budgets are updated according to
$B'_\iota = B_\iota-b_\iota$ and 
$B'_{\overline{\iota}} = B_{\overline{\iota}}$, where $\iota$ (resp. $\overline{\iota}$) denotes the player who wins (resp. loses) the round.
That is, the winner pays its bid to the bank, while the loser's budget remains unchanged. We use $\mathterm{(v,B_0,B_1)\xrightarrow[u_1,b_1]{u_0,b_0}(v',B'_0,B'_1)}$ to denote that configuration $(v,B_0,B_1)$ was updated to $(v',B'_0,B'_1)$ due to bidding round consisting on bids $(u_0,b_0)$ and $(u_1,b_1)$ for player $0$ and $1$ respectively, and refer to this as a \term{valid configuration update}.

\subparagraph{Plays, Histories} 

An \term{extended play} in a bidding game  is a tuple $\explayseq=(\ppath,\pB{0},\pB{1},\pseq{u}{0}, \pseq{b}{0}, \pseq{u}{1},\pseq{b}{1})$ of infinite sequences $\ppath,\pseq{u}{0},\pseq{u}{1}\in V^\omega$, $\pB{0},\pB{1},\pseq{b}{0},\pseq{b}{1}\in\bb{N}^\omega$ where for each $i\in\bb{N}$
$$ 
(\ppath[i],\pB{0}[i],\pB{1}[i])\xrightarrow[\pseq{u}{1}[i+1\text{]},\ \pseq{b}{1}[i+1\text{]}]{\pseq{u}{0}[i+1],\ \pseq{b}{0}[i+1]}(\ppath[1],\pB{0}[i+1],\pB{1}[i+1]) 
$$
is a valid configuration update.

Any prefix of $\explayseq$ is referred to as a \emph{history} of $\explayseq$. We denote by $\mathterm{\History(\cl{G})}$ the set of histories of extended plays in $\cl{G}$.

\subparagraph{Strategies, Winning Strategies} 
In the case of bidding games, a strategy for player $\iota$ specifies not only the desired successor but also the offered bid.
    Consider a bidding game $\cl{G}=(V,E,\Win, \bidmech)$. A \term{strategy} $\sigma: \History(\cl{G}) \to V\times \bb{R}$ for player $\iota$ is a function where if $\sigma(\explayseq[..i])=(u,b)$ then $u\in succ(\ppath[i])$ and $0\le b \le B_\iota$.

An extended play $\explayseq$ is \term{consistent} with strategies $\sigma_0$ and $\sigma_1$ if $\sigma_\iota(\explayseq[..n])=(u_\iota[n+1],b_\iota[n+1])$ for $\iota\in\{0,1\}$. 
From a given configuration $(v,B_0,B_1)$ there is a unique play consistent with $\sigma_0$ and $\sigma_1$, we denote it $\mathterm{\langle \sigma_0,\sigma_1, v, B_0,B_1\rangle}$. When the initial configuration is clear we simply write $\mathterm{\langle \sigma_0,\sigma_1\rangle}$. 

An extended play $\explayseq$ is consistent with strategy $\sigma_\iota$ for player $\iota$ if there exists a strategy $\sigma_{\overline{\iota}}$ for player $\overline{\iota}$ such that $\explayseq$ is consistent with $\sigma_\iota$ and $\sigma_{\overline{\iota}}$. We denote by $\mathterm{\langle \sigma_0, \cdot\rangle}$ (resp. $\mathterm{\langle \cdot, \sigma_1\rangle}$) the sets of all extended plays consistent with $\sigma_0$ for player $0$ (resp. $\sigma_1$ for player $1$).

A strategy $\sigma$ for player $0$ (resp. player $1$) from initial configuration $(v,B_0,B_1)$ is a \term{winning strategy} if any $\explayseq \in \langle \sigma,\cdot\rangle$ (resp. $\explayseq \in \langle \cdot, \sigma\rangle$)
satisfies $\ppath \in \Win$ (resp. $\ppath \notin \Win$). 

\begin{remark}
    While we consider bidding games (that are concurrent), if a given strategy is winning for player $\iota$, since there exists no winning counter strategy, even if she reveals her action first she would still win. Thus, when considering winning strategies, we often consider a turn-based version of the game in which player $\iota$ reveals her bid first. For a game $\cl{G}$, we denote the respective trun-based game by $\cl{G}_\iota$. For example, see \cite[Theorem 3.2]{determinacy}.
\end{remark}

A central question of interest in bidding games is the \emph{threshold budget}, that is, what is the necessary and sufficient budget required for a player to win from a given vertex. More explicitly,

\begin{problem}[Threshold Budget]\label{prob:threshold-budget}
    Let $\cl{G}$ be a bidding game (with a rule set that results in a determined game), $v\in V$ and $B_1\ge 0$. The \term{threshold budget} asks what the minimal budget $B_0$ such that player $0$ has a winning strategy from $(v,B'_0,B_1)$ for $B'_0 \ge B_0$ is. We use $\mathterm{th_v(B_1)}$ to refer to the required threshold budget $B_0$.
\end{problem}
When considering continuous bidding games, any pair of configurations $(v,B_0,B_1)$ and $(v,\alpha B_0,\alpha B_1)$ are equivalent in the sense that scaling up every bid by $\alpha$ results in identical outcomes. As such, thresholds scale linearly with budgets (meaning $\alpha th_v(B_1)=th_v(B_1)$). Hence, it is sensible to consider thresholds as share of total budgets required to win (see \cite{IDPBG}).

\begin{problem}[Threshold Ratio]\label{prob:threshold-ratio}
    Let $\cl{G}$ be a continuous bidding game (with a rule set that results in a determined game), $v\in V$ and $B_1\ge 0$. The \term{threshold ratio} at $v$ is the minimum fraction of the total budget that the reachability player requires to win from $v$.
    We denote this threshold ratio  by $\mathterm{Th(v)}$.
\end{problem}

In \cite{Richman1999}, threshold ratios for both continuous Richman and poorman reachability games were solved. These results were extended in \cite{idbg, IDPBG} to Parity and Mean-Payoff objectives for continuous Richman games and poorman games respectively. These games are all shown to be in $\NP\cap \coNP$, though adding some restrictions on these games have been shown to have lower complexity. For instance, continuous Richman parity games in which the out-going degree of each vertex is at most 2 have been shown to be in $\PP$.

\section{Bidding Games with rewards}

\begin{definition}[Bidding game with rewards]

A \term{bidding game with rewards}, abbreviated \term{\txt{BGr}}, is a tuple $\cl{G}=(V,E,r_0,r_1,\Win,\bidmech)$ where $(V,E,\Win,\bidmech)$ are as in a bidding game and  $r_\iota:V\to \bb{N}$ are reward functions for each player $\iota\in\{0,1\}$.
\end{definition}
Configurations, bidding rounds, and strategies in bidding games with rewards are defined much the same as those in bidding games.
The difference is in the update mechanism, where after bidding takes place, if the token is moved to vertex $v$ and budgets are adjusted accordingly, the budget of each player $\iota$ is additionally increased by $r_\iota(v)$.

While this presentation of a bidding game is finite, it is often useful to imagine bidding games as games on an infinite graph. In this game vertices are configurations (i.e. include the position on the graph and  the budgets of both players), and edges connect configurations according to the behavior of the bidding process. 

Introducing rewards in this way is intuitive and offers an important benefit for some bidding mechanisms. 
While in Richman bidding, in the absence of rewards, the sum of budgets never changes, using other mechanisms (such as poorman or all-pay), the sum of the budgets decreases. 
This decay makes some objectives less meaningful. By introducing rewards, the sum of the budgets may increase.
For example, in a discrete poorman bidding game, without rewards, with a B\"uchi objective, arriving at a target vertex repeatedly may erode the budget of one player making the game entirely dependent on tie-breaking mechanism. 
Once rewards are introduced, even if budgets are depleted, they may increase again. This eliminates the above phenomenon.


As shown in~\cite{AlfaroH00}, determinacy is not guaranteed in concurrent games.
Determinacy of bidding games (which form a special class of concurrent games) has been studied in~\cite{determinacy}.
It is shown there that bidding games (without rewards) with reachability, B\"uchi, and parity objectives are determined, provided that the tie-breaking mechanism is unaware of when ties occur~\cite[Theorem~4.5]{determinacy}.
If this assumption is violated—for example, if ties are resolved by a transducer that receives as input information about tie occurrences—then the game may fail to be determined; see~\cite[Example~1.2]{determinacy}. The proof technique provided in \cite{determinacy} for \txt{DPBG} can be used to prove M\"uller objectives \txt{DPBGr} are determined (see \cref{dpbgr-determined} for a sketch of the proof).

Another desired property of bidding games with rewards is that it generalizes bidding games nicely. Meaning, a bidding game is equivalent to a bidding game with rewards where $r_\iota(v)=0$ for every $v\in V$.
As in other bidding games, a main research problem is the \term{threshold problem}, namely, given a \txt{BGr} $\mathcal{G}$, initial vertex $v$, an initial budget for player $1$, $B_1$, and a proposed threshold $\nu$, is $\nu\ge th_v(B_1)$.

Henceforth, we focus on \term{discrete poorman bidding games with rewards}, abbreviated \term{\txt{DPBGr}}, with a reachability objective.

We will refer to the player with reachability objective as \Roxy and her adversary (whose objective is safety) as \Sam. Further, to avoid confusion, we will henceforth index where appropriate using $\iota\in \{r,s\}$.

For convenience, we will assume that \Roxy's target is a single vertex $t$, $t$ has a single outgoing edge to itself and that $r_\iota(t)=-1$. None of these factors impact thresholds or strategies beyond labeling as in reachability games once a target vertex is reached the infinite suffix of the game no longer matters.

We now turn to solving the threshold problem for reachability \txt{DPBGr}.
We achieve this gradually. We first introduce \term{disadvantaged loops} (see \cref{sec:disadvantaged-loops}) and examine their impact on the outcome of a game. Next, we consider games starting from a configuration where both players start with very large budgets. Finally, we proceed to show that thresholds may be computed in \EXP (see \cref{sec:approx}).

\section{Disadvantaged loops}\label{sec:disadvantaged-loops}
A key determinant of ones ability to solve the threshold problem relies on the ability to determine whether a player has a winning strategy after finitely many steps. Players using bad strategies can extend play indefinitely without benefiting them. Different assumptions can be made on player strategies to help mitigate such issues. 
In this section we introduce the notion of \term{disadvantaged loops}. Any finite play that forms such a loop benefits only one player. We will further show that such plays must be the result of suboptimal play for one of the players. This will be crucial to show a bound on duration of games in which \Roxy has a winning strategy.

\begin{definition}[Disadvantaged loop]
\label{disadvantaged-loops}
     Let $\cl{G}$ be a \txt{DPBGr} and 
     $\explayseq$ be some valid extended play. Let $i<j\in \bb{N}$ such that $\ppath[i]=\ppath[j]$. We say the loop $\ppath[i] \ppath[i+1]\dots \ppath[j]$  is \term{\Sam disadvantaged} if either $\pBo[i] = \pBo[j]$ and $\pBz[i] > \pBz[j]$, or $\pBo[i] < \pBo[j]$ and $\pBz[i] \ge \pBz[j]$. We similarly define \term{\Roxy disadvantaged loops} except we consider also loops in which $\pBz[i] = \pBz[j]$ and $\pBo[i] = \pBo[j]$ as \Roxy disadvantaged.
\end{definition}

The case where none of the players' budget changes is considered disadvantaged for \Roxy, since her objective is reachability, and staying put serves the safety objective.

\begin{definition}[Non-disadvantaged loop]
\label{nondisadvantaged-loops}
     Let $\cl{G}$ be a \txt{DPBGr} and 
     $\explayseq$ be some valid extended play. Let $i<j\in \bb{N}$ such that $\ppath[i]=\ppath[j]$. We say $\ppath[i] \ppath[i+1]\dots \ppath[j]$ is an \term{upward loop} (resp. \term{downward loop}) if $\pBz[i] > \pBz[j]$ and $\pBo[i] > \pBo[j]$ (resp. $\pBz[i] < \pBz[j]$ and $\pBo[i] < \pBo[j]$).
\end{definition}

\begin{example}
Consider the \txt{DPBGr} presented in \cref{fig:diadvantagedfig}. The loop formed by the play
\[
(v_0,10,3)\xrightarrow[v_1,0]{v_1,0}(v_1,13,3)\xrightarrow[v_0,2]{v_2,1}(v_0,13,3)
\]

is a \Sam disadvantaged loop since \Roxy's budget has increased and \Sam's hasn't. 
On the other hand, the loop 
\[
(v_0,10,3)\xrightarrow[v_1,0]{v_1,0}(v_1,13,3)\xrightarrow[v_2,3]{v_0,4}(v_0,9,5)
\]
 is \Roxy disadvantaged since \Roxy's budget has decreased and \Sam's increased. 
Finally, the loop 
\[
(v_0,10,3)\xrightarrow[v_1,0]{v_1,0}(v_1,13,3)\xrightarrow[v_2,0]{v_2,0}(v_0,14,4)\xrightarrow[v_0,2]{t,1}(v_0,14,4)
\]
is an upward loop. 
    \begin{figure}
    \centering
    \scalebox{0.75}{
    \begin{tikzpicture}[->,>=stealth',shorten >=1pt,auto,node distance=\nodedist,semithick,initial text=, initial left, font=\large]
        
        \node[state] (v0) [] {$v_0$};
        \node[] (vec0) at ([xshift = -0.75 cm, yshift = -0.75 cm] v0) {$\chargevec{0}{2}$};
        \node[state] (v1) [right of=v0] {$v_1$};
        \node[] (vec1) at ([xshift = -0.75 cm, yshift = -0.75 cm] v1) {$\chargevec{3}{0}$};
        \node[state] (v2) [right of=v1] {$v_2$};
        \node[] (vec2) at ([xshift = -0.75 cm, yshift = -0.75 cm] v2) {$\chargevec{1}{1}$};
        \node[state,accepting]  (t) [right of=v2]  {$t$};
        
        \path (v0) edge (v1);
        \path (v1) edge  (v2);
        \path (v2) edge (t);
        \path (v2) edge [bend right = 45] (v0);
        \path (v1) edge [bend right] (v0);
        
        \end{tikzpicture}
        }
    \caption{}
    \label{fig:diadvantagedfig}
         
    \end{figure}
    
\end{example}

\begin{claim}[suboptimal strategy]
\label{claim:disadvantaged-strategy}
    Let $\cl{G}$ be a reachability \txt{DPBGr} and $(v,B_r,B_s)$ be some initial configuration. Suppose \Roxy and \Sam play according to strategy $\sigma_r$ and $\sigma_s$ respectively. Let $\explayseq= \langle\sigma_r, \sigma_s\rangle$. If $\explayseq$ contains a \Roxy (resp. \Sam) disadvantaged loop, and there exists a strategy $\sigma'_r$ (resp. $\sigma'_s)$ such that $\explayseq' = \langle\sigma'_r, \sigma_s\rangle$ contains no disadvantaged loops, then $\sigma_r$ (resp. $\sigma_s$) is a suboptimal strategy. 
\end{claim}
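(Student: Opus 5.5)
The plan is to interpret ``suboptimal'' as: the player can switch to a strategy that wins from at least every configuration from which the original one wins, and the switch removes the disadvantaged loop from the play. I argue for \Roxy; the \Sam case is symmetric, with the inequalities reversed and the objective swapped. The main tool is monotonicity of the game in budgets: in a poorman game with rewards, if \Roxy wins from $(v,B_r,B_s)$, then she also wins from $(v,B'_r,B'_s)$ whenever $B'_r\ge B_r$ and $B'_s\le B_s$. I would prove this first by a simulation. \Roxy plays as if she had the smaller budget, which is legal since her bids remain affordable. Every bid \Sam can make from the smaller budget was also available from the larger one. Rewards are added identically in both plays, so the componentwise ordering of budgets is preserved step by step.

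Next I would locate a loop $\ppath[i]\dots\ppath[j]$ in $\explayseq$ that is \Roxy disadvantaged. Then $\ppath[i]=\ppath[j]$, $\pBz[j]\le\pBz[i]$ and $\pBo[j]\ge\pBo[i]$. By monotonicity, if $\sigma_r$ (continued from history $\explayseq[..j]$) wins from configuration $j$, then \Roxy also wins from configuration $i$ by ``shortcutting''. That is, she plays at time $i$ what $\sigma_r$ would play after the history $\explayseq[..j]$, transported along the simulation. The loop segment itself therefore contributes nothing toward reachability. Define $\sigma'_r$ to agree with $\sigma_r$ up to $i$ and then follow the shortcut. Against $\sigma_s$ this is at least as good, and it avoids this loop.

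The hypothesis then enters. We are given a $\sigma'_r$ with $\langle\sigma'_r,\sigma_s\rangle$ free of disadvantaged loops, and I want to conclude that $\sigma_r$ is weakly dominated by a loop-free alternative, i.e.\ suboptimal. I would combine the two: iterating the shortcut removes each \Roxy disadvantaged loop in turn. The iteration is well founded because each removal strictly shortens the play prefix before the loop. This yields a strategy that does no worse than $\sigma_r$ and that, played against $\sigma_s$, behaves like the given $\sigma'_r$. It therefore reaches $t$ no later than, and whenever, $\sigma_r$ does. The equal-budget case ($\pBz[i]=\pBz[j]$, $\pBo[i]=\pBo[j]$) is the reason this case counts against \Roxy: that loop only delays her reaching $t$, so it is strictly unhelpful to her.

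The main obstacle I expect is making ``suboptimal'' precise and handling strategies that depend on history. The shortcut changes the history, so $\sigma_s$'s responses after time $i$ may differ and could produce new loops. To handle this, I would attempt to state the result against all opponent strategies via the monotonicity lemma, rather than only against $\sigma_s$, and take $\sigma'_r$ from the hypothesis only as evidence that loop-free play is attainable. If that fails, I would weaken the conclusion to: $\sigma_r$ is not the unique winning choice, i.e.\ $\sigma'_r$ wins whenever $\sigma_r$ does.
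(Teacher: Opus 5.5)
Your core step is the paper's argument made rigorous. The simulation shows that Roxy's winning position is monotone: it can only improve when her budget rises or Sam's falls. You apply it to the endpoint of a Roxy-disadvantaged loop, which is componentwise worse for her than the start, and you read the equal-budget loop as pure delay. The paper says the same thing: at configuration $j$ Sam has at least as many options and Roxy at most as many. It stops there, reading ``suboptimal'' as ``the loop can only weaken the player's position, and it was avoidable''. It uses the existence of $\sigma'_r$ only to rule out loops the opponent can force. Up to this point your proof is fine, provided the shortcut's dominance is stated in the worst-case sense. For every Sam strategy $\tau$ played against the shortcut from configuration $i$, feeding $\tau$ the translated histories gives a legal Sam strategy $\tau'$ after $\explayseq[..j]$. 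Against $\tau'$, the continuation of $\sigma_r$ produces the same vertex sequence. Hence if $\sigma_r$ is winning, so is the shortcut.

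The steps you add beyond this do not go through. First, the shortcut need not be ``at least as good against $\sigma_s$'', nor need it avoid the loop. $\sigma_s$ is history-dependent, and after time $i$ it is queried on histories that never occur in $\langle\sigma_r,\sigma_s\rangle$, so its responses there are unconstrained. The new play can lose, and it can contain fresh Roxy-disadvantaged loops. If Sam can force such loops, no amount of shortcutting removes them, which is exactly why the hypothesis appears in the statement. Second, the iteration has no well-founded measure. Each modification produces a new play, possibly infinite and with infinitely many loops, and ``strictly shortens the prefix before the loop'' defines no decreasing quantity. Third, nothing links the iterated strategy to the hypothesized $\sigma'_r$, so the hypothesis is never actually used. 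You also use the name $\sigma'_r$ both for your shortcut and for the given loop-free strategy. Your fallback ``$\sigma'_r$ wins whenever $\sigma_r$ does'' is true for the shortcut, since it is just the worst-case dominance above. It is false for an arbitrary loop-free strategy. Such a strategy could steer the token to a vertex with no path to $t$ where both players collect rewards: every loop there raises both budgets, so none is disadvantaged, yet Roxy loses while $\sigma_r$ may win. Drop the iteration and conclude with the single worst-case shortcut. That gives a correct, sharper version of the paper's proof.
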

\begin{proof}
    This is a straightforward combinatorial result. 
    Suppose $i<j$ are the indices of a \Roxy disadvantaged loop. If $\pBz[i] < \pBz[j]$, \Sam has no fewer options available to him at configuration $(\ppath[j],\pBz[j], \pBo[j])$ than at $(\ppath[i],\pBz[i], \pBo[i])$. On the other hand \Roxy does have fewer options. 
    Similarly, if $\pBz[i] = \pBz[j]$ and $\pBo[i] > \pBo[j]$, \Roxy has the same number of options while \Sam has more options. 
    Finally, if $\pBz[i] = \pBz[j]$ and $\pBo[i] = \pBo[j]$, the number of options remain the same for both players but repeating this loop benefits \Sam as if the loop doesn't contain $t$, he wins and if it doesn't, he already lost before the loop was formed. 
    Similar arguments hold for \Sam disadvantaged loops. 

    Note that we require the existence of alternative strategies as there exist games in which one player can force the other to form a disadvantaged loop. 
\end{proof}

The above claim is useful as we may assume that if an alternative strategy to one that forms a disadvantaged loop exists, it is taken. Further, if no alternative exists and \Roxy can force \Sam disadvantaged loops, if a path to $t$ exists \Roxy clearly has a winning strategy. If no path to $t$ exists, she has already lost. Similarly, if \Sam can force \Roxy disadvantaged loops (these cannot visit $t$ since we assume $\ro(t)=-1$ and $t$ only has a path to itself) clearly \Roxy cannot win.

As such, when analyzing a play $\explayseq$ we may make a determination (assuming players play reasonably) on the outcome when considering a loop where $i<j$ and $\ppath[i]=\ppath[j]$ (for $\ppath[i]$ such that $\pathbetween{\ppath[j]}{t}$ otherwise clearly \Roxy loses).  

\begin{center}
    \begin{tabular}{c | c | c | c }
   & $\pBo[i] > \pBo[j]$ & $\pBo[i] = \pBo[j]$ & $\pBo[i] < \pBo[j]$ \\ \hline
   $\pBz[i] > \pBz[j]$ & unknown & \Sam wins & \Sam wins \\
   $\pBz[i] = \pBz[j]$ & \Roxy wins & \Sam wins & \Sam wins \\
   $\pBz[i] < \pBz[j]$ & \Roxy wins & \Roxy wins & unknown 
\end{tabular}
\end{center}

\begin{observation}
\label{obs:longplay=largebudgets}
  As we assume that players avoid disadvantaged loops when possible, we may deduce that a sufficiently long play must result in player budgets growing very large. More explicitly, suppose, a \txt{DPBGr} takes place on a graph with $12$ vertices (excluding $t$) from a starting configuration $(v,10,10)$. If play doesn't visit $t$ and doesn't include disadvantaged loops (for either player), every pair of configurations visited $(v_1,x_0,x_1),~(v_2,y_0,y_1)$ must differ in at least one index. Thus a pigeonhole argument shows that such a play lasting more than $1200$ bidding rounds must visit some configuration in which the budget of one of the players is at least $11$. Extending this argument to much longer plays can guarantee at least one players budget grows very large.  
\end{observation}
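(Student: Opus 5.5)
The plan is a pigeonhole count on the configurations visited. Fix a play $\explayseq$ on a graph with $n=12$ non-target vertices, starting at $(v,10,10)$, that never visits $t$ and contains no disadvantaged loop for either player. The first step is to show that no configuration repeats. Suppose $i<j$ with $(\ppath[i],\pBz[i],\pBo[i])=(\ppath[j],\pBz[j],\pBo[j])$. Then $\ppath[i]=\ppath[j]$ and both budgets are unchanged. By \cref{disadvantaged-loops}, a loop with $\pBz[i]=\pBz[j]$ and $\pBo[i]=\pBo[j]$ is explicitly \Roxy disadvantaged, which contradicts the assumption. So all configurations $\explayseq[..k]$'s last entries are pairwise distinct, which is the ``differ in at least one index'' statement.

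The second step is the count. Suppose every configuration visited has both budgets at most $10$, so $\pBz[k],\pBo[k]\in\{0,\dots,10\}$. Then the number of distinct configurations available is $12\cdot 11\cdot 11=1452$. That is slightly more than $1200$, so the literal constant needs care. The intended bound is $12\cdot 10\cdot 10$: at most $10$ budget values per player, presumably because a budget of $0$ is excluded or handled separately.

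I would treat the constant loosely and prove the general form. If the play lasts more than $|V\setminus\{t\}|\cdot(M+1)^2$ rounds, some configuration has a budget exceeding $M$. For the example I would either adjust the constant to $1452$ or argue that budgets in $\{0,\dots,10\}$ can be refined using the fact that configurations never repeat. Distinctness together with the bounded box gives the contradiction.

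The third step, the extension to ``very large'' budgets, is the same argument with $M$ arbitrary. A play of length exceeding $n(M+1)^2$ must visit a configuration with $\max(\pBz,\pBo)>M$.

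The main obstacle is conceptual rather than computational. The observation only forbids exact repeats, and that is all the pigeonhole needs. Forbidding the other disadvantaged loops is not needed here, so I would state that only the equal-budget case of \cref{disadvantaged-loops} is used. The one remaining point to handle is reconciling the stated numeric constant $1200$ with the count $12\cdot 11^2$, which I would resolve by stating the bound parametrically.
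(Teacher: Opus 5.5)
\textbf{Comparison with the paper.} Your route matches the paper's inline justification: no configuration repeats, then pigeonhole. You are also right that this count does not give $1200$. With both budgets in $\{0,\dots,10\}$ there are $12\cdot 11^2=1452$ configurations, so distinctness alone forces a budget $\ge 11$ only after more than $1451$ rounds. The paper's $12\cdot 10\cdot 10$ silently drops a budget value. Neither of your proposed fixes recovers the literal claim. Budget $0$ cannot be ``excluded'', since budgets can certainly reach $0$. Distinctness by itself cannot be ``refined'' below $1452$ either. So as written your proposal proves a weaker statement, and it goes wrong at your remark that the other disadvantaged-loop cases are not needed. Those cases are exactly what recovers $1200$.

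\textbf{The missing idea.} By the definition and the paper's table, the only loops that are disadvantaged for neither player are those where both budgets strictly increase or both strictly decrease. Consider any $i<j$ with the token at the same vertex. The three cases with Roxy's budget unchanged (Sam's budget up, equal, or down) are all disadvantaged. So Roxy's budgets at $i$ and $j$ must differ.

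Now suppose every configuration up to round $k$ has both budgets in $\{0,\dots,10\}$. Then each of the $12$ non-target vertices is occupied at most $11$ times, so $k+1\le 132$. Any play lasting more than $131$ rounds, and a fortiori more than $1200$, therefore visits a configuration with a budget of at least $11$.

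The general version becomes: more than $n(M+1)-1$ rounds force some budget above $M$. This bound is linear in $M$, not quadratic. Using the full hypothesis thus repairs the paper's constant and strengthens the bound. Your version only supports the $n(M+1)^2$ form, with $1452$ in place of $1200$.
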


In the next section we will show a solution for \txt{DPBGr}. This will be achieved by approximating continuous poorman bidding games (henceforth, \txt{CPBG}). As such, we would like to extend the notion of disadvantaged loops further in settings that use continuous bidding. 

\begin{notation}[ratio-shift of $p$]
    Let $\cl{G}$ be a reachability \txt{CPBG} and $\explayseq$ be some valid play on it. 
    Let $i<j$ such that $\ppath[i] = \ppath[j]$ and $\pBo[i], \pBo[j]$ are both positive. We denote by $\mathterm{\Delta_{\explayseq[i..j]}}=\frac{\pBz[j]}{\pBo[j]}-\frac{\pBz[i]}{\pBo[i]}$ the \term{ratio-shift} of $\explayseq$.
\end{notation}

\begin{claim}
\label{CPBG-disadvantaged-loops}
    Let $\cl{G}$ be a reachability \txt{CPBG} and $\explayseq$ be some valid play on it. 
    Let $i<j$ such that $\ppath[i] = \ppath[j]$ and $\pBo[i], \pBo[j]$ are both positive. If $\Delta_{\explayseq[i..j]}\le 0$ ($\Delta_{\explayseq[i..j]} > 0$) then $\explayseq[i..j]$ is a \Roxy (\Sam) disadvantaged loop. 
\end{claim}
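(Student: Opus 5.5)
The plan is to use the scale invariance of continuous poorman bidding noted after \cref{prob:threshold-budget}: two configurations $(v,B_r,B_s)$ and $(v,\alpha B_r,\alpha B_s)$ with $\alpha>0$ are strategically equivalent, because scaling every bid by $\alpha$ maps plays to plays with the same outcome. In a \txt{CPBG}, the strategic content of a configuration at vertex $v$ with $B_s>0$ is therefore captured by the ratio $B_r/B_s$. Given the loop $\explayseq[i..j]$, I would set $\alpha=\pBo[j]/\pBo[i]>0$. I would then compare the configuration $(\ppath[j],\pBz[j],\pBo[j])$ with the rescaled configuration $(\ppath[i],\alpha\pBz[i],\alpha\pBo[i])$, which shares \Sam's budget $\pBo[j]$ and is equivalent to the configuration at index $i$.

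Suppose $\Delta_{\explayseq[i..j]}\le 0$. Then $\pBz[j]/\pBo[j]\le \pBz[i]/\pBo[i]$, so $\pBz[j]\le \alpha\pBz[i]$. Up to the equivalence, the loop leaves \Sam's budget unchanged and leaves \Roxy's budget unchanged or smaller. This is exactly the pattern ``$\pBo[i]=\pBo[j]$ and $\pBz[i]\ge \pBz[j]$'' that \cref{disadvantaged-loops} declares \Roxy disadvantaged, with the equality case included by the convention that an unchanged configuration is \Roxy disadvantaged. The justification mirrors \cref{claim:disadvantaged-strategy}. At the later configuration \Roxy's set of available bids, measured in units of \Sam's budget, is contained in her earlier set, while \Sam's set is the same. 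Any \Roxy strategy from index $j$ can therefore be simulated from index $i$, so the loop gave \Roxy nothing, and repeating it either gives \Sam a safe infinite play or changes nothing.

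Now suppose $\Delta_{\explayseq[i..j]}>0$. Then $\pBz[j]>\alpha\pBz[i]$ with \Sam's normalized budget unchanged, which is the pattern ``$\pBo[i]=\pBo[j]$ and $\pBz[i]<\pBz[j]$''. This is the \Sam disadvantaged case, by the symmetric inclusion-of-options argument.

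The main obstacle is that \cref{disadvantaged-loops} is stated in terms of absolute budgets, and upward or downward loops would be ``unknown'' under it. The claim therefore only makes sense if disadvantage in the continuous setting is read modulo scaling. I would state this explicitly: in a \txt{CPBG} a loop is called disadvantaged after normalizing both endpoints to the same \Sam budget. Before applying the option-inclusion reasoning, I would check carefully that the scaling map is a bijection on strategies that preserves winning. This is routine because the poorman update is linear and homogeneous, and reachability ignores budgets. Once that is in place, the two cases above are immediate.
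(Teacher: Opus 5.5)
Your proposal is correct and is essentially the paper's argument. Both use the scale invariance of continuous poorman bidding to bring the two endpoint configurations to a common \Sam budget (the paper rescales both to \Sam-budget $1$, while you rescale index $i$ by $\alpha=\pBo[j]/\pBo[i]$), after which the sign of $\Delta_{\explayseq[i..j]}$ directly selects the \Roxy-disadvantaged or \Sam-disadvantaged case of the intended definition; your explicit remarks that disadvantage must be read modulo scaling, and that rescaling should be checked to preserve winning, only make precise what the paper's one-line proof leaves implicit.
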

\begin{proof}
    As CPBG allow for arbitrarily small bids, a pair of configurations $(v,B_r,B_s)$ and $(v,B'_r,B'_s)$ such that $\frac{B_r}{B_s} = \frac{B'_r}{B'_s}$ are equivalent. This is because bids may be scaled appropriately. As such, we may consider $(\ppath[i], \pBz[i], \pBo[i])$ and $(\ppath[j], \pBz[j], \pBo[j])$ as $(\ppath[j], \frac{\pBz[j]}{\pBo[j]},1)$ and $(\ppath[j], \frac{\pBz[j]}{\pBo[j]},1)$ respectively. Using the usual definition for disadvantaged loops we achieve the desired result.
\end{proof}
\begin{remark}
\label{only-disadvantaged-loops-CPBG}
    The above claim implies that every loop in a \txt{CPBG} is disadvantaged for either \Roxy or \Sam (except when both players have an initial budget of $0$). Further, we may deduce that in a determined \txt{CPBG}, if player $\iota$ has a winning strategy she has one for which every play consistent with it is $\Bar{\iota}$ disadvantaged.
\end{remark}

\section{Approximation of \txt{DPBGr} via \txt{CPBG}}
\label{sec:approx}

In section 3.5 of \cite{DiscreteRichman2010}, Develin and Payne explore the intutition that discrete Richman bidding games with a large total budget behave a lot like continuous Richman bidding games. We show that for reachability objectives the same holds for \txt{DPBGr}. 
Recall that \cref{obs:longplay=largebudgets} states that when players play reasonably well long plays result in large player budgets. 
Combining these results we observe that after sufficiently long play, optimal bids start becoming more predictable. 

We use this to show that when both players play reasonably well (by avoiding disadvantaged loops and in the latter part of the game approximations of disadvantaged loops, see \cref{same-neighbor-threshold=smallbids}), there exists a hard upper bound on the budgets both players can hold. This surprising result can be used to show that if \Roxy has a winning strategy she can guarantee a win within a number of bidding rounds exponential in the number of vertices in the graph.

\begin{notation}
    Let $\cl{G}=(V,E,r_0,r_1,\Win,\bidmech)$ be a reachability \txt{DPBGr}. We denote by $\mathterm{\cl{G}_C} = (V,E,\Win,\bidmech')$ the continuous poorman bidding game played on the same graph as $\cl{G}$ with the same objective. (Here $\bidmech'$ is a bidding mechanism in which bids are continuous compared to $\bidmech$ which is discrete.)  
\end{notation}

\begin{notation}
    Let $\cl{G}$ be a \txt{CPBG}. We denote by $\mathterm{v^-}$ and $\mathterm{v^+}$ neighbors of $v$ with the lowest threshold ratio and highest threshold ratio respectively.
\end{notation}
\begin{remark}
    $v^+$ and $v^-$ need not be unique, there may exist multiple neighbors of $v$ with minimal or maximal threshold ratios. We may assume that one such neighbor is chosen arbitrarily. 
    As was shown in \cite{Richman1999}, when optimal strategies are used in a \txt{CPBG} \Roxy always chooses to bid to move to a neighbor with minimal threshold ratio and \Sam to one with maximal threshold ratio. 
\end{remark}

In order to show that we may compute the threshold we consider what happens when play persists for many rounds of bidding. Here ``many" is no more than exponential in the number of vertices in the graph of the game, and the term ``grow large" used below will be made more precise in  \cref{sufficiently-large-budgets-reach}.
\begin{enumerate}
    \item Budgets grow large and \Roxy's share of total budget is greater than $Th_{\cl{G}_C}(v)+\varepsilon$. In this case \Roxy wins. See \cref{sufficiently-large-budgets-reach}
    \item Budgets grow large and \Roxy's share of total budget is less than $Th_{\cl{G}_C}(v)-\varepsilon$. In this case \Roxy loses. See \cref{sufficiently-large-budgets-safety}
    \item Budgets grow large and \Roxy's share of total budget is between $Th_{\cl{G}_C}(v)-\varepsilon$ and $Th_{\cl{G}_C}(v)+\varepsilon$. In this case we may compute the winner. See \cref{clm:roxy-win-duration-bound}  
    \item Budgets do not grow large. In this case disadvantaged loops have formed and can be ignored. See \cref{obs:longplay=largebudgets}.
\end{enumerate}

Before examining each of these cases we make an observation regarding the behavior in a continuous bidding setting.

\begin{observation}
[\txt{CPBG} winners stays above threshold]
\label{clm:above-threshold-CPBG}
Let $\cl{G}$ be a \txt{CPBG}. Suppose \Roxy (resp. \Sam) has a winning strategy $\sigma$ from configuration $(v,B_r,B_s)$ then any play $\Pi$ consistent with $\sigma$ has $\frac{\pBz[i]}{\pBz[i]+\pBo[i]}>Th(\ppath[i])$ (resp. $\frac{\pBz[i]}{\pBz[i]+\pBo[i]}<Th(\ppath[i])$) for all $i\in \bb{N}$.
\end{observation}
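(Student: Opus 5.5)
The plan is to argue by contradiction along a single play, using the two classical facts about reachability \txt{CPBG} from \cite{Richman1999}. First, the threshold ratios satisfy the fixed-point characterisation. Second, a configuration whose ratio lies strictly on one side of the threshold is winning for the corresponding player. I treat \Roxy; the case of \Sam is symmetric, with the inequalities reversed and the roles of $v^-$ and $v^+$ swapped.

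Fix a winning strategy $\sigma$ for \Roxy from $(v,B_r,B_s)$ and a play $\Pi\in\langle\sigma,\cdot\rangle$. Suppose that for some $i$ we have $\frac{\pBz[i]}{\pBz[i]+\pBo[i]}\le Th(\ppath[i])$. The residual strategy $\sigma$ after the history $\Pi[..i]$ is still winning from $(\ppath[i],\pBz[i],\pBo[i])$. This holds because every continuation of that history consistent with $\sigma$ extends a play in $\langle\sigma,\cdot\rangle$, and so it reaches $t$. It therefore suffices to show that \Roxy cannot win from a configuration whose ratio is at most the threshold.

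If the ratio is strictly below $Th(\ppath[i])$, this is exactly the defining property of the threshold ratio: \Sam has a winning strategy, and determinacy (which the paper assumes) gives the contradiction. So the only case needing work is the boundary case, where the ratio equals $Th(\ppath[i])$ exactly.

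For the boundary case I will exhibit a counter-strategy for \Sam that never lets the ratio rise above the threshold while steering away from $t$. I use the poorman fixed-point structure from \cite{Richman1999}. At a vertex $u$ with ratio exactly $Th(u)$, \Sam bids the amount $b$ with two properties. If \Sam wins and moves to $u^+$, his new ratio equals $Th(u^+)$. If \Roxy outbids him, paying more than $b$ and moving to some neighbour $w$, her new ratio is at most $Th(u^-)\le Th(w)$. The defining equation of $Th(u)$ in terms of $Th(u^-)$ and $Th(u^+)$ is precisely what guarantees that such a $b$ exists.

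Under this counter-strategy, any play stays at ratio at most the threshold. Either the ratio drops strictly below the threshold at some point, after which \Sam wins by the strict case, or it stays exactly on the threshold forever. In the latter case I also need \Roxy never to reach $t$. This holds because $Th(t)=0$, and a ratio equal to the threshold at a neighbour of $t$ forces the relevant bids to be such that \Roxy cannot win the round into $t$; ties are assumed absent in the continuous setting. I would also invoke \cref{only-disadvantaged-loops-CPBG}: loops along such a play are \Roxy disadvantaged, so the play cannot make progress toward $t$.

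The main obstacle is exactly this boundary case. It depends on whether $Th(v)$ is defined so that the threshold itself is losing or winning for \Roxy, which \cite{Richman1999} leaves to convention with no ties. I would handle it by first checking that convention and, if necessary, appealing to the strict-inequality convention so that only the strict case is needed.
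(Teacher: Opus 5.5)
Your strict-case argument is the paper's argument seen from the other side. The paper lets \Sam follow an arbitrary $\tau$ up to index $i$ and then switch to his winning strategy from the below-threshold configuration, which yields a play in $\langle\sigma,\cdot\rangle$ that \Sam wins. You instead restrict $\sigma$ to the residual game. In your version determinacy is not even needed: strictly below $Th(\ppath[i])$, \Roxy has no winning strategy by the definition of the threshold ratio, and that already contradicts the residual $\sigma$ being winning. The paper handles equality with a single assumption, namely that it ``cannot occur since bids are continuous''.

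The genuine gap is your boundary-case argument, and it cannot be repaired, because the strict inequality is false at the boundary in general.

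\textbf{Where your counter-strategy step fails.} Take a neighbour $u$ of $t$ with $u^-=t$, normalised budgets $(x,1-x)$ and $x=Th(u)$. The bid $b$ given by the fixed-point equation satisfies $(x-b)/(1-b)=Th(t)=0$, i.e.\ $b=x$, which is all of \Roxy's budget. She can only match it, so whether she enters $t$ is decided purely by tie-breaking. That is exactly the situation where ``ties are absent'' is not available as an assumption.

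\textbf{The appeal to \cref{only-disadvantaged-loops-CPBG} does not close the gap.} Being \Roxy disadvantaged is a heuristic statement about suboptimality, not about reachability. Moreover, entering $t$ does not require closing any loop.

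\textbf{Counterexamples to the boundary version.} Both of the following hold independently of tie-breaking.
\begin{enumerate}
\item If the only successor of $v$ is $t$, then $Th(v)=0$ and \Roxy wins from $(v,0,\Bs)$. The play already violates the inequality at $i=0$.
\item Suppose $v$ has successors $t$ and a vertex with no path to $t$, and $\Br>\Bs$. A winning strategy that bids all of $\Br$ into $t$ arrives with ratio $0=Th(t)$.
\end{enumerate}

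So the equality case has to be excluded by convention rather than proved, as the paper does and as your fallback suggests. State that convention at the outset and drop the counter-strategy.
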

\begin{proof}
    We show the proof for \Roxy, the proof for \Sam is largely the same. 
    Suppose \Sam has a strategy $\tau$ such that for $\langle \sigma, \tau\rangle$, there exists some $i\in \bb{N}$ such that $\frac{\pBz[i]}{\pBz[i]+\pBo[i]}<Th(\ppath[i])$ (note that we assume the equality cannot occur since bids are continuous). 
    Then \Sam may use a strategy $\tau'$ that plays as $\tau$ until configuration $(\ppath[i],\pBz[i],\pBo[i])$ and pivots to a winning strategy from it (which exists due to the budget ratio being below the threshold for $\ppath[i]$). Thus $\langle \sigma, \tau'\rangle $ is a winning play for \Sam which contradicts the assumption that $\sigma$ is winning.
\end{proof}

We may now proceed to show that in \txt{DPBGr} that start with large initial budgets, optimal bids are approximations of their corresponding bids in \txt{CPBG}. 

We begin by examining the cases in which \Roxy's share of the total budget is far from the threshold. 

\begin{notation}
    For convenience, we denote by $\mathterm{R_\iota} = \sum_{v\in V\setminus \{t\}}r_\iota(v)$, i.e. the total sum of rewards for player $\iota$ (excluding the target).
\end{notation}

\begin{notation}
    We use $\mathterm{\cl{B}_n(B,R)}$ for  the expression $3B+3(3R+1)^2(3R+2)^{n-2}$.
    Given a \txt{DPBGr}, we think of $n$ as the number of vertices in the graph. We think of 
    $B$ and $R$ as \emph{initial budget} and the \emph{sum of all rewards of a specific player}, respectively.
\end{notation}

\begin{restatable}[Large budget approximation of \txt{CPBG} winning strategy]{claim}{sufficientlylargebudgetsreach}\label{sufficiently-large-budgets-reach}
Let $\cl{G}$ be a reachability \txt{DPBGr}. Fix some $1>\varepsilon>0$. For sufficiently large budgets $\Br$, at most exponential in number of vertices, if $\frac{\Br}{\Br+\Bs}\ge Th_{\cl{G}_C}(v)+\varepsilon$ then \Roxy has a winning strategy from $(v,B_r,B_s)$ in the original game $\cl{G}$ that she wins within $exp(n)$ bidding rounds when $\varepsilon$ is given in a number of bits at most linear in $n$. 
\end{restatable}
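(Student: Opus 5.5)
The plan is to have \Roxy simulate the optimal continuous poorman strategy of $\cl{G}_C$ from \cite{Richman1999}, rounding bids to integers, and to argue that the accumulated rounding loss never pushes her budget share below the thresholds. Rewards only help her in this argument: $r_r\ge 0$ adds to her budget, and $r_s$ adds at most $\Rs$ per pass through the graph to \Sam's budget. I will handle \Sam's rewards by absorbing them into a per-step slack.

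First I fix the continuous strategy. In a reachability \txt{CPBG} the thresholds satisfy the poorman fixed-point relation between $Th(v)$, $Th(v^-)$ and $Th(v^+)$. The optimal bid at a configuration with share $x=\frac{\Br}{\Br+\Bs}>Th(v)$ is a fraction $\beta(v)$ of \Roxy's budget, computed from $Th(v^-)$ and $Th(v^+)$. If \Roxy wins the bidding she moves to $v^-$, and if she loses the token goes to (at worst) $v^+$. In either case her share stays above the threshold of the new vertex (\cref{clm:above-threshold-CPBG}).

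Second, the discrete version. \Roxy bids $\lceil \beta(v)\Br\rceil$. One bidding round changes her share by at most $O(1/(\Br+\Bs))$ relative to the continuous update because of rounding and tie-breaking. \Sam's reward at the next vertex changes it by at most $O(r_s/(\Br+\Bs))$, and \Roxy's reward only increases it. I would keep a potential of the form $x - Th(v) - \varepsilon\cdot(\text{something decreasing})$ and show it stays positive as long as $\Br$ exceeds a bound of the form $\cl{B}_n(\cdot,\Rs)/\varepsilon$.

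To guarantee progress toward $t$ rather than stalling, I will use the standard poorman argument: each time \Roxy wins a bidding she moves along a strictly decreasing threshold chain toward $t$. The chain has length at most $n$, and $Th(t)=0$ gives the base case. \Sam winning costs him a constant fraction of his budget, so within $O(n\log(1/\varepsilon))$ rounds of losses per \Roxy win, plus the reward inflow per cycle, his share drops enough for \Roxy's next win. Summing over at most $n$ wins gives a round bound in which the number of steps is polynomial in $n$ and $1/\varepsilon$. Since $\varepsilon$ is given in linearly many bits, $1/\varepsilon\le 2^{O(n)}$, which yields the $exp(n)$ bound. The budget requirement is $\Br\ge$ (total rounding and reward loss over these rounds)$/\varepsilon$, which is again exponential; here the notation $\cl{B}_n(B,R)$ is the natural candidate, with its $(3R+2)^{n-2}$ factor coming from compounding over a path of length $n$.

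The main obstacle is the third step. Thresholds of $\cl{G}_C$ may be irrational, or have exponential bit size, and the per-step margin between $x$ and $Th$ can shrink geometrically along plays, while \Sam's rewards keep injecting budget. I would first try to bound the denominators of the threshold ratios by $2^{poly(n)}$, using the linear-system characterization restricted to a fixed choice of $v^\pm$. From that I would derive a uniform lower bound on $\beta(v)$ and on the gaps $Th(v)-Th(v^-)$, so that the rounding error is dominated by $\varepsilon$ times the budget once $\Br$ exceeds the exponential bound.
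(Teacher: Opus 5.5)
Your proposal has a genuine gap at the progress and termination step. The strategy you describe only plays the rounded continuous-optimal bid $\lceil \beta(v)\Br\rceil$. At a flat vertex, where $Th(v^-)=Th(v)=Th(v^+)$ (cf.\ \cref{same-neighbor-threshold=smallbids}), that bid is $0$, since it depends only on $Th(v^\pm)$. For instance, suppose $v$ is flat and has a successor $a$ whose only successor is $v$. Then $a$ is flat too, and \Sam can shuttle the token $v\to a\to v$ forever. He pays nothing, or one coin per lap, which $r_s(a)\ge 1$ repays. Your potential stays positive, but $t$ is never reached.

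So the two facts your counting rests on fail in general. A \Roxy win at a flat vertex does not decrease the threshold, and a \Sam win there does not cost him a constant fraction of his budget. For the same reason, the uniform lower bounds on $\beta(v)$ and on $Th(v)-Th(v^-)$ that you plan to derive cannot exist, since both vanish at flat vertices. The accounting ``at most $n$ \Roxy wins'' is also wrong: every \Sam win relocates the token and erases \Roxy's progress, so what must be bounded is the number of \Sam wins. Finally, treating \Sam's rewards as an $O(r_s/(\Br+\Bs))$ per-step slack does not control their accumulation over exponentially many rounds. That only works if each \Sam win is charged more than the rewards he collects in between.

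The missing idea, which is the paper's route (\cref{explicit-winning-CPBG} and \cref{DPBG-approx}), is to spend the surplus deliberately. \Roxy regards roughly $Th(v)(\Br+\Bs)$ as her real budget and the remaining roughly $\varepsilon(\Br+\Bs)$ as a slush fund. She fixes a shortest path $\pbold{u}[0],\dots,\pbold{u}[k]=t$ through minimal-threshold successors. At $\pbold{u}[i]$ she bids the rounded continuous part plus an extra amount $b_i$. The $b_i$ grow geometrically, so that $b_j$ is a fixed multiple of $\sum_{i<j}b_i$, and they are scaled by $R_s+1$ so that $b_j$ also dominates the at most $jR_s$ rewards \Sam collects along the way.

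With this strategy, every \Sam win, at a flat vertex or not, raises \Roxy's surplus share by a fixed quantum $d>0$ determined by $\varepsilon$ and $\cl{B}_n(\Bs,R_s)$. Hence \Sam can win only $O(1/d)$ times, and each episode lasts at most $n$ rounds. Once \Sam's share drops below $\frac{1}{2n+1}$, \Roxy wins $n$ consecutive bids. With $\varepsilon$ of linear bit length this yields the exponential bounds.

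This argument needs no information on the bit size of the thresholds. Rounding costs at most one coin per bid, i.e.\ a share shift of order $1/\Br$, which is absorbed by taking $\Br$ large. Your denominator-bounding step would fail anyway: the poorman relation $Th(v)=Th(v^+)/(1-Th(v^-)+Th(v^+))$ is not linear, and thresholds can be irrational, as you note yourself.
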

\begin{proof}[Proof Sketch]
    In \txt{CPBG} when playing a winning strategy bids are given by arbitrarily more than the optimal bid $\Delta(v)$.  
    In $\cl{G}_C$ we consider \Roxy's budgets as split into a "real" budget $Th(v)(\Br+\Bs)$, used to make the optimal part of a bid and a "slush fund" $\varepsilon(\Br+\Bs)$, used to overpay slightly to produce a true winning strategy.
    To win, \Roxy plays in a way that grows the slush fund to a point where she has a large enough budget to win consecutive bids until $t$ is visited. Every time \Sam wins a bid \Roxy's share of the total budget increases by a constant $d$ that is dependent on $\varepsilon$. Once her budget is sufficiently large to win at most $n$ consecutive bids and visit $t$ she does so.
    This entire strategy takes a number of rounds exponential in $n$ when $\varepsilon$ is given in a number of bits at most linear in $n$.

    We show that if $\frac{\Br}{\Br+\Bs}\ge Th_{\cl{G}_C}(v)+\varepsilon$, for $\Br$ such that $\varepsilon > \cl{B}_n(\Bs, \Ro)$ (as $\varepsilon$ and $\Br$ are fixed, $\Bs$ is a function of $\Br$), we may approximate such a strategy in the continuous game $\cl{G}_C$ sufficiently well and still produce a winning strategy. 
\end{proof}

\begin{notation}
\label{test}
    We use $\mathterm{\mathfrak{Br}_{\varepsilon}}$ to denote the budget required by \Roxy to approximate a winning strategy in the setting of \cref{sufficiently-large-budgets-reach}.  
\end{notation}

\begin{corollary}
\label{sufficiently-large-budgets-safety}
    Let $\cl{G}$ be a reachability \txt{DPBGr}. Fix some $\varepsilon>0$. For sufficiently large $\Br$, if $\frac{\Br}{\Br+\Bs}\le Th_{\cl{G}_C}(v)-\varepsilon$ then \Sam has a winning strategy from $(v,\Br,\Bs)$. 
\end{corollary}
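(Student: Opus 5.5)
We prove the corollary by mirroring the proof of \cref{sufficiently-large-budgets-reach}, with the roles of \Roxy and \Sam exchanged. The continuous game $\cl{G}_C$ is determined and has threshold ratios $Th(v)=Th_{\cl{G}_C}(v)$. If $\frac{\Br}{\Br+\Bs}\le Th(v)-\varepsilon$, then $\frac{\Bs}{\Br+\Bs}\ge 1-Th(v)+\varepsilon$. So \Sam holds a margin of $\varepsilon(\Br+\Bs)$ above what he needs in $\cl{G}_C$, and we treat this margin as his slush fund.

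\textbf{Step 1: \Sam's continuous strategy.} In $\cl{G}_C$, \Sam bids toward $v^+$ with the optimal bid, plus a small overpayment drawn from the slush fund. By \cref{clm:above-threshold-CPBG}, this keeps \Roxy's share strictly below $Th(\ppath[i])$ at every step.

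We then strengthen this to a uniform gap. Whenever \Roxy wins a bidding round, she pays slightly more than the optimal amount. Whenever \Sam wins, he moves to $v^+$, so \Roxy's share relative to the threshold does not improve. Hence \Roxy's share stays at least $\varepsilon' $ below the threshold along the whole play, for some $\varepsilon'>0$ depending only on $\varepsilon$ and $n$.

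\textbf{Step 2: Discretization.} We round \Sam's continuous bids to integers. The rounding error at each step is at most $1$, and rewards add at most $\Ro$ to \Sam and $\Rz$ to \Roxy per visit. This is where we need $\Br$, and hence $\Br+\Bs$, to be sufficiently large, e.g.\ $\varepsilon(\Br+\Bs)$ exceeding a quantity of the form $\cl{B}_n(\Br,\Rz)$ as in \cref{sufficiently-large-budgets-reach}. Under that condition, the additive perturbations shift the ratio by at most $\varepsilon'/2$, so \Roxy's share stays below $Th(\ppath[i])$ throughout.

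Rewards only increase budgets, so the total budget never falls below its initial large value. This is the key advantage over the reward-free setting: the approximation stays valid forever, rather than only for a bounded number of rounds. Since $Th(t)=1$ and \Roxy's share stays below $Th(\ppath[i])$ at every step, the token never reaches $t$, and \Sam wins.

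\textbf{Main obstacle.} The hard part is controlling \Roxy's rewards. Unlike \Roxy in the original claim, \Sam cannot force the play to terminate. So we must rule out that \Roxy gradually accumulates share through her rewards $\Rz$ over an unbounded number of rounds.

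We handle this as follows. The error from rewards is additive, while the budgets being compared are multiplicative in scale. After each loop in the play, \Sam's overpayments restore a margin proportional to the total budget, which dominates the at most $\Rz+1$ error incurred per round. This requires that the total budget stays large, which holds because rewards are nonnegative.

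If this per-round restoration argument fails, the fallback is a cruder argument. Apply the symmetric form of the slush-fund growth argument from \cref{sufficiently-large-budgets-reach}: each \Roxy win costs her a constant fraction $d$ of share. By \cref{obs:longplay=largebudgets}, reward-driven loops that favour \Roxy are disadvantaged for \Sam, and such loops can be excluded unless they are forced.
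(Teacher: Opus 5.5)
There is a genuine gap, and it is exactly the obstacle you flag. Your strategy gives Sam no way to end the play, so you must keep a discretized invariant alive over an unbounded horizon, and the reasons you give for why that works do not hold.

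\textbf{The total budget can shrink.} The claim that ``rewards only increase budgets, so the total budget never falls below its initial large value'' is false for poorman bidding. Every round the winner pays its bid to the bank. At non-flat vertices near the threshold these bids are at least a fixed fraction of the budget (\cref{diff-neighbor-threshold=bigbids}). Rewards add only a bounded amount per round, possibly nothing, since a DPBG is a special case. So the total budget can shrink geometrically, and then the additive per-round errors from rounding and rewards are no longer negligible.

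\textbf{The uniform gap is not derived.} The gap $\varepsilon'$ in Step~1 is asserted rather than proved. When Sam wins, he pays his overpayment out of his slush fund, so his margin shrinks. It is replenished only when Roxy wins and has to outbid him. Your ``restoration after each loop'' therefore has the direction reversed. Nothing in the argument stops Roxy from conceding rounds indefinitely while Sam's margin erodes.

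\textbf{The fallback does not help.} A winning strategy for Sam must defeat every Roxy strategy, including ones that produce Sam-disadvantaged loops, so you cannot exclude such loops.

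\textbf{Minor point.} $Th(t)=0$, not $1$. With $Th(t)=1$ your invariant is consistent with the token sitting at $t$; with $Th(t)=0$ it does exclude $t$.

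\textbf{The missing idea} is the one the paper uses: pass to double reachability (\cite[Theorem 7]{IDPBG}) by giving Sam as target the set of vertices from which $t$ is unreachable. This set is non-empty under the hypothesis, since otherwise $Th_{\cl{G}_C}\equiv 0$ and $\frac{B_r}{B_r+B_s}\le -\varepsilon$ would be impossible. With a target of his own, Sam can genuinely mirror the strategy of \cref{sufficiently-large-budgets-reach}:
\begin{enumerate}
  \item he follows a path toward his target with geometrically growing overpayments;
  \item he gains a fixed amount of slush fund each time Roxy wins a round;
  \item after boundedly many rounds he can win enough consecutive bids to reach the target, from which Roxy can never reach $t$.
\end{enumerate}
This bounded horizon is what lets large initial budgets absorb the rounding and reward errors. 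Your invariant-forever approach cannot guarantee that.
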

\begin{proof}
    In \cite[Theorem 7]{IDPBG}, an equivalence between reachability and \emph{double-reachability}, where both players have a target, is shown. It is established by setting the target for \Sam to be the set of vertices that do not have a path to $t$.
    If no such vertex exists, then $Th(v)=0$ for every $v\in V$ (see  in \cite[Proposition 7]{idbg}). Thus, this case is not relevant to the claim.
    Hence, we we may assume that if the set $\pathbetween{V\not}{t}$ is non-empty then a winning play eventually visits one of these vertices. 
    This results in a proof similar to that of \cref{sufficiently-large-budgets-reach}. 
\end{proof}

\begin{remark}
    Since optimal bids are presented as share of current budget, we deduce may similarly denote by $\mathfrak{Bs}_\varepsilon$ the budget sufficient for \Sam to approximate optimal play much like is the case in the setting of \cref{sufficiently-large-budgets-reach} (with roles reversed). 
\end{remark}

The cases where \Roxy's share of the budgets is far from the threshold are fairly straightforward since one player wins soon after budgets grow large enough. 

\begin{notation}
    We denote by $\mathterm{\mathfrak{B}}=\max \{\mathfrak{Br}_\varepsilon, \Rz\mathfrak{Bs}_\varepsilon\}$. It follows from \cref{sufficiently-large-budgets-reach} and \cref{sufficiently-large-budgets-safety} that if \Roxy has $\mathfrak{B}$ budget at $v$ and her (resp. his) share of the budget is at least (resp. at most) $Th(v)+\varepsilon$ (resp. $Th(v)-\varepsilon)$, then she (resp. he) has a winning strategy.
    Note here that $\Rz\mathfrak{Bs}_\varepsilon$ guarantees \Sam's budget is more than $\mathfrak{Bs}_\varepsilon$ when no disadvantaged loops have formed. This is because if we assume his budget grows as slow as possible compared to \Roxy, he must gain $1$ coin for every $\Rz$ coins she gains.
\end{notation}

When addressing the case where her share is close to the threshold, we must contend with the challenge that we can no longer guarantee the games conclusion. That being said, we can make observations regarding what optimal bids in a single round of bidding are.

\begin{corollary}[Optimal bids near threshold for non-flat vertex]
\label{diff-neighbor-threshold=bigbids}
    Let $\cl{G}$ be a reachability \txt{DPBGr}. For some sufficiently small $\varepsilon>0$ and sufficiently large $\Br$, if $\big|\frac{\Br}{\Br+\Bs}-\varepsilon\big|\le Th_{\cl{G}_C}(v)$ and $Th_{\cl{G}_C}(v^+) \neq Th_{\cl{G}_C}(v)$ (resp.  $Th_{\cl{G}_C}(v^-) \neq Th_{\cl{G}_C}(v)$) then the optimal bid for \Roxy (resp. \Sam) is greater or equal to $\Br \cdot\varepsilon$.
\end{corollary}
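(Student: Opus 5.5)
The plan is to argue by contradiction, using the two ``far from threshold'' results, \cref{sufficiently-large-budgets-reach} and \cref{sufficiently-large-budgets-safety}, as black boxes. I treat \Roxy's case, \Sam's being symmetric with $v^-$ in place of $v^+$. I read the hypothesis as saying that \Roxy's share $\frac{\Br}{\Br+\Bs}$ lies within roughly $\varepsilon$ of $Th_{\cl{G}_C}(v)$. Since $Th_{\cl{G}_C}(v^+)\neq Th_{\cl{G}_C}(v)$, I will use the \txt{CPBG} characterization of thresholds from \cite{Richman1999}. The threshold at $v$ is a fixed combination of $Th(v^-)$ and $Th(v^+)$, so the gap $\delta_v:=Th(v^+)-Th(v)>0$ is bounded away from zero. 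It depends only on the graph and not on $\varepsilon$. I will choose $\varepsilon$ small compared to $\min_v\delta_v$, for instance $\varepsilon\le\delta_v/c$ for a suitable constant $c$. This is what ``sufficiently small'' means here.

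\textbf{Key step.} Suppose \Roxy bids some $b<\Br\varepsilon$ in the turn-based game in which she reveals her bid first. \Sam answers with $b+1$, or takes the tie if tie-breaking favors him, and moves to $v^+$. His budget drops by about $b$ and the reward $r_s(v^+)$ is added. I will show that \Roxy's new share is at most $Th(v)+\varepsilon+O(\varepsilon)+O(R_s/(\Br+\Bs))$. Because $\Br$ is sufficiently large, the reward term is negligible. Here I use \cref{sufficiently-large-budgets-reach}, which takes $\Br\ge\mathfrak{B}$, so rewards are tiny relative to budgets. With $\varepsilon\ll\delta_v$, this share is below $Th(v^+)-\varepsilon$. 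Then \cref{sufficiently-large-budgets-safety}, applied at $v^+$ (budgets are still large), gives \Sam a winning strategy. So every bid of \Roxy below $\Br\varepsilon$ loses. Whenever she can win at all, her optimal bid must therefore be at least $\Br\varepsilon$, since any winning bid has to keep her from being pushed below threshold at $v^+$.

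\textbf{Order of work.} First, fix $\varepsilon$ relative to the gaps $\delta_v$ and the ratio $Th(v)/Th(v^+)$. Second, compute the share after \Sam's cheap win, $\frac{\Br+r_r(v^+)}{\Br+\Bs-b-1+r_r(v^+)+r_s(v^+)}$, and bound it using $b<\Br\varepsilon$ together with the near-threshold hypothesis. Third, invoke the corollary at $v^+$. The symmetric argument for \Sam uses \cref{sufficiently-large-budgets-reach} at $v^-$ when $Th(v^-)\ne Th(v)$.

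\textbf{Main obstacle.} The hard part is the quantitative bookkeeping in the second step. I have to check that reducing \Sam's budget by $b\approx\Br\varepsilon$ raises \Roxy's share by only $O(\varepsilon)$, and with a constant that stays below $\delta_v$. The increase is roughly $\frac{\Br}{\Br+\Bs}\cdot\frac{b}{\Br+\Bs}\le\varepsilon\cdot Th(v)^2/\ldots$, which is bounded by $\varepsilon$ times a constant. I also need the ``sufficiently large'' threshold of \cref{sufficiently-large-budgets-safety} to still hold after the move, so I will take $\Br$ larger by a constant factor than $\mathfrak{B}$.
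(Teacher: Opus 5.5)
Your proposal is correct and follows the paper's argument. If \Roxy bids below $\Br\varepsilon$, \Sam outbids her by one coin and moves to $v^+$, where \Roxy's share sits at least $\varepsilon$ below $Th_{\cl{G}_C}(v^+)$, so \cref{sufficiently-large-budgets-safety} gives \Sam a win; \Sam's case is symmetric via \cref{sufficiently-large-budgets-reach} at $v^-$. Your explicit choice of $\varepsilon$ small relative to the gap $Th_{\cl{G}_C}(v^+)-Th_{\cl{G}_C}(v)$, and your bookkeeping for the $O(\varepsilon)$ share shift and the reward terms, make precise what the paper's one-line proof leaves implicit.
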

\begin{proof}
    As the optimal bid in $\cl{G}_C$ is $\Delta(v)\Br$
    if \Roxy (resp. \Sam) makes a bid $b < \Br \cdot \varepsilon$ and \Sam (resp. \Roxy) bids $b+1 \le \Br \cdot \varepsilon$, we arrive at a configuration $(v^+,\Br',\Bs')$ (resp. $(v^-,\Br',\Bs')$) in which, $\frac{\Br}{\Br+\Bs} < Th_{\cl{G}_C}-\varepsilon$ (resp. $\frac{\Br}{\Br+\Bs} > Th_{\cl{G}_C} + \varepsilon$) using \cref{sufficiently-large-budgets-safety} (resp. \cref{sufficiently-large-budgets-reach}) we deduce that \Sam (resp. \Roxy) has a winning strategy. 
\end{proof}

\begin{definition}
    Let $\cl{G}$ be a reachability \CPBG. We call a vertex $v\in V$ such that $Th(v)=Th(v^-)=Th(v^+)$ 
    \mathterm{flat}.
\end{definition}

\begin{corollary}[Optimal bids near threshold for flat vertex]
\label{same-neighbor-threshold=smallbids}
    Let $\cl{G}$ be a reachability \txt{DPBGr}. Fix some $\varepsilon>0$ and sufficiently large, if $v$ is flat and $\big|\frac{\Br}{\Br+\Bs}-\varepsilon\big|\le Th_{\cl{G}_C}(v)$ then the optimal bid for \Roxy and \Sam is less or equal to $\Br \cdot \varepsilon$.
\end{corollary}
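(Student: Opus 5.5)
The plan is to argue by comparison with the unflat case in \cref{diff-neighbor-threshold=bigbids}, but in the opposite direction: a large bid at a flat vertex buys nothing in threshold terms and only costs budget share. I read the hypothesis $\big|\frac{\Br}{\Br+\Bs}-\varepsilon\big|\le Th_{\cl{G}_C}(v)$ as saying that \Roxy's share lies within $\varepsilon$ of $Th_{\cl{G}_C}(v)$. I also assume $\Br$ (and hence, as in the definition of $\mathfrak{B}$, $\Bs$) is at least $\mathfrak{B}$, so that \cref{sufficiently-large-budgets-reach} and \cref{sufficiently-large-budgets-safety} apply at every successor.

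\textbf{Step 1: the next vertex does not matter in threshold terms.} Since $v$ is flat, $Th(v^-)=Th(v^+)=Th(v)$. Every successor $u$ of $v$ satisfies $Th(v^-)\le Th(u)\le Th(v^+)$, so every successor has threshold ratio $Th(v)$. Hence whichever player wins the round, and whichever successor is chosen, the next configuration is judged against the same threshold $Th(v)$. The only thing the round changes is \Roxy's share, which moves by the winner's payment and by the rewards collected at the successor.

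\textbf{Step 2: a large winning bid is self-defeating.} Suppose \Roxy wins the round with a bid $b>\Br\varepsilon$. Her new share is at most
\[
\frac{\Br-b+\Rz}{\Br-b+\Bs+\Rz}.
\]
For $\Br$ large, the additive reward terms are negligible against $\Br\varepsilon$, and this share falls below $Th(v)-\varepsilon$. The successor has threshold $Th(v)$ by Step 1, so \cref{sufficiently-large-budgets-safety} gives \Sam a winning strategy there. Symmetrically, if \Sam wins with a bid exceeding $\Br\varepsilon$, \Roxy's share rises above $Th(v)+\varepsilon$, and \cref{sufficiently-large-budgets-reach} makes her the winner.

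\textbf{Step 3: optimality.} Any bid above $\Br\varepsilon$ is therefore dominated by bidding at most $\Br\varepsilon$. If the large bid wins, the bidder moves into a region the opponent wins. If it loses, overbidding gained nothing, since the opponent pays the amount that beat it. Bidding $\le \Br\varepsilon$ keeps the position in the undecided band. Note also that the opponent never needs to bid more than $\Br\varepsilon$ to stay competitive: because losing a round leaves the loser's budget unchanged and the next vertex has the same threshold, the loser's position cannot get worse. This is the analogue of the continuous fact, from \cite{Richman1999}, that at a flat vertex the optimal bid $\Delta(v)$ is $0$.

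\textbf{Main obstacle.} The delicate step is the quantitative form of Step 2. The shift in share caused by paying $b$ is roughly $b\cdot Th(v)/(\Br+\Bs)$, not $b/\Br$. So the cutoff $\Br\varepsilon$ has to be compared with a $2\varepsilon$-wide band using the constants hidden in ``sufficiently small $\varepsilon$'' and ``sufficiently large $\Br$''.

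I would first try rescaling: work with the share window $\varepsilon$ in the band and a smaller $\varepsilon'$ in the bid bound, choosing $\varepsilon'=c\,\varepsilon$ for a constant $c$ depending on $Th(v)$. The stated form with the same $\varepsilon$ should then follow by renaming constants, with $\Br$ chosen large enough that the rewards $\Rz$ and $\Ro$ are absorbed.
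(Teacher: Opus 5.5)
Your route is the paper's own. Flatness gives every successor threshold $Th_{\cl{G}_C}(v)$, so the continuous optimal bid is $0$. A bid above $\Br\varepsilon$ is then excluded because the opponent can bid $0$ and let it win, after which the bidder's share has left the band and \cref{sufficiently-large-budgets-safety} (resp.\ \cref{sufficiently-large-budgets-reach}) hands the game to the opponent. The paper simply asserts that \Roxy's share ``has reduced by $\varepsilon$'', and that is exactly where both arguments break. You sensed this in your last paragraph, but your Step 2 still claims the share falls below $Th(v)-\varepsilon$, and it does not.

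Write $x=\frac{\Br}{\Br+\Bs}$ and ignore rewards, which only contribute $O\bigl((\Rz+\Ro)/(\Br+\Bs)\bigr)$. A winning payment $b$ by \Roxy lowers her share by
\[
x-\frac{\Br-b}{\Br+\Bs-b}=\frac{b\,\Bs}{(\Br+\Bs)(\Br+\Bs-b)},
\]
which for $b=\Br\varepsilon$ equals $\frac{\varepsilon\,x(1-x)}{1-x\varepsilon}\le\frac{\varepsilon}{4(1-\varepsilon)}$. A payment $b=\Br\varepsilon$ by \Sam raises her share by $\frac{\varepsilon\,x^2}{1-x\varepsilon}$. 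Leaving a band of half-width $\varepsilon$ needs a shift of up to $2\varepsilon$, and of more than $\varepsilon$ from any $x\ge Th_{\cl{G}_C}(v)$. So, for example, a \Roxy bid just above $\Br\varepsilon$ made from the upper half of the band keeps the play inside the band, where neither cited result says anything. Your estimate $b\,Th(v)/(\Br+\Bs)$ is also off: the shift is about $b(1-x)/(\Br+\Bs)$ when \Roxy pays and $bx/(\Br+\Bs)$ when \Sam pays.

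Your proposed repair goes in the wrong direction. Lowering the bid cutoff to $\varepsilon'=c\varepsilon<\varepsilon$ only makes the shift smaller. What the band-exit argument actually proves is a decoupled statement: a bid above $C\,\Br\,\varepsilon$ is suboptimal once $C$ exceeds roughly $2/(x(1-x))$ for \Roxy and $2/x^2$ for \Sam, with $x\approx Th_{\cl{G}_C}(v)$. Equivalently, the cutoff $\Br\varepsilon$ is justified only on a band whose half-width is a suitable $Th_{\cl{G}_C}(v)$-dependent fraction of $\varepsilon$. Since this forces $C\ge 8$ on \Roxy's side, no renaming of constants yields the corollary with a single $\varepsilon$ in both roles.

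The constant also degenerates at the extremes. If $Th_{\cl{G}_C}(v)=0$, the region below $Th_{\cl{G}_C}(v)-\varepsilon$ is empty, so no bid of \Roxy can be excluded this way at all; symmetrically for \Sam when $Th_{\cl{G}_C}(v)=1$. A sound version of your proof has to state and prove the decoupled form, with $Th_{\cl{G}_C}(v)\in(0,1)$ and budgets large enough to absorb $\Rz$ and $\Ro$.
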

\begin{proof}
    Suppose \Roxy makes a bid of more than $\Br \cdot \varepsilon$.  
    As $Th_{\cl{G}_C}(v^+) = Th_{\cl{G}_C}(v)$ we must deduce that $Th_{\cl{G}_C}(v)=Th_{\cl{G}_C}(u)$ for any $u\in succ(v)$. The optimal bid for either player in $\cl{G}_C$ is $0$. If \Sam bids $0$ we move to a new configuration with the same threshold but \Roxy's share of the total budget has reduced by $\varepsilon$ thus \Sam has a winning strategy. The case for \Sam is similar. 
\end{proof}

When the conditions of \cref{diff-neighbor-threshold=bigbids} are met, players are forced to make large bids and therefore despite rewards, total budget must decrease. On the other hand, if the conditions of \cref{same-neighbor-threshold=smallbids} are met, small bids must be made. How small is ambiguous and may lead to upward loops forming. As such, if enough of the graph consists of flat vertices, this may lead to budgets continuing to increase. 
We show that in this case, loops formed approximate disadvantaged loops in \txt{CPBG} (as seen in \cref{CPBG-disadvantaged-loops}). Therefore, if players play reasonably well, \Roxy or \Sam must choose to play another strategy.

\begin{claim}
\label{bigbudget-disadvantaged-loops}
    Let $\cl{G}$ be a reachability \txt{DPBGr}. Fix some $\varepsilon>0$ and sufficiently large $\Br$. Let $\sigma_r$ and $\sigma_s$ be strategies for \Roxy and \Sam resp., and the play $\explayseq = \langle \sigma_r,\sigma_s\rangle$. Suppose an upward loop is formed in $\explayseq$ where $\ppath[i],\ppath[i+1],\dots , \ppath[j-1]$ are all flat. If $\frac{\pBz[j]-\pBz[i]}{\pBz[j] + \pBo[j] - (\pBz[i] + \pBz[i])} > Th_{\cl{G}_C}(v)$, then $\sigma_s$ is suboptimal. 
\end{claim}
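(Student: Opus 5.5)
We plan to argue that, once budgets are large and the loop passes only through flat vertices, the play behaves like a \txt{CPBG} play up to a negligible error. The net effect of the upward loop on the budget vector is to add the increment $(\pBz[j]-\pBz[i],\,\pBo[j]-\pBo[i])$. The hypothesis says that \Roxy's share of this increment exceeds the threshold $Th_{\cl{G}_C}(v)$, where $v=\ppath[i]=\ppath[j]$. (We read the denominator as $\pBz[j]+\pBo[j]-(\pBz[i]+\pBo[i])$.) Because all intermediate vertices are flat, they share the common threshold $Th_{\cl{G}_C}(v)$. Writing $\theta = Th_{\cl{G}_C}(v)$ and $\rho_k = \pBz[k]/(\pBz[k]+\pBo[k])$, the first step is a mediant computation: adding a vector whose ratio exceeds $\theta$ moves $\rho$ toward or above $\theta$. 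Concretely, $\rho_j - \theta$ is a weighted average of $\rho_i-\theta$ and (increment ratio $-\theta$), with positive weights. Hence either $\rho_j > \rho_i$ (when $\rho_i<$ increment ratio), or $\rho_j$ exceeds $\theta$ whenever $\rho_i$ already did. In the language of \cref{CPBG-disadvantaged-loops}, the ratio-shift relative to the threshold is favourable to \Roxy, so in $\cl{G}_C$ the loop would be \Sam disadvantaged.

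The second step transfers this to the discrete game. We split on $\rho_i$. If $\rho_i \le \theta-\varepsilon$, \cref{sufficiently-large-budgets-safety} already gives \Sam a win from $(\ppath[i],\pBz[i],\pBo[i])$; his strategy $\sigma_s$ allowed the play into a loop that raises \Roxy's relative share, so it deviates from the winning approximation and is suboptimal. If $\rho_i \ge \theta+\varepsilon$, \Roxy wins regardless by \cref{sufficiently-large-budgets-reach}, and the claim is only meaningful in the remaining band $|\rho_i-\theta|<\varepsilon$.

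In this band, \cref{same-neighbor-threshold=smallbids} says that optimal bids at flat vertices are at most $\Br\varepsilon$. We compare with \Sam's alternative $\sigma'_s$, which at the first loop vertex $\ppath[i]$ bids so as to avoid closing the loop. Since all successors of a flat vertex have the same threshold, the choice of successor costs him nothing in $\cl{G}_C$. Repeating the loop, by contrast, strictly increases $\rho$ each time by the mediant argument, by an amount bounded below in terms of the loop increment. After finitely many repetitions \Roxy's share passes $\theta+\varepsilon$ (the increment of the repeated loop is the same integer vector, since \Roxy can replay her bids). At that point \cref{sufficiently-large-budgets-reach} hands her a win. Hence \Roxy can force a win against $\sigma_s$ by repeating the loop, whereas $\sigma'_s$ at least does not worsen \Sam's position. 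By the logic of \cref{claim:disadvantaged-strategy}, $\sigma_s$ is suboptimal.

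The main obstacle is the repetition step. In a discrete game, \Roxy cannot simply replay the loop: \Sam's bids may differ on the second pass, and rewards accrue identically but bids were chosen by both players. What we will try first is to let \Roxy mimic her own bids along the loop. We then argue that any deviation by \Sam either yields another upward loop through flat vertices (to which induction on the number of deviations applies), or forces a large bid at some vertex, violating \cref{same-neighbor-threshold=smallbids} and thereby handing \Roxy a share above $\theta+\varepsilon$. Controlling the rounding error of order $1/\Br$ per round against the fixed positive gain $\rho_{\text{inc}}-\theta$ is where the ``sufficiently large $\Br$'' hypothesis is consumed.
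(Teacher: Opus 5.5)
Your ingredients match the paper's. Flatness gives all loop vertices the common threshold $\theta$, \cref{same-neighbor-threshold=smallbids} keeps reasonable bids there small, and letting the loop recur drives \Roxy's share past the threshold. Your mediant computation is a correct refinement that the paper skips.

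\textbf{The repetition step you flag does not go through.} Your dichotomy ``any deviation by \Sam either yields another upward loop through flat vertices or forces a large bid'' is false.
\begin{itemize}
\item At a loop vertex $\ppath[k]$ where \Roxy previously won with bid $b$, \Sam can bid $b+1$ (still tiny) and send the token to a different successor $w$. This $w$ has threshold $\theta$ because $\ppath[k]$ is flat, but $w$ need not itself be flat, so play may leave the flat region for good.
\item Where \Sam previously won, he can underbid. Then \Roxy's replayed bid wins and moves the token to \emph{her} recorded successor instead of $\ppath[k+1]$.
\end{itemize}
Neither deviation closes a new loop or costs a large bid, so your induction on deviations has nothing to act on. Your remark that the increment repeats already presupposes that the history-dependent $\sigma_s$ replays its own bids.

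\textbf{The paper does not try to force replays.} Its key observation is that, because budgets are large, every reasonable bid at the flat vertices (below $(\Br+\Bs)\varepsilon$) is already affordable at configuration $i$. So the coins gained in the upward loop give neither player any new meaningful option. The loop's only effect is to move the budget ratio toward the increment ratio $\rho_{\mathrm{inc}}>\theta$. Configuration $j$ is therefore strategically configuration $i$ with a shifted ratio. If $\sigma_s$ keeps admitting the loop, \Roxy's share eventually exceeds the threshold. Suboptimality of $\sigma_s$ then follows from the comparison reasoning of \cref{claim:disadvantaged-strategy}, not from \Roxy enforcing repetitions.

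\textbf{Two smaller points.}
\begin{itemize}
\item Under repetition the ratio moves monotonically toward $\rho_{\mathrm{inc}}$, with per-pass gains tending to $0$, so they are not bounded below. Hence the ratio passes $\theta+\varepsilon$ only if $\rho_{\mathrm{inc}}>\theta+\varepsilon$, whereas the hypothesis gives only $\rho_{\mathrm{inc}}>\theta$. You need to pass to some $\varepsilon'<\rho_{\mathrm{inc}}-\theta$. Then use that repeated upward loops make budgets unbounded, hence eventually at least $\mathfrak{Br}_{\varepsilon'}$, before invoking \cref{sufficiently-large-budgets-reach}. The paper's ``eventually surpass'' is equally loose here.
\item In your case $\rho_i\le\theta-\varepsilon$, \Sam may still be winning after the loop. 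So ``suboptimal'' there cannot mean that $\sigma_s$ loses; you only get the looser disadvantaged-loop sense.
\end{itemize}
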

\begin{proof}
    Observe that for $i \le k \le j-1$, the vertex $\ppath[k+1]$ is a successor of $\ppath[k]$ so $Th_{\cl{G}_C}(\ppath[k])= Th_{\cl{G}_C}(\ppath[k+1])$. Thus, by \cref{same-neighbor-threshold=smallbids}, reasonable bids made on these vertices are less than $(\Br + \Bs)\varepsilon$.
    Assuming players play sufficiently well we may assume only reasonable bids are made. 

    Hence, since $\Br$ and $\Bs$ are large, every reasonable bid is already available to both players before performing any upward loop. Thus the added bidding options are meaningless to both players.
    On the other hand, if this loop is allowed to repeat, the ratio of player budgets will eventually surpass $Th_{\cl{G}_C}(\ppath[i])$ resulting in a win for \Roxy. 
\end{proof}

\begin{corollary}
\label{bigbudget-disadvantaged-loops-safety}
    Let $\cl{G}$ be a reachability \txt{DPBGr}. Fix some $\varepsilon>0$ and sufficiently large $\Br$. Let $\sigma_r$ and $\sigma_s$ be strategies for \Roxy and \Sam resp., and the play $\explayseq = \langle \sigma_r,\sigma_s\rangle$. Suppose an upward loop is formed in $\explayseq$ where $\ppath[i],\ppath[i+1],\dots , \ppath[j-1]$ are all flat. If $\frac{\pBz[j]-\pBz[i]}{\pBz[j] + \pBo[j] - (\pBz[i] + \pBz[i])} \le Th_{\cl{G}_C}(v)$, then $\sigma_r$ is suboptimal.
\end{corollary}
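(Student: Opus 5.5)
The plan is to mirror the proof of \cref{bigbudget-disadvantaged-loops} with the roles of the players exchanged. First, for every $i\le k\le j-1$ the vertex $\ppath[k]$ is flat, so $Th_{\cl{G}_C}(\ppath[k])=Th_{\cl{G}_C}(\ppath[k+1])$, and all vertices on the loop share the threshold $\theta:=Th_{\cl{G}_C}(\ppath[i])$. Applying \cref{same-neighbor-threshold=smallbids} at each of these vertices, every reasonable bid in the loop is at most $(\Br+\Bs)\varepsilon$. Since the budgets are assumed sufficiently large (at least $\mathfrak{B}$), each such bid is already available to both players at configuration $i$. The extra coins gained along the upward loop therefore add no meaningful bidding options to either player, exactly as in the proof of \cref{bigbudget-disadvantaged-loops}.

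Second, I would track \Roxy's share of the total budget under repetition of the loop. Write $\delta_r=\pBz[j]-\pBz[i]>0$ and $\delta_s=\pBo[j]-\pBo[i]>0$; these are positive because the loop is upward. After $m$ repetitions of the same bids, the configuration is $(\ppath[i],\pBz[i]+m\delta_r,\pBo[i]+m\delta_s)$. \Roxy's share is then a mediant-type expression, and it tends monotonically to $\frac{\delta_r}{\delta_r+\delta_s}\le\theta$ as $m\to\infty$.

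The argument then splits into two cases. If the inequality is strict, then for $\varepsilon$ small enough the share eventually drops below $\theta-\varepsilon$. By \cref{sufficiently-large-budgets-safety} (with the budgets still above $\mathfrak{B}$, since they only grow), \Sam then has a winning strategy. Hence \Sam may simply repeat his behaviour on the loop, and \Roxy's strategy $\sigma_r$, which permits this, cannot be optimal: either she deviates, or she loses. If equality holds, the share converges to $\theta$ but the play never leaves the loop. Repeating forever keeps the play within vertices other than $t$, since $t$ only has a self-loop and $\ro(t)=-1$ excludes it from upward loops, so \Sam wins. This is the analogue of the both-budgets-equal case in \cref{claim:disadvantaged-strategy}, where a stationary loop is counted against \Roxy.

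The main obstacle is making "repeating the loop" legitimate. The loop arises from $\sigma_r$ and $\sigma_s$ and not from a stationary pattern, so \Sam must be able to force a repetition, or at least a repetition that is no better for \Roxy. I would handle this by arguing that \Sam can replay his bids from positions $i..j$. Because budgets are large and the bids are small, \Roxy's options at the start of each repetition coincide with her options at $\ppath[i]$, so any deviation by her is a deviation she could already have made at index $i$. The claim can therefore be read, as in \cref{claim:disadvantaged-strategy}, as saying that a reasonable $\sigma_r$ would take that deviation instead of closing the loop. Controlling the $\pm1$ discretisation errors of size $O(\Br\varepsilon)$ so that the share stays within the $\varepsilon$-window is routine, given the budget bound $\cl{B}_n$.
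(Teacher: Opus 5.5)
Your proposal follows the paper's own route, which proves this corollary only by saying it mirrors \cref{bigbudget-disadvantaged-loops}: flatness gives equal thresholds along the loop, \cref{same-neighbor-threshold=smallbids} keeps reasonable bids small so the coins gained on the upward loop add no useful options, and repeating the loop pushes \Roxy's share to \Sam's side of $Th_{\cl{G}_C}$ (exactly your argument with the roles swapped, and at the same informal level of rigor). Your explicit mediant limit $\frac{\delta_r}{\delta_r+\delta_s}$, the separate equality case, and your discussion of why replaying the loop is legitimate only spell out what the paper leaves implicit; rather than shrinking the fixed $\varepsilon$, note that budgets grow without bound under repetition, so you can apply \cref{sufficiently-large-budgets-safety} with a loop-dependent margin.
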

\begin{proof}
    The proof is essentially the same as that of \cref{bigbudget-disadvantaged-loops}.
\end{proof}

\begin{remark}
\label{budget-ceiling-rmk}
    The above claims show that if both players play reasonably well, one of them must force a downward loop to form. Otherwise, \Roxy's share of total budgets will drift far enough from the threshold to enable \cref{sufficiently-large-budgets-reach} or \cref{sufficiently-large-budgets-safety} to be used.zzzz
\end{remark}

\begin{claim}
\label{budget-ceiling}
    Let $\cl{G}$ be a reachability \txt{DPBGr}. For $\varepsilon>0$, if \Roxy and \Sam play reasonably well, there exists a ceiling on the budgets of both players.
\end{claim}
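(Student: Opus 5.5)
The ceiling will be taken to be $\mathfrak{B}$, possibly enlarged by a constant depending only on $R_r,R_s$ and $n=|V|$. The argument is by contradiction: suppose both players play reasonably well, meaning they avoid disadvantaged loops where an alternative exists and avoid the suboptimal behaviours identified in \cref{bigbudget-disadvantaged-loops} and \cref{bigbudget-disadvantaged-loops-safety}, yet along the play $\explayseq$ some budget exceeds every bound. I will show that a reasonable player who sees budgets pass $\mathfrak{B}$ either wins within finitely many rounds or the configuration is one where budgets cannot keep growing. Since a won play reaches $t$ in finitely many rounds, its budgets are bounded on the relevant prefix, and this gives the ceiling.

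\textbf{Step 1 (budgets cannot grow from small values without crossing $\mathfrak{B}$).} By the pigeonhole argument of \cref{obs:longplay=largebudgets}, a play without disadvantaged loops that lasts long enough reaches a configuration where one player's budget exceeds $\mathfrak{B}$. By the remark following the definition of $\mathfrak{B}$, if \Roxy's budget is the large one then \Sam's is at least $\mathfrak{Bs}_\varepsilon$, since he gains at least one coin per $R_r$ of hers when no disadvantaged loop forms. The symmetric argument applies with roles swapped. So from that point on both budgets are in the large regime. In any loop of the play, the configuration with the larger budgets is the one to bound.

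\textbf{Step 2 (case split at the first such configuration $(v,B_r,B_s)$).} If \Roxy's share is at least $Th_{\cl{G}_C}(v)+\varepsilon$, \cref{sufficiently-large-budgets-reach} gives her a strategy winning within $exp(n)$ rounds. Each round adds at most $\max(R_r,R_s)$ to a budget, so budgets stay below $\mathfrak{B}+exp(n)\cdot\max(R_r,R_s)$. If her share is at most $Th_{\cl{G}_C}(v)-\varepsilon$, \cref{sufficiently-large-budgets-safety} applies symmetrically: \Sam wins by reaching $\connTo{V}{t}$'s complement (vertices with no path to $t$), after which the game's outcome is fixed. In the near-threshold case, consider the first subsequent repetition of a vertex. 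Any loop is either disadvantaged for some player, which is excluded by reasonable play and \cref{claim:disadvantaged-strategy}, or upward, or downward. A downward loop lowers both budgets. For an upward loop there are two cases. If it passes through a non-flat vertex, \cref{diff-neighbor-threshold=bigbids} forces a bid of at least $B_r\varepsilon$ there. Once $B_r\varepsilon > n\max(R_r,R_s)$, this bid exceeds the total reward collected along a simple-ish loop, contradicting upwardness. If the loop runs entirely through flat vertices, \cref{bigbudget-disadvantaged-loops} or \cref{bigbudget-disadvantaged-loops-safety}, depending on the reward ratio versus $Th_{\cl{G}_C}(v)$, shows one player is playing suboptimally. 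This is the content of \cref{budget-ceiling-rmk}.

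\textbf{Step 3 (conclusion).} Since upward loops are excluded in the large regime, every loop beyond $\mathfrak{B}$ is downward. Budgets can exceed $\mathfrak{B}$ only along loop-free stretches of length at most $n$, or by the bounded winning phase of Step 2. This yields an explicit ceiling of the form $\mathfrak{B}+exp(n)\max(R_r,R_s)$.

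The main obstacle is the non-flat upward-loop case. A loop may be long, possibly not simple, and rewards accumulate along it. I would first decompose an arbitrary loop into simple loops, each of length at most $n$. I would then argue that each simple loop through a non-flat vertex pays at least $B_r\varepsilon$ but gains at most $n(R_r+R_s)$. Loops through only flat vertices are handled by the two flat-loop results. The delicate point is that the decomposition must preserve the "all flat" hypothesis those results require.
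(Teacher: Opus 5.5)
Your proposal follows essentially the same route as the paper. In the large-budget regime, a non-flat vertex forces (by \cref{diff-neighbor-threshold=bigbids}) a payment exceeding the rewards collected along the loop, so no upward loop can pass through it, while all-flat upward loops are ruled out as suboptimal by \cref{bigbudget-disadvantaged-loops} and \cref{bigbudget-disadvantaged-loops-safety}; your Steps~1--2 only make explicit the entry into that regime and the far-from-threshold cases, which the paper handles via \cref{sufficiently-large-budgets-reach} and \cref{sufficiently-large-budgets-safety}.

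The non-simple-loop obstacle you raise is not addressed in the paper's proof either, which implicitly lets each loop add at most $R_\iota$. If you want to close it, split a loop between consecutive visits of a vertex into nested \emph{contiguous} sub-loops of the play and induct on length, rather than decomposing into simple cycles of the graph: the loop claims only apply to genuine segments $\explayseq[i..j]$, and with this decomposition all-flat loops are handled directly, so flatness never has to be preserved.
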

\begin{proof}
    Consider a configuration $(v,\mathfrak{B}_\varepsilon, \Ro\mathfrak{B}_\varepsilon)$. 
    \cref{diff-neighbor-threshold=bigbids}, shows that if a vertex is not flat, in order to approximate a winning strategy, players must make bids of more than $\Br \cdot \varepsilon > \Rz + \Ro + 1$ guaranteeing a downward loop is formed.
    In the case of \cref{same-neighbor-threshold=smallbids} either a downward loop is formed, which is sufficient for the claim, or an upward loop is formed. In the latter case, the budget of player $\iota$ may increase by at most $\Ri$.
    \cref{bigbudget-disadvantaged-loops} shows that one player must either force both players to spend more than they receive in rewards or eventually lose the game.
\end{proof}

\begin{notation}
    We denote the ceiling on player budgets by $\mathterm{\cl{C}}$
\end{notation}

\begin{claim}
    \label{clm:roxy-win-duration-bound}
    Let $\cl{G}$ be a reachability \txt{DPBGr} and $(v,\Br',\Bs')$ be some initial configuration. If \Roxy has a winning strategy then the duration of the game is at most $2n \cl{C}^2$.
\end{claim}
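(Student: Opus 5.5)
The plan is a pigeonhole argument on the finite part of the configuration graph cut out by \cref{budget-ceiling}. I will assume that both players play reasonably, meaning that, in the sense of \cref{claim:disadvantaged-strategy}, they avoid disadvantaged loops whenever an alternative exists. By \cref{budget-ceiling}, every configuration $(\ppath[i],\pBz[i],\pBo[i])$ then satisfies $\pBz[i],\pBo[i]\le\cl{C}$. So the play lives in a set of at most $n(\cl{C}+1)^2$ configurations, and the task is to show that a winning play for \Roxy cannot spend more than about $2n\cl{C}^2$ rounds inside this set before reaching $t$.

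\textbf{Step 1 (no repeated vertex with comparable budgets).} Fix a winning strategy $\sigma_r$ for \Roxy that avoids \Sam-forced losses. Let $\explayseq$ be a play consistent with $\sigma_r$ and a reasonable $\sigma_s$, and suppose $i<j$ satisfy $\ppath[i]=\ppath[j]\ne t$. Consult the table following \cref{claim:disadvantaged-strategy}. If the loop is \Roxy disadvantaged, \Sam could force it to repeat, contradicting that $\sigma_r$ is winning; otherwise, by \cref{claim:disadvantaged-strategy}, $\sigma_r$ could be replaced by a strategy avoiding that loop. If the loop is \Sam disadvantaged, a reasonable \Sam avoids it whenever possible. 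If \Sam cannot avoid it, then \Roxy wins outright along a path to $t$, and I will choose $\sigma_r$ to do so directly.

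Hence the only loops left are upward or downward loops. Within a single vertex, the budget vectors visited are therefore pairwise incomparable or strictly ordered in the same direction in both coordinates.

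\textbf{Step 2 (counting).} This is the step I expect to be the main obstacle. For a fixed vertex $u$, I must bound the number of visits to $u$ by $2\cl{C}^2$; summing over the $n$ vertices then gives the stated bound. My first attempt is to note that consecutive visits to $u$ form either an upward or a downward loop, and that downward loops strictly decrease both budgets. A maximal run of downward loops therefore has length at most $\cl{C}$. Upward loops strictly increase both budgets, but by \cref{bigbudget-disadvantaged-loops} and \cref{bigbudget-disadvantaged-loops-safety} one of the players would abandon them, as noted in \cref{budget-ceiling-rmk}; in any case they are also bounded by $\cl{C}$ because of the ceiling.

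The delicate point is alternation between upward and downward runs. To control it, I would use the potential $\Phi=\pBz+\pBo$ together with the disadvantaged-loop restriction. Two visits to $u$ with equal $\pBz$ are forbidden unless their $\pBo$ values are related in a way the table marks as decided, and symmetrically for equal $\pBo$. If that is enough, each coordinate pair is used at most once, and the incomparable vectors form antichain-like chains of total size at most $2\cl{C}^2$. Failing that, the crude fallback is simply that all configurations at $u$ are distinct: in a reasonable play a repeated configuration is a \Roxy-disadvantaged loop. This already gives at most $(\cl{C}+1)^2\le 2\cl{C}^2$ visits for $\cl{C}\ge 3$.

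\textbf{Step 3.} Summing over vertices, the play reaches $t$ within $n\cdot 2\cl{C}^2$ rounds, or else it revisits some configuration. A revisited configuration is \Roxy disadvantaged, and I showed in Step 1 that this cannot happen when $\sigma_r$ is winning and chosen as above.
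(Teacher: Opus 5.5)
There is a genuine gap. Your Step 2 and Step 3 count rests on every visited configuration lying in $[0,\cl{C}]^2$. \cref{budget-ceiling} gives this only for play in which neither player is pushed into disadvantaged loops. The case you set aside in Step 1 is exactly where it fails: \Sam cannot avoid a \Sam-disadvantaged loop, so ``\Roxy wins outright along a path to $t$.'' Each such loop raises \Roxy's budget (or lowers \Sam's), and nothing keeps her budget below $\cl{C}$ while she pumps it. The paper's own proof has her budget grow by $\cl{C}$ in this phase. So the $n(\cl{C}+1)^2$ count does not cover this phase, and you give no duration bound for it at all.

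The missing ingredient is to end this phase using \cref{sufficiently-large-budgets-reach}, as the paper does:
\begin{enumerate}
\item Without disadvantaged loops a play lasts at most about $n\cl{C}$ rounds, so a \Sam-disadvantaged loop forms at least once every $n\cl{C}$ rounds.
\item Each such loop shifts \Roxy's share up. After at most $\cl{C}$ of them, i.e.\ at most $n\cl{C}^2$ rounds, her share exceeds $Th_{\cl{G}_C}+\varepsilon$.
\item The approximation strategy then wins in fewer than $n\cl{C}$ further rounds, for a total below $2n\cl{C}^2$.
\end{enumerate}
Your configuration count cannot replace this, because it needs the ceiling exactly where the ceiling is not available.

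Your worry in Step 2 about alternating upward and downward runs is unfounded, and it hides the bound you actually need for the loop-free phase. Loops are defined for every pair of visits to the same vertex, not only consecutive visits. A pair of budget vectors that is not strictly ordered in both coordinates is, by definition, a disadvantaged loop; there is no ``pairwise incomparable'' option. So once disadvantaged loops are excluded, the budget vectors at a fixed vertex form a chain with pairwise distinct first coordinates in $\{0,\dots,\cl{C}\}$. That gives at most $\cl{C}+1$ visits per vertex, which is the $n\cl{C}$ bound used in step 1 above.

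Separately, ``\Sam could force it to repeat'' is not valid for a history-dependent $\sigma_r$. The sound way to rule out repeated configurations is to replace $\sigma_r$ by a rank-decreasing (attractor) strategy in the turn-based version of the game, which never revisits a configuration.
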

\begin{proof}
    \cref{budget-ceiling} implies that when both players play reasonably well budgets do not surpass $\cl{C}$. 
    As such, any play persisting for more than $n\cl{C}$ rounds of bidding must contain disadvantaged loops. If the loops are \Roxy disadvantaged she will definitely lose the game. Otherwise, the loops are \Sam disadvantaged. If \Sam has been playing reasonably well, every loop to form from here on out must be disadvantaged. Assume \Roxy benefits the least amount possible (meaning every $n\cl{C}$ rounds of bidding a single disadvantaged loop forms and her budget grows by $1$ while \Sam's remains the same, in this case her share of the budget increases by the slowest rate possible). Then after at most $n\cl{C}$ rounds of bidding her budget will have grown by at $\cl{C}$, therefore her share of the total budget will have grown by more than $\varepsilon$ and she can approximate a winning strategy, which takes less than $n\cl{C}$ more rounds of bidding, totaling less than $2n\cl{C}^2$.
\end{proof}

\begin{theorem}
\label{evaluation-onebudget}
    Let $\cl{G}$ be a reachability \txt{DPBGr} and $(v,\Br',\Bs')$. If budgets are given in binary, verifying if $\Br' \ge th_v(\Bs')$ is in \EXP. 
\end{theorem}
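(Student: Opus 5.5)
The plan is to reduce the question $\Br' \ge th_v(\Bs')$ to solving a finite reachability game on a bounded part of the configuration graph, and then solve that finite game by backward induction (attractor computation). First, we compute the continuous thresholds $Th_{\cl{G}_C}(u)$ for all $u\in V$, using the results of \cite{Richman1999}. These are rationals with polynomially many bits and can be obtained in exponential time at worst, for instance by guessing or enumerating the choices of $v^+$ and $v^-$ and solving the resulting linear systems. We then fix $\varepsilon$ with linearly many bits, smaller than every nonzero gap $|Th(u)-Th(u')|$ between thresholds of neighbouring vertices, so that the dichotomy between flat and non-flat vertices in \cref{diff-neighbor-threshold=bigbids} and \cref{same-neighbor-threshold=smallbids} applies. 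With this $\varepsilon$, the quantities $\mathfrak{Br}_\varepsilon$, $\mathfrak{Bs}_\varepsilon$, $\mathfrak{B}$ are bounded by $\cl{B}_n(\cdot,\cdot)$, hence are at most exponential in $n$ and in the bit size of the rewards. Consequently the ceiling $\cl{C}$ of \cref{budget-ceiling} is at most $\max(\Br',\Bs')+2^{poly}$, which is exponential in the input size since budgets are given in binary.

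Next, we build the finite arena. Its positions are the configurations $(u,x_r,x_s)$ with $x_r,x_s\le \max(\cl{C},\mathfrak{B})$, together with a step counter bounded by $2n\cl{C}^2$, which is justified by \cref{clm:roxy-win-duration-bound}. Configurations that leave this box are resolved immediately:
\begin{enumerate}
\item if budgets are large and \Roxy's share is at least $Th(u)+\varepsilon$, we label the configuration a \Roxy win, by \cref{sufficiently-large-budgets-reach};
\item if the share is at most $Th(u)-\varepsilon$, we label it a \Sam win, by \cref{sufficiently-large-budgets-safety};
\item if the ceiling or the round bound is exceeded, we label it a \Sam win, since \cref{budget-ceiling} and \cref{clm:roxy-win-duration-bound} say that \Roxy, when winning, never needs such plays.
\end{enumerate}
The arena has exponentially many positions. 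Each has exponentially many bid pairs, and we resolve those via the turn-based version $\cl{G}_r$, in which \Roxy reveals her bid first; this is legitimate by the determinacy assumption and the remark on turn-based versions. A standard attractor computation for \Roxy toward $t$ and the \Roxy-win labels then runs in time polynomial in the arena size, i.e.\ in \EXP. Finally, we answer yes iff $(v,\Br',\Bs')$ lies in the attractor.

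The correctness argument has two directions. If $(v,\Br',\Bs')$ is in the attractor, \Roxy plays the attractor strategy inside the box and switches to the strategy of \cref{sufficiently-large-budgets-reach} on reaching a \Roxy-labelled configuration, so she wins in $\cl{G}$. Conversely, if \Roxy wins in $\cl{G}$, then by \cref{claim:disadvantaged-strategy} she may assume \Sam plays reasonably. By \cref{budget-ceiling} and \cref{clm:roxy-win-duration-bound}, every play consistent with her winning strategy then stays in the box and reaches $t$ (or a \Roxy-labelled configuration) within $2n\cl{C}^2$ rounds, so her strategy restricted to the box witnesses membership in the attractor.

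The main obstacle is the converse direction. A truncation labelled as a \Sam win must never cut off a genuine \Roxy win, which requires that \Sam deviating to "unreasonable" plays (upward loops on flat vertices, \Sam-disadvantaged loops) only helps \Roxy. To handle this, I would first try to add an extra rule in the arena: when a repeated vertex closes a loop, classify it with the table following \cref{claim:disadvantaged-strategy} and with \cref{bigbudget-disadvantaged-loops} and \cref{bigbudget-disadvantaged-loops-safety}, and declare the winner accordingly. This makes the truncation sound by construction, at the cost of storing a bounded history, which remains exponential.
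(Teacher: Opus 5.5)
\paragraph{Verdict.} Your skeleton is the paper's: bound the horizon with \cref{clm:roxy-win-duration-bound}, note that the relevant configuration space is exponential, and solve the finite game backwards. The paper declares every non-$t$ configuration after $2n\cl{C}^2$ rounds a loss for Roxy and propagates back. There is, however, a genuine gap, and it comes from the extra cut-off you add on top of this: labelling every configuration with a budget above the ceiling $\cl{C}$ as a Sam win.

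\paragraph{Why the ceiling label is unsound.} \Cref{budget-ceiling} only says that budgets stay below $\cl{C}$ when \emph{both} players play reasonably. It does not say that Roxy has a winning strategy all of whose plays, against arbitrary Sam behaviour, stay below $\cl{C}$. Your attractor ranges over all of Sam's moves, so Sam can exploit the artificial label. For example, he can steer play around an upward loop of flat vertices, where by \cref{same-neighbor-threshold=smallbids} reasonable bids are small, until some budget crosses $\cl{C}$. By \cref{bigbudget-disadvantaged-loops}, this is exactly the kind of deviation that should favour Roxy in $\cl{G}$, yet your arena scores it as a Sam win. Hence the step in your converse direction, ``every play consistent with her winning strategy stays in the box'', is unsupported. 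Nor can \cref{claim:disadvantaged-strategy} be used to prune Sam's moves from an attractor computation: it only asserts suboptimality.

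\paragraph{Why your proposed repair does not close it.} The table after \cref{claim:disadvantaged-strategy} leaves downward loops, and upward loops that are not entirely on flat vertices or not in the large-budget regime, as ``unknown''. Its verdicts are themselves justified only under reasonable play, so declaring winners from it would need its own soundness argument. Finally, checking loop closure against all earlier visits along plays of length $2n\cl{C}^2$ needs histories whose number is not obviously exponential.

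\paragraph{How to fix it.} Do what the paper does and truncate only in time, not in budget. Drop the threshold-based labels and the ceiling label, and keep only the round counter. Within $2n\cl{C}^2$ rounds each budget grows by at most $2n\cl{C}^2\max_u r_\iota(u)$, so the reachable configurations together with the counter are automatically exponentially many. The non-$t$ leaves are then correctly Sam wins by \cref{clm:roxy-win-duration-bound}, read (as the paper's proof reads it) as: if Roxy wins, she can force $t$ within $2n\cl{C}^2$ rounds against every Sam strategy.

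\paragraph{Side error in the preprocessing.} This change also removes the need for your threshold computation, which is wrong as stated. Continuous poorman thresholds satisfy the non-linear equation $Th(v)=Th(v^+)/(1+Th(v^+)-Th(v^-))$ and can be irrational, for instance $(3-\sqrt{5})/2$. So they are not polynomial-size rationals obtained by solving linear systems.
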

\begin{proof}
\cref{clm:roxy-win-duration-bound} shows that the duration of a game if \Roxy has a winning strategy is at most $2n\cl{C}^2$ rounds of bidding. We may check if a winning strategy exists by considering all possible ending configuration for a game whose duration is $2n\cl{C}^2$ rounds of bidding, if the configuration is in vertex $t$, clearly \Roxy has won, otherwise she lost. Computing the outcome of the previous round relies entirely on the results from the set of configurations already computed. We may propagate this result to eventually decide if the initial configuration has a winning strategy. 
As the number of possible ending configuration after $2n\cl{C}^2$ is at most exponential we achieve the desired result.
\end{proof}

\begin{corollary}
    Let $\cl{G}$ be a reachability \txt{DPBGr}. If bids and budgets are given in binary, computing $th_v(\Bs)$ is in \EXP.
\end{corollary}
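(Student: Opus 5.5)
The plan is to reduce computing $th_v(\Bs)$ to a search over candidate values of $\Br$, where each candidate is checked with the decision procedure of \cref{evaluation-onebudget}. Winning for \Roxy is monotone in her budget: with more budget she has at least the same bids available, so any winning strategy from $(v,\Br,\Bs)$ is still winning from $(v,\Br+1,\Bs)$. Hence the set $\{\Br \mid \Roxy \text{ wins from } (v,\Br,\Bs)\}$ is upward closed, and the threshold is its minimum, or $\infty$ if the set is empty.

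\textbf{Step 1: bound the candidate range.} I will show that either the threshold is infinite, or it is at most an explicit value $M$ whose bit-length is polynomial in the input, or at worst exponential, so that $M$ itself is at most doubly exponential.

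\begin{enumerate}
\item If $Th_{\cl{G}_C}(v)=1$, or $v\not\rightsquigarrow t$, then \Roxy never wins. The second case is immediate. The first case needs an argument and is treated in the obstacle below.
\item Otherwise, choose $\varepsilon$ with $Th_{\cl{G}_C}(v)+\varepsilon<1$ and with polynomially many bits. Threshold ratios of a \txt{CPBG} are computable, since they lie in $\NP\cap\coNP$ by \cite{Richman1999}, and they have polynomial bit-size.
\item Take $M=\max\{\mathfrak{B},\, \lceil (Th+\varepsilon)\Bs/(1-Th-\varepsilon)\rceil\}$. At $\Br=M$ the hypothesis of \cref{sufficiently-large-budgets-reach} holds, so \Roxy wins and $th_v(\Bs)\le M$.
\end{enumerate}

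By \cref{sufficiently-large-budgets-reach} and the expression $\cl{B}_n(\Bs,\Ro)$, the quantity $\mathfrak{B}$ is at most exponential in $n$, so $M$ has polynomially many bits.

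\textbf{Step 2: binary search.} I will binary search over $[0,M]$, which takes $O(\log M)$ iterations, a number polynomial in the input. Each iteration decides $\Br\ge th_v(\Bs)$ using \cref{evaluation-onebudget}. The budgets queried have polynomially many bits, so each call runs in \EXP. A polynomial number of \EXP calls stays in \EXP. Alternatively, one can skip the search: the backward-induction table built in the proof of \cref{evaluation-onebudget} already covers all configurations with budgets up to $\cl{C}$, so a single computation can read off the least winning $\Br$.

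\textbf{Main obstacle.} The delicate part is the boundary case where the threshold may be infinite, or where it lies near the ratio $Th_{\cl{G}_C}(v)$ itself. Here I would argue as follows.

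\begin{itemize}
\item If the threshold is finite, then some $\Br$ works. By monotonicity, every $\Br\ge M$ would then work as well.
\item So it suffices to test whether $\Br=M$ is winning. If it is not, the threshold is $\infty$.
\item This avoids relying on \cref{sufficiently-large-budgets-reach} when $Th=1$: I simply run the \cref{evaluation-onebudget} check at $\Br=M$, taking $M=\mathfrak{B}$ plus a term large in $\Bs$.
\end{itemize}

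The remaining concern is whether $M$ genuinely has bounded size for the chosen $\varepsilon$, since $\mathfrak{Bs}_\varepsilon$ scales with $\Rz$ and $1/\varepsilon$. I expect this to follow from the exponential bound asserted in \cref{sufficiently-large-budgets-reach}.
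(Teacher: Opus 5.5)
\paragraph{Overall assessment.} Your skeleton matches the paper's: winning budgets for \Roxy are upward closed, you need an explicit upper bound on $th_v(\Bs)$ with polynomially many bits, and then you binary search with calls to \cref{evaluation-onebudget}. There is a genuine gap, though, in how you obtain the upper bound.

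\paragraph{Where the bound fails.} Going through $\cl{G}_C$ and \cref{sufficiently-large-budgets-reach} brings in three steps that do not hold as written.
\begin{enumerate}
\item Continuous poorman threshold ratios need not have polynomial bit-size, because they can be irrational. Take $v_1$ with successors $t$ and $v_2$, and $v_2$ with successors $v_1$ and a sink. The poorman recurrence then gives $Th(v_1)=(3-\sqrt5)/2$. Also, ``in $\NP\cap\coNP$'' describes a decision problem, not the size of $Th$. So neither the existence of a short $\varepsilon$ with $Th+\varepsilon<1$ nor the bit-size of $\lceil (Th+\varepsilon)\Bs/(1-Th-\varepsilon)\rceil$ is justified.
\item Your handling of the boundary case is circular. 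From ``some $\Br$ wins'', monotonicity only says that every budget above that witness wins. It does not say every $\Br\ge M$ wins, unless you already know that finite thresholds are at most $M$. That is exactly what you lack in the case $Th=1$ you were trying to handle.
\item You leave the size of the ``sufficiently large'' budget in \cref{sufficiently-large-budgets-reach} as an expectation. Binary search stays in \EXP only if $M$ has polynomially many bits, so this cannot be left open.
\end{enumerate}

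\paragraph{The missing idea.} There is an elementary bound that needs no continuous game at all.
\begin{itemize}
\item If $v$ cannot reach $t$, then $th_v(\Bs)=\infty$.
\item Otherwise, fix a shortest path $v=u_0,\dots,u_k=t$ with $k\le n$. Let \Roxy bid $\Bs+\Ro+1$ at each $u_i$ to move to $u_{i+1}$.
\item While she keeps winning, \Sam pays nothing and only collects rewards at distinct non-target vertices. So his budget never exceeds $\Bs+\Ro$, and she wins every round under any tie-breaking.
\end{itemize}
Hence $th_v(\Bs)\le n(\Bs+\Ro+1)$, which is the bound the paper uses. Then $O(\log(n(\Bs+\Ro+1)))$ calls to \cref{evaluation-onebudget} suffice, which is polynomial in the binary input.

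The same outbidding argument in $\cl{G}_C$ gives $Th_{\cl{G}_C}(v)\le n/(n+1)$ whenever $t$ is reachable. So your ``delicate'' case $Th_{\cl{G}_C}(v)=1$ is just unreachability of $t$. This bound would also have repaired your choice of $\varepsilon$.
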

\begin{proof}
    As the threshold for a reachability \txt{DPBGr} has $0\le th_v(\Bs) \le n(\Bs+\Ro+1)$, we simply use \cref{evaluation-onebudget} polynomially many times.
\end{proof}

\section{Growth-Strategy}
\label{sec:growth-strategy}

In this section we examine games in which \Roxy has a distinct advantage of \Sam. In such games, \Roxy has two advantages, one fairly obvious and the other is, at first glance, not an advantage at all. 
The first advantage is that every vertex must have a path to the target. This is a clear advantage as \Roxy can then never be punished too severely for playing poorly. 
The second advantage is that there exists an upper bound on how fast \Sam's budget may grow in relation to \Roxy. Intuitively, this is not really an advantage as if \Sam's budget grows many times faster he should benefit greatly. The reason this is an advantage is a direct result of \cref{sec:approx}. When budgets grow sufficiently large, a \txt{DPBGr} approximates a \txt{CPBG}. In CPBG, if every vertex is connected to the target \Roxy has a winning strategy with threshold $0$ from any vertex. Since we consider only reasonable strategies for \Roxy (that do not contain \Roxy disadvantaged loops), her budget is guaranteed to be far greater than the threshold for a sufficiently long play.

Throughout this section we assume for convenience that ties are broken in favor of \Sam (which is a tie breaking mechanism for with \txt{DPBGr} are determined). 
Note that for any other tie-breaking mechanism \Sam must make larger bids, as such, these results hold for bidding games with any determined tie-breaking mechanism.

We present below $B$-growth strategies and show that in \txt{DPBGr} in which \Roxy has such a strategy, $th_v(B_s)\le B$ for all $B_s \in \bb{N}$.

\begin{notation}
    Given a sequence $\pbold{u}=\pbold{u}[1]\ \pbold{u}[2]\ \pbold{u}[3] \ldots \in V^\omega$  we denote by $\mathterm{I_v^p}$ the set of indices such that \(i\in I^\pbold{u} _v \iff \pbold{u}[i]=v\). 
    For $i \in I^\pbold{u}_v$, we write $\mathterm{succ(i)}$ to indicated the minimal index $j\in I^\pbold{u}_v$ that is greater than $i$.
\end{notation}

\begin{definition}[$B$-growth strategy]\label{def:b-growth}
    Consider a \txt{DPBGr}. For $v\in V$,  a strategy $\sigma$ for \Roxy is a \term{$B$-growth strategy for $u$} for $B\in \bb{N}$ if any extended play $\explayseq$ starting at configuration $(v,B,x)$ for any $x\in\bb{N}$, consistent with $\sigma$, for some $m>1$ the play has the following properties : 
    \begin{enumerate}[itemsep=2ex] 
        \item \term{Loop Non-Decreasing Property: }\label{prop:1} $\forall i \in I^\ppathseq_v : ~\pBz[succ(i)]\ge \pBz[i]$.  
        \item  \term{Loop Ratio-Increase Bound:}\label{prop:2} $\forall i\in I^\ppathseq_v. ~\pBz[succ(i)]>\pBz[i] \implies   ~\frac{|\pBo[succ(i)]-\pBo[i]|}{\pBz[succ(i)]-\pBz[i]} < m$. 
        \item \term{Loop Zero-Stagnation Property: }\label{prop:3} $\mbox{$\forall i\in I^\ppathseq_v .~\pBz[succ(i)]=\pBz[i] \implies   ~\pBo[succ(i)]-\pBo[i]<0$}$.  
        \item \term{Reachability Property:}\label{prop:4} $\forall i\in \bb{N}. ~v_i \in \connTo{V}{t}$.  
    \end{enumerate}
\end{definition}

\begin{remark}
    Recall we assume that $t$ has only self edges and has $\ro(t) = -1$. As such, \Sam cannot simply elect to lose the game causing the definition to never apply.
\end{remark}

\begin{example}
\label{growth-strat-ex}
    In the game shown in \cref{fig:firstgame} Roxy has a $2$-growth strategy $\sigma$ from $v_0$. At vertex $v_0$ she bids $2$ to move to $v_1$, at $v_1$ she bids $3$ to move to $t$.

    Now suppose play begins from configuration $(v_0,2,x)$ for some $x\in \bb{N}$. Suppose \Roxy plays according to $\sigma$. If \Sam outbids \Roxy at $v_0$ to remain, he must bid at least $2$, when configurations are updated the new configuration will be $(v_0,2,x-1)$, this works with property 3. 

    Suppose \Roxy wins, the updated configuration is $(v_1,3,x+6)$. If \Roxy wins her bid at $v_1$ we move to $t$ and every loop formed will have property 3. Otherwise, \Sam bids at least $3$ to win and we move to a configuration $(v_0, 3, x+3)$. The loop produced $(v_0,2,x) \to (v_1,3,x+6) \to (v_0, 3, x+3)$ has ratio growth of $m=3$ (as in property 2). Any other loop produced must have this ratio or worse (if \Sam bids more than $3$ at $v_1$), we deduce that $m=3$ is as in the definition.
    
    \begin{figure}[h]
    \label{growth-strat-game}
    \centering
   \begin{center}
    \scalebox{0.8}{
        \begin{tikzpicture}
        [ 
        square/.style={regular polygon,regular polygon sides=4, inner sep=0.25 cm, draw}
        ] 

            \node[state] (v0) at (0,0) {$v_0$};
            \node[] (vec0) at ([xshift = +0.75 cm, yshift = -0.75 cm] v0) {$\chargevec{0}{1}$};
            \node[state] (v1) at (2,0) {$v_1$};
            \node[] (vec1) at ([xshift = +0.75 cm, yshift = -0.75 cm] v1) {$\chargevec{3}{6}$};
            \node[state] (t) at (4,0) {$t$};
            \node[] (vec2) at ([xshift = +0.75 cm, yshift = -0.75 cm] t) {$\chargevec{0}{-1}$};

            \path[->, line width = 0.7pt] (v0) edge [loop left] node [] {} (v0); 
            
            \path[->, line width = 0.7pt] (v0) edge [] node [] {} (v1);
            
            \path[->, line width = 0.7pt] (v1) edge [bend right = 30] node [] {} (v0);
            
            \path[->, line width = 0.7pt] (v1) edge [] node [] {} (t);

            \path[->, line width = 0.7pt] (t) edge [loop right] node [] {} (t);           
        \end{tikzpicture}
        }
    \end{center}
    \caption{Example of a \txt{DPBGr} with a $2$-growth strategy}
    \label{fig:firstgame}
\end{figure}

\end{example}

\begin{claim}
\label{games without growth strat}
    Let $\cl{G}$ be a Reachability \txt{DPBGr}. For any $u\in V$, if there exists a vertex $v$ such that $\pathbetween{u}{v}$ and $\pathbetween{v\not}{t}$ then there exists no $B$-growth strategy for $u$.
\end{claim}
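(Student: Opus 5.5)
The plan is to show that from $(u,B,x)$ with $x$ large enough, \Sam can force the token into a vertex outside $\connTo{V}{t}$. Such a play violates the Reachability Property (property 4) of \cref{def:b-growth}, so no strategy $\sigma$ for \Roxy can be a $B$-growth strategy for $u$. Since the definition quantifies over every $x\in\bb{N}$, it is enough to find, for a given $B$ and $\sigma$, a single $x$ and a single \Sam strategy producing one bad play.

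\textbf{Step 1 (bad path).} Since $\pathbetween{u}{v}$ and $\pathbetween{v\not}{t}$, fix a simple path $u=w_0,w_1,\dots,w_k=v$ with $k\le n-1$. Note that $v\notin\connTo{V}{t}$. If some $w_\ell$ already fails $\pathbetween{w_\ell}{t}$, truncate the path there. So we may assume $w_0,\dots,w_{k-1}\in\connTo{V}{t}$ and $w_k\notin\connTo{V}{t}$.

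\textbf{Step 2 (bounding \Roxy's budget along the path).} Let \Sam use the strategy that, at $w_\ell$, bids his whole budget to move to $w_{\ell+1}$. We must check he can afford this at every step. With ties broken in \Sam's favour, winning each round costs him at most \Roxy's current budget. In poorman bidding \Roxy only pays when she wins, so whenever \Sam wins, her budget changes only by rewards. Hence, while \Sam keeps winning, \Roxy's budget after $\ell$ rounds is at most $B+\ell\cdot\max_w \rz(w)\le B+n\Rz$.

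\textbf{Step 3 (choice of $x$).} \Sam therefore needs at most $k(B+n\Rz)\le n(B+n\Rz)$ coins in total to win all $k$ consecutive rounds, and his own rewards only help. Take $x\ge n(B+n\Rz)$. Then, whatever bids $\sigma$ produces, \Sam outbids \Roxy at each step by bidding \Roxy's current budget, which is at most $B+n\Rz$, rather than his full budget. The token then follows $w_0,\dots,w_k$ and reaches $v$. This play is consistent with $\sigma$ and visits a vertex outside $\connTo{V}{t}$, so property 4 fails and $\sigma$ is not a $B$-growth strategy for $u$. As $\sigma$ and $B$ were arbitrary, the claim follows.

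\textbf{Main obstacle.} The main difficulty is matching this to the exact quantifiers of \cref{def:b-growth}, namely the start vertex $u$ versus $v$, the ``for some $m$'' clause, and the remark about $t$. In particular we must confirm that property 4 must hold for \emph{all} $x$. The definition does say ``for any $x\in\bb{N}$'', so a single sufficiently large $x$ suffices, and the argument never needs properties 1--3. A minor point is tie-breaking: if ties are not broken in \Sam's favour, he bids one more coin each round. This is absorbed by taking $x\ge n(B+n\Rz+1)$.
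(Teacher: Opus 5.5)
Your proposal is correct and follows the same route as the paper: with a sufficiently large initial budget, \Sam wins consecutive biddings along a path from $u$ to $v$, so the Reachability Property of \cref{def:b-growth} fails. Your bound $x\ge n(B+n\Rz)$ is more careful than the paper's $x\ge |V|\cdot(B+1)$, which ignores the rewards \Roxy collects while \Sam keeps winning. One fix: drop the ``bids his whole budget'' strategy of Step 2 in favour of the Step 3 strategy of bidding \Roxy's current budget (or one more), since bidding everything could leave \Sam unable to win the next round.
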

\begin{proof}
    For any initial configuration $(u,B,x)$, $B\in \bb{N}$ if $x \ge |V|\cdot (B+1)$ \Sam may proceed to win consecutive bids to visit $v$. This means the reachability property (\cref{def:b-growth}.\cref{prop:4}) of growth strategies cannot hold. 
\end{proof}

We proceed to show that in any Reachability \txt{DPBGr} $\cl{G}$ starting at some vertex $v_0$ for which \Roxy has a growth-strategy, \Roxy wins. 


\begin{restatable}[A growth strategy implies a winning strategy]{theorem}{Bgrowthwinningstrat}\label{Bgrowth-winning-strat}
    Let $\cl{G}$ be a reachability \txt{DPBGr}. 
    If \Roxy has a $B$-growth strategy $\sigma$ for $v_0$, then \Roxy has a winning strategy for any initial configuration of the form $(v_0,\Br,\Bs)$ where $\Br\ge B$.
\end{restatable}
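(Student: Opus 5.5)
The plan is for \Roxy to follow $\sigma$ until budgets are large, and then switch to the continuous approximation of \cref{sufficiently-large-budgets-reach}. First, the extra budget. From $(v_0,\Br,\Bs)$ with $\Br\ge B$, \Roxy plays $\sigma$ as if her budget were $B$, leaving the surplus $\Br-B$ untouched. Bids prescribed by $\sigma$ are at most her simulated budget. Under poorman bidding the loser's budget is unchanged and rewards are added identically in both runs. So every play consistent with this strategy is a play consistent with $\sigma$ from $(v_0,B,\Bs)$ with \Roxy's budgets shifted up by $\Br-B$, and the properties of \cref{def:b-growth} transfer. It therefore suffices to show that $\sigma$, augmented by a later switch, wins from $(v_0,B,x)$ for every $x\in\bb{N}$.

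Next, the visits to $v_0$. Fix a play $\explayseq$ consistent with $\sigma$ that never reaches $t$, and let $i_0<i_1<\dots$ enumerate $I^{\ppath}_{v_0}$. By the Loop Non-Decreasing Property, $\pBz[i_k]$ is non-decreasing in $k$. By the Loop Zero-Stagnation Property, every step with $\pBz[i_{k+1}]=\pBz[i_k]$ strictly decreases \Sam's budget. Since \Sam's budget is a natural number, only finitely many consecutive stagnations can occur, and so \Roxy's budget strictly increases infinitely often. By the Loop Ratio-Increase Bound, each increase of $d\ge1$ coins for \Roxy raises \Sam's budget by less than $md$, while stagnation steps only lower it. Summing over visits gives
\[
\pBo[i_k]\;\le\; x + m\,(\pBz[i_k]-B),
\]
so $\pBz[i_k]\to\infty$ and
\[
\frac{\pBz[i_k]}{\pBz[i_k]+\pBo[i_k]}\;\ge\;\frac{\pBz[i_k]}{(1+m)\pBz[i_k]+x}\;\longrightarrow\;\frac{1}{1+m}>0 .
\]

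The switch then uses the continuous game. By the Reachability Property, every vertex visited lies in $\connTo{V}{t}$, so in the continuous game $\cl{G}_C$ restricted to the vertices reachable along such plays, every vertex has a path to $t$. By \cite[Proposition 7]{idbg} the threshold ratio is then $Th_{\cl{G}_C}(v_0)=0$. Take $\varepsilon=\frac{1}{2(1+m)}$. Once $k$ is large enough that $\pBz[i_k]\ge\mathfrak{Br}_{\varepsilon}$ and \Roxy's share exceeds $\varepsilon$, \cref{sufficiently-large-budgets-reach} gives \Roxy a winning strategy from $(v_0,\pBz[i_k],\pBo[i_k])$. Her overall strategy is: follow $\sigma$ (with the unused surplus) until the first visit to $v_0$ at which both conditions hold, then switch to that strategy. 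Every play either reaches $t$ while $\sigma$ is being followed, or reaches such a visit after finitely many returns to $v_0$ and is won from there.

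The main obstacle is plays consistent with $\sigma$ that visit $v_0$ only finitely often without reaching $t$. On such plays properties 1--3 are vacuous in the tail, and property 4 alone does not force progress, since \Sam could stall with ties broken in his favour. I would resolve this by reading \cref{def:b-growth} as quantifying over the loops the play actually forms, so that a play avoiding both $t$ and $v_0$ forever violates the intended growth guarantee; formally, every infinite play consistent with $\sigma$ that avoids $t$ returns to $v_0$ infinitely often. If the definition does not support this reading, the fallback is to prove it directly: after the last visit to $v_0$, apply the same budget-monotonicity argument to a vertex visited infinitely often in the tail, using \cref{claim:disadvantaged-strategy} to discard \Roxy-disadvantaged loops. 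I would also need to check that the switch respects the tie-breaking assumption in \Sam's favour, which only strengthens the large-budget approximation.
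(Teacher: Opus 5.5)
\textbf{The gap: plays that stop returning to $v_0$.} You were right to flag these plays, but neither of your resolutions closes the case, and the strategy as written fails there. The switch happens only at a visit to $v_0$, so your argument covers only plays that return to $v_0$ infinitely often.

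Your first resolution reads \cref{def:b-growth} as constraining every return loop. That reading does not imply that $t$-avoiding plays return to $v_0$, and under it your strategy can lose. Take the following game:
\begin{itemize}
\item $v_1$ is the only successor of $v_0$;
\item $v_1$ has a self-loop and an edge $v_1\to t$;
\item $\rz(v_1)=1$, and all other rewards are $0$;
\item $\sigma$ always bids $0$.
\end{itemize}
Every return loop sits at $v_1$, raises \Roxy's budget by one and never raises \Sam's. So $\sigma$ is a $0$-growth strategy even under the strong reading (take $m$ above \Sam's initial budget). Yet \Sam, winning ties at $0$, keeps the token at $v_1$ forever. Then $v_0$ is never revisited and your strategy follows $\sigma$ forever.

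Under the literal reading (loops at $v_0$ only), your fallback cannot work either, because the theorem is then false. Take edges $v_0\to w$, $v_0\to t$, $w\to w$, $w\to t$, nothing entering $v_0$, and zero rewards. Properties 1--3 are vacuous and property 4 holds, so every strategy is a $0$-growth strategy for $v_0$. But from $(v_0,0,0)$ \Sam moves to $w$ and stays there. \cref{claim:disadvantaged-strategy} cannot change this: it compares strategies under an existence hypothesis and says nothing about the plays of $\sigma$.

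\textbf{The repair.} Adopt the strong reading, which is what the paper's own proof uses. You can see this in its zero-stagnation observation ``upon revisiting some vertex'', in the $2\Ro$ path term when bounding \Sam's budget after step (i), and in the \NP verifier that checks every cycle. Then make the switch independent of the vertex. Switch at the first configuration, wherever the token is, at which for some $\varepsilon>0$ \Roxy's budget is at least $\mathfrak{Br}_\varepsilon$ and her share is at least $\varepsilon$.

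Your counting argument, run at any vertex visited infinitely often (offset by the budgets at its first visit), shows this configuration occurs on every $t$-avoiding play. You also need $Th_{\cl{G}_C}(u)=0$ for every $u$ reachable from $v_0$. This follows from property 4 together with the paper's claim that no growth strategy exists when some reachable vertex cannot reach $t$.

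\textbf{Comparison with the paper's route.} Once repaired, your route differs from the paper's, which never invokes \cref{sufficiently-large-budgets-reach}. In the paper, $\sigma$ first pushes \Roxy's budget past $\cl{B}_n(\Bs,\Ro)$, wherever the token is. She then bids geometrically increasing amounts along a shortest path to $t$ and restarts after each loss. Each of \Sam's wins costs him at least $2\Ro+1$ times what she spent since the restart, so he runs out first. That argument is self-contained with explicit bounds, and the paper proves its approximation claim by adapting it. Your version instead inherits whatever that sketched claim leaves open. On the other hand, your explicit handling of the surplus $\Br-B$ fills a step the paper passes over.
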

\begin{proof}[Proof Sketch]
    Intuitively, the proof relies on the idea that given \Roxy has a growth strategy, she risks nothing in losing any single bid. We use this to construct a strategy that forces \Sam to spend much more than \Roxy does, forcing him to deplete his budget. To achieve this, \Roxy plays according to her growth strategy to achieve a budget of $\cl{B}_n(\Bs,\Ro)$. At which point she can play a strategy that forces \Sam to deplete his budget at a rate $2\Ro+1$ times faster than \Roxy. Once \Sam's budget is reduced sufficiently she wins consecutive bids to visit $t$.
\end{proof}

Since the existence of a growth-strategy implies every vertex is connected to the target, $\cl{G}_C$ has threshold $0$ for every vertex (\cite[Lemma 6]{Avni2021-sd}). Since budgets eventually grow very large \Roxy's budget is well above the threshold thus she eventually wins.   

\begin{corollary}
    Let $\cl{G}$ be a reachability \txt{DPBGr} such that for all $v\in V$, $\pathbetween{v}{t}$ and $\rz(v)>0$. Then $th_v(\cdot)=0$ for all $v$. 
\end{corollary}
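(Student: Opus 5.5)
The plan is to derive the corollary from \cref{Bgrowth-winning-strat} by exhibiting, for every vertex $v$, a $0$-growth strategy for \Roxy. Since $th_v(\cdot)\ge 0$ trivially, a $0$-growth strategy for $v$ gives, via \cref{Bgrowth-winning-strat}, a winning strategy from every configuration $(v,\Br,\Bs)$ with $\Br\ge 0$. This yields $th_v(\Bs)=0$ for all $\Bs$.

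The candidate strategy $\sigma$ is the simplest one. At every vertex $u\neq t$, \Roxy bids $(u',0)$, where $u'$ is a successor of $u$ on a shortest path from $u$ to $t$; such a successor exists by the hypothesis $\pathbetween{u}{t}$. Bids of $0$ are always legal, even from budget $0$. \Roxy therefore never pays anything, and after every bidding round that does not land on $t$ her budget increases by $\rz(\cdot)\ge 1$. I would then check the four properties of \cref{def:b-growth} in order.

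\textbf{Properties 1 and 3.} For $i\in I^{\ppath}_v$ with $succ(i)$ defined, the loop $\ppath[i..succ(i)]$ avoids $t$, because $t$ is absorbing. Hence $\pBz[succ(i)]-\pBz[i]\ge succ(i)-i\ge 1$. This gives the Loop Non-Decreasing Property, and it makes the Zero-Stagnation Property vacuous.

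\textbf{Property 4.} This is immediate from the assumption that every vertex reaches $t$.

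\textbf{Property 2.} Let $L=succ(i)-i$ and $\hat r=\max_u \ro(u)$. Over the loop, \Sam gains at most $L\hat r$ in rewards, while \Roxy gains at least $L$. So whenever \Sam's budget goes up, the ratio in Property~2 is at most $\hat r<\hat r+1=:m$.

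The main obstacle is Property~2 when \Sam's budget goes \emph{down} along the loop. The absolute value in the definition then penalizes \Sam's spending, and a single large bid by \Sam can make $|\pBo[succ(i)]-\pBo[i]|$ exceed any fixed $m$ times \Roxy's gain. I would handle this in one of two ways. The first option is to argue that the proof of \cref{Bgrowth-winning-strat} only uses Property~2 to bound \Sam's \emph{growth} relative to \Roxy's. Decreases in \Sam's budget can only help \Roxy there, so the theorem goes through with the one-sided bound $\pBo[succ(i)]-\pBo[i] < m(\pBz[succ(i)]-\pBz[i])$. The second option, if the two-sided version is really needed, is to avoid the definition for such loops and argue directly. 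Each such loop strictly increases \Roxy's budget and strictly decreases \Sam's, so it is \Sam disadvantaged in the sense of \cref{disadvantaged-loops}. \Sam's budget can drop only finitely often, since it is a natural number bounded in growth by $\hat r$ per step. So after finitely many such loops the one-sided bound applies.

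Finally, I would note the tie-breaking assumption of \cref{sec:growth-strategy}, namely that ties go to \Sam. Under it, bidding $0$ means \Roxy never wins a round by herself. This is consistent with the approach: \cref{Bgrowth-winning-strat} converts the growth strategy into an actual winning strategy once her budget reaches $\cl{B}_n(\Bs,\Ro)$. Reaching that budget is guaranteed because her budget grows by at least $1$ each round.
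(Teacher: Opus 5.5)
Your proposal is essentially the paper's proof, done more carefully: \Roxy always bids $0$, this is a growth strategy because her budget rises by at least the loop length while \Sam's rises by at most $\max_u \ro(u)$ per step (a more accurate bound than the paper's per-loop $\Ro$), and \cref{Bgrowth-winning-strat} then gives threshold $0$. Your concern about the absolute value in Property~2 is legitimate, since the paper ignores it and the two-sided bound really fails for the bid-$0$ strategy whenever \Sam can collect rewards on a loop through $v$ and later dump them by overbidding; use your first fix, which matches how the theorem's proof uses Property~2 (only as an upper bound on \Sam's budget), and drop the second, whose claim is false because \Sam's budget can drop infinitely often by alternating reward-collecting loops with overbidding loops.
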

\begin{proof}
    \Roxy's growth strategy is to bid $0$ always. In this case every time a loop is formed her budget will have grown by some quantity and \Sam's can grow by at most $\Ro$ thus the growth ratio is bound. 
\end{proof}

This result hints that in games in which \Roxy's rewards are dispersed more equally across all vertices, her threshold may be generally lower. 

We now show that checking if a $B$-growth strategy exists is substantially easier than solving a \txt{DPBGr}. 

\begin{restatable}[Large budget approximation of \txt{CPBG} winning strategy]{theorem}{BgrowthNP}\label{B-growth-NP}
    Let $\cl{G}$ be a reachability \txt{DPBGr}, deciding if \Roxy has a $B$-growth strategy from $v$ is in $\NP$.
\end{restatable}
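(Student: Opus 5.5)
The plan is to exhibit a polynomial-size certificate for the existence of a $B$-growth strategy and to verify it in polynomial time. The certificate will be a \emph{positional} (memoryless) strategy for \Roxy: a map $\sigma: V\to V\times\{0,\dots,B'\}$ assigning to each vertex a successor and a bid. The bid bound $B'$ is polynomial in $B$ and the rewards, so the map is written in polynomially many bits. The witness also includes the constant $m$ of \cref{def:b-growth}, written in binary with polynomially many bits.

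The first step is a reduction lemma: if \Roxy has any $B$-growth strategy for $v$, then she has a positional one. For this I would argue as in the discussion following \cref{claim:disadvantaged-strategy}. The growth properties only constrain budget differences between consecutive visits to $v$, so \Roxy's choices can be made independent of her absolute budget. To see this, take any growth strategy, follow it along the plays it generates, and at each vertex fix the choice and bid used at the first visit. The loop non-decreasing and zero-stagnation properties are monotone in \Roxy's budget, which lets the first-visit choice be repeated without violating them. I expect this to be the main obstacle. Budgets are unbounded, so the usual finite-memory arguments do not apply directly. If full positionality fails, I would fall back to strategies whose bids are a fixed function of the vertex, capped by the current budget. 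That is still a polynomial-size object.

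The second step is to verify a given positional $\sigma$ in polynomial time.

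\textbf{Reachability property.} First compute the set $U$ of vertices that \Sam can force the token to from $v$ when \Roxy plays $\sigma$. Since \Sam may win any bid given a large enough budget $x$, $U$ is the set of vertices reachable in the graph restricted to \Roxy's chosen edges plus all of \Sam's edges. Check that $U\subseteq \connTo{V}{t}$, as in \cref{games without growth strat}.

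\textbf{Loop properties.} Every return to $v$ decomposes into simple cycles of the game graph. Along any path, \Roxy's budget change is $\sum r_r - \sum(\text{bids she won})$, and \Sam's change is $\sum r_s$ minus his winning bids, each of which is at least \Roxy's bid at that vertex. Label each edge $e=(u,w)$ with a pair $(\delta_r(e),\delta_s(e))$:
\begin{itemize}
    \item if $e$ is \Roxy's edge under $\sigma$, then $\delta_r(e)=r_r(w)-b_\sigma(u)$ and $\delta_s(e)=r_s(w)$;
    \item otherwise $\delta_r(e)=r_r(w)$ and $\delta_s(e)\le r_s(w)-b_\sigma(u)$, taking the worst case for the ratio bound.
\end{itemize}
\Roxy's own edge from $u$ is also available to \Sam with $\delta$ as in his case.

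The three loop properties now become cycle conditions in this weighted graph on $U$:
\begin{itemize}
    \item no cycle through $v$ has negative $\delta_r$-weight;
    \item every cycle through $v$ with zero $\delta_r$-weight has negative $\delta_s$-weight;
    \item every cycle with positive $\delta_r$ satisfies $|\Delta_s| < m\,\Delta_r$.
\end{itemize}
The first two are checked by Bellman--Ford-type negative-cycle detection with lexicographic weights $(\delta_r,-\delta_s)$. The third is checked by detecting cycles of positive weight under $\delta_s - m\delta_r$ and of negative weight under $\delta_s+m\delta_r$ (minimum/maximum mean-cycle computations). All of these checks are polynomial.

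A subtlety is that returns to $v$ may be non-simple closed walks. Since the conditions are linear in the edge weights, however, a closed walk violating a condition forces one of its simple-cycle components to violate it, or yields a cycle avoiding $v$ with bad weight that \Sam can pump. I would therefore check cycles in all of $U$, not only those through $v$.

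Combining the two steps: guess $\sigma$ and $m$, then verify in polynomial time, which places the problem in $\NP$.
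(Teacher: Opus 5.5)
Your plan is essentially the paper's proof: the witness is a positional strategy (a successor and a bid per vertex, which the paper obtains from its lemma that growth strategies can be made positional), and the verifier weights each edge with Roxy's and Sam's budget changes and searches for a return to $v$ that violates the non-decreasing or zero-stagnation property. Your verifier is more complete than the paper's BFS check, since you also handle the ratio bound and the reachability property, let Sam win while moving to Roxy's chosen successor, and use Bellman--Ford-type cycle detection; as in the paper, though, the positionality reduction carries the weight, and your budget-monotonicity remark by itself only justifies resetting Roxy's memory at each return to $v$, not memorylessness at intermediate vertices.
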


\section{Discussion}
We examine for the first time discrete bidding games with rewards and have shown that the threshold problem for a reachability \txt{DPBGr} is in \EXP. 
We have shown that if a play is allowed to persist for sufficiently long, when players play reasonably well, optimal bids begin to resemble continuous bidding games. While membership in \EXP is somewhat disappointing compared to bidding games without rewards,  there is reason to believe that this is not a lower bound on complexity. This is because, one would expect that as budgets grow larger optimal bids should grow more similar to optimal bids in \txt{CPBG}. In our work we only show a point after which bids are close approximations. This may be an interesting direction for future work. 

We introduce a novel technique to be able to enforce that players play reasonably well (using disadvantaged loops). This method is applicable to other bidding games and may be useful when considering other generalizations of bidding games as well as other objectives.

Finally we show that in some arenas the reachability player holds a substantial advantage.

\bibliographystyle{plainurl}
\bibliography{bib}

\appendix
\section{Omitted Proofs}

\begin{claim}
\label{dpbgr-determined}
    Let $\cl{G}$ be a M\"uller \txt{DPBGr} that uses a tie-unaware transducer for tie breaking. Then $\cl{G}$ is determined.
\end{claim}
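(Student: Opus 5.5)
The plan is to follow the approach of \cite{determinacy} for \txt{DPBG}, but first reduce the game to a turn-based game of perfect information on the (infinite, countable) configuration graph. Then we invoke Martin's Borel determinacy.

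\emph{Step 1: the arena.} Consider the game $\cl{G}$ with a tie-unaware transducer $T$ with state set $Q$. We build a turn-based arena whose positions are tuples $(v,B_0,B_1,q)$, where $q\in Q$ is the current state of $T$. A round is encoded in two steps. In the first, Player~$0$ commits to a bid $(u_0,b_0)$ with $b_0\le B_0$. In the second, Player~$1$ responds with $(u_1,b_1)$. The successor position is then determined by the bidding rules: the higher bid wins; on a tie, $T$ in state $q$ picks the winner and updates its state. The budgets are updated by the poorman rule, after which $r_0(u),r_1(u)$ are added at the chosen vertex $u$. Rewards only change the budget update, and the update is still a deterministic function of the current position and the two bids. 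Hence this is a countable arena with finitely many moves per position, because bids are bounded by the current budgets. The M\"uller condition on the projection to $V$ is a Borel set of plays. By Borel determinacy, one of the two players wins this turn-based game.

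\emph{Step 2: transfer to the concurrent game.} If Player~$0$ wins the game $\cl{G}_0$ in which she reveals first, she clearly wins the concurrent game, since revealing first only helps the opponent. The hard direction is showing that if Player~$1$ wins $\cl{G}_0$, then he also wins concurrently. Following \cite[Theorem~4.5]{determinacy}, we prove that for each configuration, the winner of $\cl{G}_0$ and the winner of $\cl{G}_1$ (where Player~$1$ reveals first) coincide. We do this by analysing a single round with a local ``matrix'' argument. Fix the values (win/lose) of all successor configurations under the turn-based solution. We show that the one-shot bidding matrix game with these 0/1 payoffs has a pure saddle point. Monotonicity makes this work: a higher remaining budget is never worse for its owner. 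This holds with rewards too, since rewards are added equally regardless of who won, so larger budgets still dominate. Consequently, in each row or column of the local matrix, winning configurations form upward-closed sets in the bid. The threshold-style argument of \cite{determinacy} then yields a pure equilibrium bid, provided the tie-breaker's choice does not depend on whether a tie occurred. Combining local saddle points into a global strategy gives a concurrent winning strategy for whichever player wins the turn-based game.

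\emph{Main obstacle.} With rewards the configuration space is infinite, so we cannot argue by backward induction on the total budget, as the \txt{DPBG} proof might. This is why Step~1 relies on Borel determinacy rather than finite induction, and why the local argument in Step~2 needs the value labelling as given data. The delicate point is verifying budget monotonicity when rewards are present. We will check that if $(v,B_0,B_1)$ is winning for Player~$0$, then so is $(v,B_0+1,B_1)$: she simulates the strategy, never uses the extra coin, and the reward increments are identical in both plays. We will also check that the tie-unawareness of $T$ survives the encoding of rewards, since $T$'s state changes only on ties, which the construction never exposes. Once these two points hold, the rest of the argument of \cite{determinacy} carries over verbatim.
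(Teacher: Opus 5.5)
Your overall architecture matches the paper's: Borel determinacy of the turn-based game in which Player~$0$ reveals first, a local analysis of the one-round bidding matrix, and tie-unawareness as the key hypothesis. The step that fails is the local-to-global one.

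\textbf{Why the local-to-global step fails.} A pure saddle point for the matrix labelled by the turn-based winning regions only lets a player keep the play inside her winning region. For a M\"uller (even B\"uchi) objective this is not enough. Whenever she wins a bidding she moves to \emph{some} successor in her region, not necessarily one dictated by a winning strategy. Such saddle points also cannot show that the winners of $\cl{G}_0$ and $\cl{G}_1$ coincide. Take a matching-pennies arena: at $c$ a match leads to $x$, a mismatch leads to $y$, both return to $c$, and Player~$0$ wants $x$ infinitely often. Every winning-region-labelled matrix there is constant, so it trivially has a pure saddle point. Yet Player~$1$ wins $\cl{G}_0$, Player~$0$ wins $\cl{G}_1$, and the concurrent game is not determined.

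\textbf{What is missing.} You need history-dependent labellings induced by a fixed winning strategy $\tau$ of the turn-based game. At history $h$, call a successor configuration good iff it is the outcome of $\tau$'s response to some opponent action. A pure action all of whose outcomes are good lets you maintain a simulated $\tau$-consistent history with the same configuration sequence, so every concurrent play is winning. This is what the paper obtains by invoking \cite[Theorem~3.7]{determinacy} for locally determined $R$.

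These labellings are not upward closed, so the monotonicity lemma you single out as the delicate point does not help. What does the work is that in a \txt{DPBGr} the configuration reached when player $\iota$ wins the bidding depends only on the current configuration and on $\iota$'s own bid and chosen vertex. The loser pays nothing, rewards depend only on the vertex, and a tie-unaware tie-breaker's next state does not record whether a tie occurred. A short argument then shows that such a matrix has an all-good row or an all-bad column for \emph{every} 0/1 labelling: take the largest bid with which the row player has a good winning outcome.

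\textbf{Tie-unawareness.} Your description of it is backwards. A transducer whose state changes only on ties (e.g.\ alternating tie-breaking) is tie-aware. Its state is part of the position, so it is exposed, and this is exactly the case in which determinacy fails.
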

\begin{proof}[Proof Sketch]
    Aghajohari et al. \cite{determinacy} defines a fragment of concurrent games, $R$-concurrent games, for a transducer $R$ that, given a configuration, determines the available actions. Vertices on $R$-concurrent games are partitioned into configuration vertices denoted $C$ (which in the setting of bidding games can be thought of as $(v,B_0,B_1,s)$ where $v$ is the current vertex $B_\iota$ are current player budgets and $s$ is the state of the tie breaking mechanism) and intermediate vertices, denoted $I$ (which are used for book keeping). See 3.2 in \cite{determinacy} for more precise details. 

    It is straightforward to show that a \txt{DPBGr} is $R$-concurrent (for a transducer $R$ with infinitely many states). 

    Theorem 3.7 of \cite{determinacy} shows that for a locally determined $R$ (see \cite[def. 3.4]{determinacy}), any $R$-concurrent game with a M\"uller objective is determined. 

    \cite[subsection 3.3]{determinacy} defined the bidding matrix $M_c$ for a  configuration $c=(v,B_0,B_1,s)$ whose rows and columns are allowed actions for player $0$ and $1$ respectively. Examining $M_c$ we find that 
    $(M_c)_{ij}=0$ iff player $0$ has a winning strategy in the game in which she reveals her action first. Since turn based M\"uller games are determined this is well defined. 

    Using this they show that discrete bidding games are $R$-concurrent for a locally determined $R$ and are therefore determined. 

    For \txt{DPBGr} with a M\"uller objective, we may define the bidding matrix similarly. Note that the matrix is still well defined since the turn-based version of a \txt{DPBGr} is a turn-based game with a Borel objective and by \cite{BorelDeterminacy} is determined.

    Finally they define a tie-unaware transducer and show that discrete bidding games whose ties are resolved by a tie-unaware transducer are $R$-concurrent for a locally determined $R$ and are therefore determined. The same proof can be used for \txt{DPBGr}.  
\end{proof}

\begin{lemma}
\label{explicit-winning-CPBG}
    Let $\cl{G}$ be a Reachability \txt{CPBG} and initial configuration $(v,\Br,\Bs)$ be such that $\frac{\Br}{\Br+\Bs}=Th(v)+\varepsilon$ for $\varepsilon>0$. Then \Roxy has a winning strategy such that she wins within $poly(\frac{n}{\varepsilon})$ bidding rounds.
\end{lemma}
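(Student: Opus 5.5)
We follow the classical strategy for reachability \txt{CPBG} from \cite{Richman1999} and make it quantitative. There, the threshold ratios satisfy a local fixed-point condition at every non-target vertex $u$ that can reach $t$: writing $T=Th(u)$, $T^-=Th(u^-)$, $T^+=Th(u^+)$, \Roxy, holding ratio exactly $T$, can bid a fraction $\Delta(u)$ of her budget. If she wins, she moves to $u^-$ with ratio at least $T^-$. If \Sam outbids her, the token goes (at worst) to $u^+$, \Sam pays at least $\Delta(u)\Br$, and \Roxy's ratio becomes at least $T^+$. We take this characterisation as given. The plan is to add an explicit ``slush fund'' to it. \Roxy maintains the invariant that her ratio at the current vertex $u$ is at least $Th(u)+\varepsilon_k$, where $\varepsilon_k$ never decreases, and she measures progress by a potential that must strictly improve each round by an amount polynomial in $\varepsilon/n$.

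Step 1 sets up the invariant. Normalise the total budget to $1$ at the start of each round, so the game is scale-invariant. \Roxy splits her budget into a ``real'' part $Th(u)\cdot(\Br+\Bs)$ and a surplus $s\ge\varepsilon(\Br+\Bs)$. Her bid is the optimal continuous bid computed on the real part, plus an extra $\delta\cdot s$ for a small constant $\delta$; we will pick $\delta=\Theta(1/n)$.

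Step 2 is the case where \Sam wins the round. He has then paid at least this bid. Since poorman payments leave the bank, \Roxy's ratio at $u^+$ is at least $Th(u^+)$ plus her old surplus, now rescaled by a factor $>1$ because the total shrank by at least $\delta s$. So every loss multiplies the normalised surplus by at least $1+c\,\delta\varepsilon$ for some constant $c$. The surplus is bounded by $1$, so this can happen only $O(\log(1/\varepsilon)/(\delta\varepsilon))$ times, which is $\mathrm{poly}(n/\varepsilon)$.

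Step 3 is the case where \Roxy wins. The token moves to $u^-$. Her surplus shrinks by at most a factor $(1-\delta)$, but by the fixed-point property her real part still covers $Th(u^-)$. To prevent her from winning indefinitely without reaching $t$, we use the standard acyclicity argument. Order the vertices by threshold ratio and break ties by distance to $t$ inside each flat region; we may choose $u^-$ to decrease this lexicographic rank. Then any run of consecutive \Roxy wins reaches $t$ within $n$ steps. The total surplus erosion between two \Sam wins is therefore at most $(1-\delta)^n$, and with $\delta=1/(2n)$ this is a constant factor, say at least $1/2$. Combining with Step 2 shows that the surplus stays above $\varepsilon/2$ and that \Sam wins only $\mathrm{poly}(n/\varepsilon)$ rounds in total. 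Hence the play lasts at most $n$ times that many rounds before reaching $t$.

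The main obstacle is the flat vertices, where $\Delta(u)=0$ and the continuous strategy bids nothing. Here the gain in Step 2 must come purely from the $\delta s$ overbid, and the tie-breaking by distance in Step 3 must be checked carefully. In particular, \Sam winning at a flat vertex may move the token away from $t$ without changing thresholds. We handle this by making the overbid at flat vertices proportional to the surplus, so that every such \Sam win still yields the multiplicative gain. The remaining task is to verify that the constants combine to $\mathrm{poly}(n/\varepsilon)$ rather than an exponential. We expect this to hold because each phase contributes at most a factor $n$ and only $\log$ and $1/\varepsilon$ factors appear from the geometric growth.
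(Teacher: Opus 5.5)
There is a genuine gap where you combine Steps~2 and~3. Your bound of $n$ on \emph{consecutive} wins of yours is fine, and it matches the paper's shortest $v^-$-path. The trouble is the per-phase accounting. Between two of Sam's wins your surplus can shrink by a factor $(1-\delta)^n$, and his next win returns only a factor $1+c\,\delta\varepsilon$. These do not cancel, so the erosion compounds from phase to phase. Hence neither ``the surplus stays above $\varepsilon/2$'' nor your $O(\log(1/\varepsilon)/(\delta\varepsilon))$ count of Sam's wins follows; the count presupposes the surplus never drops between his wins.

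This is not just a lossy estimate. Normalise the total budget to $1$ and consider a flat vertex of threshold $T$, with your share $x=T+s$ and overbid $\beta$. If you win, your surplus becomes $(s-\beta(1-T))/(1-\beta)\approx s-\beta(1-x)$. If Sam outbids you by a negligible amount, it becomes $x/(1-\beta)-T\approx s+\beta x$. One win each therefore changes the surplus by about $-\beta(1-2x)$, which is negative whenever $x<1/2$.

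Concretely, take edges $a\to t$, $a\to a_2$, $a_2\to t$ and $a_2\to z$, where $z$ cannot reach $t$; this gives $Th(a_2)=1/2$ and $Th(a)=1/3$. Add vertices $b,c$ with edges $b\to b$, $b\to c$, $c\to b$, $c\to a$. Then $Th(b)=Th(c)=1/3$ and both are flat. Your rank sends you from $b$ to $c$ and from $c$ to $a$. Sam concedes at $b$ and outbids you at $c$. Each cycle multiplies your surplus by roughly $1-\delta/3$, and the token circles $b,c$ forever. So the proposed strategy is not even winning.

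The missing idea is the paper's phase accounting. Overbids must \emph{escalate} while you keep winning and reset after each of Sam's wins, so that a win by Sam always costs him more than all your overbids since his previous win. The paper does this as follows:
\begin{itemize}
\item It follows a shortest $v^-$-path with overbids $b_0,3b_0,12b_0,\dots$, so that $b_j=3\sum_{i<j}b_i$, and restarts after each loss.
\item Each phase thus nets a gain of at least $d=2b_0$ in the slush fund, which bounds the number of Sam's wins.
\item Once Sam's share is below $1/(2n+1)$, you simply outbid him along the path.
\end{itemize}
By the computation above, the domination must be measured in poorman terms. A unit paid by Sam gains you about $x$, while a unit paid by you costs about $1-x$, so the escalation ratio must exceed roughly $(1-x)/x$.

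Note also that escalating over up to $n$ steps makes the first overbid exponentially small in $n$ relative to the slush fund. This accounting therefore yields a number of phases exponential in $n$, not polynomial. The paper's own count $\frac{2n^2}{(2n+1)d}+n$, with $d=2\varepsilon/\mathcal{B}_n(B_s,1)$, has the same feature. The $\mathrm{poly}(n/\varepsilon)$ bound will not fall out of the constants for free.
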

\begin{proof}
    The proof is based on the ideas presented in \cref{Bgrowth-winning-strat}. Roughly speaking, the strategy relies on \Roxy's ability to guarantee budgets become large enough (a growth-strategy, see \cref{def:b-growth}) to employ a winning strategy. This strategy resembles the strategy presented here in a setting where the continuous threshold is $0$ everywhere.
    
    As budgets may be scaled linearly, we may assume for convenience that our initial configuration is $(v,Th(v)+\varepsilon,1-Th(v)-\varepsilon)$.
    When playing winning strategy on a \txt{CPBG} , bids are made arbitrarily larger than the optimal bid, which is given by
    $$
    \mathterm{\Delta(v)} = \bigg(\frac{1}{Th_{\cl{G}_C}(v)}-\frac{1}{Th_{\cl{G}_C}(v^+)} \bigg)
    $$
    We consider $Th(v)$ and \Roxy's ``real'' budget (used to bid the optimal part of the bid) and $\varepsilon$ as a slush fund (used to pay the small increase).
    We construct an explicit winning strategy for \Roxy such that every time \Sam wins a round of bidding, \Roxy's share of the total budget increases compared to the last time he won until she wins. Set $\mathterm{\cl{B}}=\cl{B}_n(\Bs,1)$. 

    \Roxy's strategy is as follows 
    \begin{enumerate}[label=(\roman*)]
        \item Let $\pbold{u}[0],\dots \pbold{u}[k]$ where $\pbold{u}=v$ and $\pbold{u}[k]=t$ be a shortest path such that $\pbold{u}[i+1]\in \pbold{u}[i]^-$ for $0\le i< k$. At $\pbold{u}[0]$ \Roxy bids $\Delta(\pbold{u}[0]) + b_0$ where $b_0 = \frac{\varepsilon}{\cl{B}}$ to move to  $\pbold{u}[1]$. At $\pbold{u}[1]$ she bids $\Delta(\pbold{u}[1]) + b_1$ where $b_1 = \frac{3\varepsilon}{\cl{B}}$ to move to $\pbold{u}[2]$ and for $i>1$ at $\pbold{u}[i]$ she bids $\Delta(\pbold{u}[i])+b_i$ where $b_i = \frac{3\cdot 4^{i-1}\varepsilon }{\cl{B}}$ to move to $\pbold{u}[i+1]$.
        \item If \Roxy wins every bid, she wins the game. Otherwise, if \Roxy loses at $\pbold{u}[i]$ and \Sam chooses successor $u$, if \Sam's share of the total budget is more than $\frac{1}{2n+1}$, set $v\leftarrow u$, set the configurations so that budgets are $(u, Th(u)+\varepsilon',1-Th(u)-\varepsilon')$ repeat \emph{(i)}. Otherwise move to \emph{(iii)}.
        \item \Roxy bids $\frac{2}{(2n+1)}$ every bid (and wins since the bid is larger than \Sam's entire budget) until she visits $t$.  
    \end{enumerate}

    \begin{observation}
    \label{obscont}
        At \emph{(ii)} if a bid is won by \Sam at $\pbold{u}[j]$ and the token is moved to vertex $u$, \Roxy's budget is $Th(u) +\varepsilon - \sum_{i=0}^{j-1} b_i$. On the other hand the budget of \Sam has decreased by $b_j$. Note that $b_j = 3\sum_{i=0}^{k-1}b_i$. Therefore we deduce that the budget of \Sam is $Th(u)-\varepsilon-3\sum_{i=0}^{j-1} b_i$. Meaning every time we repeat step \textit{(i)} the budget of \Sam has decreased at least $3$ times more than \Roxy and \Roxy's slushfund has increased by at least $2\sum_{i=0}^{j-1} b_i$. Every time \emph{(i)} is repeated \Roxy's slush fund grows faster. Take $d = 2b_0$, after \Sam wins $\frac{2n}{(2n+1)d}$ rounds of bidding \Roxy's slush fund represents $\frac{2n}{2n+1}$ of the total budget so the strategy is well defined. 
    \end{observation}

    Notice that the duration of the game is at most $\frac{2n^2}{(2n+1)d}+n$ since loops can only form if \Sam wins a round of bidding (since \Roxy wants a simple path to $t$ at \emph{(i)}) and so $\frac{2n}{(2n+1)d}$ loops form \Sam must have won that many times and a loop must form every $n$ rounds of bidding unless \Roxy wins.

\end{proof}

\begin{lemma}
\label{DPBG-approx}
    Let $\cl{G}$ be a reachability \txt{DPBG}. Fix some $1>\varepsilon>0$. For sufficiently large budgets $\Br$, if $\frac{\Br}{\Br+\Bs}\ge Th_{\cl{G}_C}(v)+\varepsilon$ then \Roxy has a winning strategy from $(v,B_r,B_s)$ in the original game $\cl{G}$ such that she wins within $poly(\frac{n}{\varepsilon})$ bidding rounds.
\end{lemma}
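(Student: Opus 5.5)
The natural route is to discretize the explicit continuous strategy of \cref{explicit-winning-CPBG}. Since thresholds in $\cl{G}_C$ scale linearly, I rescale the configuration $(v,\Br,\Bs)$ to ratios. By the lemma, \Roxy has a strategy in $\cl{G}_C$ that wins from share $Th_{\cl{G}_C}(v)+\varepsilon$ within $N=poly(\frac{n}{\varepsilon})$ rounds. Its bids have the form $\Delta(\pbold{u}[i])\cdot(\text{budget})+b_i$, with slush-fund increments $b_i=\frac{3\cdot 4^{i-1}\varepsilon}{\cl{B}}$. In the discrete game $\cl{G}$ (here a \txt{DPBG}, so no rewards), \Roxy plays the same bids with the same target successors, rounded \emph{up} to the next integer. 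The proof then tracks the discrepancy between the actual discrete play and a shadow continuous play.

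First, I show the rounding cost is controlled. Each round of rounding costs \Roxy at most one extra coin. Against a discrete bid by \Sam, \Roxy's response is interpreted in the shadow game as a continuous bid by \Sam. Ties broken in \Sam's favour are absorbed because \Roxy's rounded-up bid exceeds the continuous bid by up to $1$. So after $k$ rounds the discrete share of \Roxy is at least the shadow share minus $\frac{k}{\Br+\Bs}$. Over $N$ rounds the cumulative error is at most $\frac{N}{\Br+\Bs}$.

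Second, I choose the budget size. I require $\Br+\Bs\ge \Br$ to be large enough that $\frac{N}{\Br+\Bs}<\frac{\varepsilon}{2}$. I also require the smallest slush increment $b_0=\frac{\varepsilon}{\cl{B}}$, scaled by $\Br+\Bs$, to be at least, say, $2$ coins. Then every rounding up is dominated by the slush term, and \Roxy never bids more than her actual budget. I then run the lemma's strategy with $\varepsilon/2$ in place of $\varepsilon$. The shadow play stays above $Th+\varepsilon/2$ by \cref{clm:above-threshold-CPBG}, so the discrete play stays above threshold. Each \Sam win still shrinks his budget at least three times faster than \Roxy's slush fund, since the factor-$3$ margin in \cref{obscont} survives additive errors of $1$ once the $b_i$ are integers of size at least $2$. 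Phase (iii), where \Roxy outbids \Sam's entire budget, is then available after $poly(\frac{n}{\varepsilon})$ \Sam wins. From there \Roxy reaches $t$ in at most $n$ more rounds. The requirement on $\Br$ is polynomial in $\frac{n}{\varepsilon}$ times the $\cl{B}$ factor.

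The main obstacle is phase (ii) of the lemma's strategy. There the continuous budgets are renormalized to $(u,Th(u)+\varepsilon',\cdot)$, and this renormalization must not compound rounding errors multiplicatively. I would handle it by never renormalizing in the discrete game. Instead I keep absolute budgets and bound the total additive loss by the number of rounds, as in the first step. I would also check that $\Delta(v)$ times a large integer budget, rounded, still makes \Sam's deviation to $v^+$ (winning by $1$) land him below the shadow's guaranteed bound. That check is exactly the $+1$ slack absorbed by $b_i\ge 2$.
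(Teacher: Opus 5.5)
Your route is the same as the paper's: run the explicit continuous strategy of \cref{explicit-winning-CPBG} with integer bids, charge the rounding (at most one coin per round) to the slush fund, and take budgets large. The gap is in how you convert coin loss into share loss. You divide by the \emph{initial} total $\Br+\Bs$, and you likewise require $b_0$ ``scaled by $\Br+\Bs$'' to be at least $2$ coins. But a \txt{DPBG} has no rewards, every winning bid goes to the bank, and the continuous strategy's bids are fractions of the \emph{current} budgets. For instance, whenever \Roxy must win at a vertex where \Sam could otherwise move off $\connTo{V}{t}$, she pays essentially \Sam's whole budget. \Sam's later win then costs him at least her next bid. So one cycle of the strategy can remove a constant fraction of all coins, and over $N$ rounds the absolute budgets can shrink geometrically. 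A trailing error of $k$ coins is then a share loss of order $k$ divided by the \emph{current} total, not $k/(\Br+\Bs)$. Once the total has shrunk, the increments $b_i$ are worth less than a coin, so the ``factor $3$ survives additive errors'' step fails too. Your remedy for the ``main obstacle'' does not address this. The renormalisation in step (ii) is only a change of units, whereas the decay is real coins leaving the game, and bidding at the initial scale would no longer be the continuous strategy. Since nothing bounds the current budgets from below, neither your error bound nor your claim that a budget polynomial in $\frac{n}{\varepsilon}$ times $\cl{B}$ suffices is justified.

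The missing ingredient is a lower bound on \Roxy's absolute budget along the play. In phases (i)--(ii), \Sam's share is at least $\frac{1}{2n+1}$. After each of her wins, \Roxy still holds her remaining slush: a share $\sigma>0$, depending on $\varepsilon$ and $n$, of a total that is at least \Sam's budget. So each win leaves her at least a fraction $\eta=\sigma/(2n+1)$ of her budget, and she never drops below $\Br\eta^{N}$. It is also cleaner to drop the share bookkeeping. With \Roxy's bids rounded up, every discrete round can be mirrored in the shadow game with the same winner, the same successor and the same \Sam budget. Meanwhile \Roxy's discrete budget trails the shadow one by at most her number of wins. Hence the token paths coincide, and the shadow reaches $t$ within $N$ rounds. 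The only thing left to check is that she can afford her rounded bids, which $\Br\eta^{N}\ge 2(N+n)$ guarantees; in phase (iii) she just bids \Sam's current budget plus one. This also makes your appeal to \cref{clm:above-threshold-CPBG} unnecessary. That observation only gives share $>Th$, not $>Th+\varepsilon/2$, and staying above the continuous threshold is not by itself a discrete win. The resulting requirement on $\Br$ is exponential in $N$, which the lemma's ``sufficiently large'' permits. The paper's per-round bound $\delta>\max\{1/\Br,1/\Bs\}$ is likewise in terms of the budgets at that round, so it needs the same lower bound.
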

\begin{proof}
    Using \cref{explicit-winning-CPBG} we would like to show that one may approximate a winning strategy for \Roxy on a \txt{DPBG}. 
    Similar to the proof of \cref{explicit-winning-CPBG}, we consider $\flBrack{Th(v)(\Br+\Bs)}$ as \Roxy's ``real'' budget and $\Br - \flBrack{Th(v)(\Br+\Bs)} = \clBrack{\varepsilon(\Br+\Bs)}$ as her slush fund (unlike \txt{CPBG} bids are in $\bb{N}$). 
    Before proceeding to produce a winning strategy we observe the behavior of bids in this setting. 
    
    Let $\delta  >\max\{\frac{1}{\Br}, \frac{1}{\Bs}\}$. We consider a round of bidding from $(v,\Br,\Bs)$ where \Roxy bids $\flBrack{\Delta(v)\Br}+1$ to move to  instead of her true optimal bid $\Delta(v)\Br$.
    
    Suppose \Roxy loses her bid, then (assuming \Sam plays well) we visit configuration $(v^+,\Br , \Bs - \flBrack{\Delta(v)\Br}-1)$ and \Sam has slightly overbid.
    Computing the difference in \Roxy's share of the budget between the optimal bid and it's approximation   
    $$
    \frac{\Br }{\Br + \Bs - \flBrack{\Delta(v)\Br} -1} - \frac{\Br }{\Br + \Bs - \Delta(v)\Br }  
    $$
    $$
    \ge \frac{\Br }{\Br + \Bs - \Delta(v)\Br -\delta\Br} - \frac{\Br }{\Br + \Bs - \Delta(v)\Br }
    $$
    $$
    = \frac{\delta\Br^2 }{(\Br + \Bs - \Delta(v)\Br -\delta\Br)(\Br + \Bs - \Delta(v)\Br)} \ge \frac{\delta \Br^2}{\Br^2}=\delta
    $$
    We deduce that \Roxy's slush fund has grown by at least $\delta$.

    Similarly, if \Roxy wins, the difference in the approximated vs un-approximated outcomes is:
    $$
     \frac{\Br - \Delta(v)\Br }{\Br-\Delta(v)\Br + \Bs } - \frac{\Br - \flBrack{\Delta(v)\Br} - 1 }{\Br - \flBrack{\Delta(v)\Br} - 1 + \Bs} \le 
    $$
    $$
    \le \frac{\Br - \Delta(v)\Br }{\Br-\Delta(v)\Br + \Bs } - \frac{\Br - \Delta(v)\Br - 1 }{\Br -\flBrack{\Delta(v)\Br} - 1 + \Bs} \le \frac{1}{\Br - \Delta(v)\Br + \Bs}
    $$
    $$
    \frac{\delta \Bs}{\Br - \Delta(v)\Br + \Bs}\ge \frac{\delta\Bs}{\Bs} = \delta
    $$

    so \Roxy's slush fund decreases by no more than $\delta$. 

    Taking $\Br$ large enough so that $\varepsilon > \cl{B}$ we may use the the strategy presented in \cref{explicit-winning-CPBG} (bids must be modified to be $\flBrack{\Delta(\pbold{u}[i] + b_i)}$, since every bid is a slight underbid compared to the previous strategy \Roxy's slush fund (as share of total budget increases by at least $d$).
\end{proof}

\sufficientlylargebudgetsreach*

\begin{proof}
    Construction is similar to \cref{DPBG-approx}, except bids must be adjusted to allow for income. To do this we take $\cl{B} = \cl{B}_n(\Bs, \Ro)$ and take bids that are $\Ro+1$ times larger. 
\end{proof}

\Bgrowthwinningstrat*
\begin{proof}
     We construct a winning strategy for \Roxy as follows -- 
    \begin{enumerate} 
        \item \label{(i)} Play according to $\sigma$ until your budget is at least $\mathterm{\cl{B}}=\cl{B}_n(\Bs,\Ro)$ (recall that $\Ri = \sum_{v\in V\setminus \{t\}\ri(v)}$). Set the first vertex visited when this threshold is surpassed by $u$ (see \cref{obs1} below).  
        \item \label{(ii)} Let $\pbold{u}[0],\pbold{u}[1]\dots \pbold{u}[k]$ where $\pbold{u}[0]=u$ and $\pbold{u}[k]=t$ be a shortest path from $u$ to $t$. \Roxy proceeds to bid $b_0=(3\Ro+1)$ at $\pbold{u}[0]$ to visit $\pbold{u}[1]$, $b_1 = (3\Ro+1)^2 $ and for $i>1$, $b_i= b_{i-1}\cdot(3\Ro+2)$ at $\pbold{u}[i]$ to visit $\pbold{u}[i+1]$. 

        \item \label{(iii)} If \Roxy wins every bid, she wins the game. Otherwise, if \Roxy loses at $\pbold{u}[i]$ and the chosen successor is $v$, set $u \leftarrow v$ and repeat \textit{(ii)}.
    \end{enumerate}
    
    The proof that this strategy is winning makes use of the following observations.

    \begin{observation}
    \label{obs1}
        The loop Zero-Stagnation property of growth strategies (see \cref{def:b-growth}) implies that only finitely many consecutive loops in which the budget of \Roxy does not increase may occur. This is because  when her budgets remains the same upon revisiting some vertex $v$,  the budget of \Sam has decreased. Repeating this eventually results in the budget of \Sam being entirely depleted. Thus, we are assured the budget of \Roxy can grow arbitrarily large and the strategy does eventually proceed to (\ref{(ii)}).
    \end{observation}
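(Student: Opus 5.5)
We argue by contradiction, following a play $\explayseq$ consistent with $\sigma$ from $(v_0,\Br,\Bs)$ with $\Br\ge B$. Suppose the play never visits $t$ and \Roxy's budget never reaches $\cl{B}=\cl{B}_n(\Bs,\Ro)$. We show that this is impossible, so the strategy eventually moves on to (\ref{(ii)}).

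First we reduce to the setting of \cref{def:b-growth}. That definition speaks of configurations $(v_0,B,x)$. We let \Roxy play $\sigma$ as if her budget were exactly $B$ and keep the surplus $\Br-B$ in reserve. The surplus is never used, so every play from $(v_0,\Br,\Bs)$ is, up to this constant shift of \Roxy's budget, a play from $(v_0,B,\Bs)$ consistent with $\sigma$. Hence all four properties transfer. By the Reachability Property (\cref{def:b-growth}.\cref{prop:4}) the token never enters a vertex with no path to $t$, so not visiting $t$ really means the play is infinite.

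Next we locate the loops. An infinite play on the finite graph visits some vertex infinitely often. For the returns to $v_0$, the index set $I^{\ppath}_{v_0}$ is infinite and the loop properties apply to consecutive returns $i, succ(i)$. I expect the main obstacle here: the definition only constrains returns to the vertex the strategy is \emph{for}, while an infinite play might eventually stop visiting $v_0$. My first attempt is to read $\sigma$ as inducing a growth strategy from every configuration it reaches, which is how the strategy is re-entered after (\ref{(iii)}). Then, by pigeonhole, we may pass to a vertex $w$ visited infinitely often and apply the properties to $I^{\ppath}_w$. Failing that, I would restrict the claim to plays returning to $v_0$ infinitely often and argue separately that the remaining plays reach $t$.

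Finally, we run the counting argument along the consecutive returns $i_1<i_2<\cdots$ in $I^{\ppath}_w$. By the Loop Non-Decreasing Property the sequence $\pBz[i_k]$ is nondecreasing. By the Loop Zero-Stagnation Property, whenever $\pBz[i_{k+1}]=\pBz[i_k]$ we have $\pBo[i_{k+1}]\le \pBo[i_k]-1$.

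Consider a maximal block of consecutive stagnant returns. \Sam's budget drops by at least one per return and lies in $\bb{N}$, so the block has length at most \Sam's budget at its start, which is finite. Thus every stagnant block is finite, and it is followed by a return at which \Roxy's budget strictly increases, by at least $1$ since budgets are integers.

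Property 2 (the Loop Ratio-Increase Bound) is not even needed for unboundedness. It only guarantees that \Sam's budget at the start of each block stays finite, which already holds because budgets are natural numbers. So strict increases occur infinitely often, $\pBz[i_k]\to\infty$, and \Roxy's budget eventually exceeds $\cl{B}$. This contradicts the assumption and shows that step (\ref{(i)}) terminates.
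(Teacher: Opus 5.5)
Your block count is the paper's argument. A stagnant return costs \Sam at least one coin, so every run of stagnant returns is finite. By the Loop Non-Decreasing Property, \Roxy's budget at successive returns is then nondecreasing with infinitely many unit increases. Holding the surplus $\Br-B$ in reserve is a sound way to handle $\Br\ge B$, a point the paper does not address, and you are right that Property~2 plays no role here.

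The gap is the step you flag yourself: choosing the vertex whose returns the loop properties control. Neither proposed fix closes it.
\begin{itemize}
\item Fix (1) is an extra hypothesis, not a consequence of \cref{def:b-growth}, which only constrains consecutive returns to the start vertex. Its justification also fails: step (\ref{(iii)}) loops back to (\ref{(ii)}) and never re-enters $\sigma$.
\item Fix (2) is false. Take vertices $v_0,w,t$ with edges $v_0\to w$, $w\to w$, $w\to t$, all rewards $0$ outside $t$, ties broken in favor of \Sam. Let $\sigma$ always bid $0$, heading for $t$ at $w$. Every play consistent with $\sigma$ visits $v_0$ exactly once, so Properties 1--3 hold vacuously and Property~4 holds. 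Thus $\sigma$ is a $0$-growth strategy for $v_0$ in the literal sense. Yet \Sam can bid $0$ at $w$ forever at no cost, so the play never returns to $v_0$ and never reaches $t$. From $(v_0,0,\Bs)$, \Roxy's budget never grows, so step (\ref{(i)}) never ends.
\end{itemize}

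So the statement cannot be proved from the literal definition, and no separate argument can show that plays leaving $v_0$ reach $t$. The paper's own short argument (``upon revisiting some vertex $v$'') silently uses a stronger reading: Properties 1--3 hold for $I^{\ppath}_w$ at every vertex $w$ the play revisits, not only at the start vertex. This is consistent with the cycle check in the proof of \cref{B-growth-NP}. State that reading explicitly as your hypothesis rather than as a tentative first attempt, and drop the fallback. Then the pigeonhole choice of a vertex visited infinitely often, together with your block count, is a complete proof and essentially the paper's.
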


    \begin{observation}
    \label{obs2}
        At (\ref{(iii)}) if a bid is won by \Sam at $\pbold{u}[j]$ and the token is moved to vertex $v$, the budget of \Roxy has decreased by at most $\sum_{i=0}^{j-1} b_i$ in the bids between $\pbold{u}[0]$ and $\pbold{u}_j$ (if \Roxy had received some rewards it decreases by less). On the other hand the budget of \Sam has decreased by at least $b_j - j\Ro$ (since for every $v$, $\ro(v)\le \Ro$ and $j$ vertices were visited from the first bid at $\pbold{u}[0]$ to the bid \Sam won). Note that $b_k = (3\Ro+1)\sum_{i=0}^{k-1}b_i$. Therefore we deduce that the budget of \Sam has decreased by at least $(3\Ro+1)(\sum_{i=0}^{k-1}b_i) - k\Ro \ge (2\Ro+1)(\sum_{i=0}^{k-1}b_i)$. Meaning every time we repeat step \textit{(ii)} the budget of \Sam has decreased at least $2\Ro+1$ times more than \Roxy. 
    \end{observation}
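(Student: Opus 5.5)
The plan is to verify the three quantitative parts of \cref{obs2} separately: \Roxy's loss, \Sam's loss, and the identity relating $b_j$ to the earlier bids. I then combine them with a crude bound on the rewards collected along the path. Throughout, write $S_j=\sum_{i=0}^{j-1}b_i$ and let $j$ be the index at which \Sam first wins a bidding in the current pass of step~(\ref{(ii)}).

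\emph{Step 1 (\Roxy's side).} Between $\pbold{u}[0]$ and $\pbold{u}[j]$, \Roxy won exactly the biddings at $\pbold{u}[0],\dots,\pbold{u}[j-1]$. Under poorman bidding she paid exactly $b_0,\dots,b_{j-1}$ and paid nothing on the round she lost. Her rewards $\rz(\cdot)\ge 0$ can only help, since $t$ is not visited before she wins. Hence her budget drops by at most $S_j$.

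I also need these bids to be affordable. Since she entered step~(\ref{(ii)}) with budget at least $\cl{B}_n(\Bs,\Ro)$, and $\sum_{i<k}b_i\le (3\Ro+1)^2(3\Ro+2)^{k-2}$ with $k\le n$, this is covered by the term $3(3\Ro+1)^2(3\Ro+2)^{n-2}$ in $\cl{B}_n$. I will remark on this without grinding through it.

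\emph{Step 2 (the bid identity).} I show $b_j=(3\Ro+1)S_j$ for all $j\ge 1$ by induction. The base case is $b_1=(3\Ro+1)^2=(3\Ro+1)b_0=(3\Ro+1)S_1$. For the inductive step,
\[
S_{j+1}=S_j+b_j=b_j\Bigl(\tfrac{1}{3\Ro+1}+1\Bigr)=\tfrac{3\Ro+2}{3\Ro+1}\,b_j ,
\]
so $(3\Ro+1)S_{j+1}=(3\Ro+2)b_j=b_{j+1}$, as required.

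\emph{Step 3 (\Sam's side).} Ties go to \Sam, so winning at $\pbold{u}[j]$ costs him at least $b_j$. On the rounds he lost he paid nothing. His gains are only rewards at the vertices entered, namely $\pbold{u}[1],\dots,\pbold{u}[j]$ and possibly the landing vertex $v$. Each such reward is at most $\Ro$, giving at most $(j+1)\Ro$ in total; this is the count in the observation up to the landing vertex. Thus his net decrease is at least $(3\Ro+1)S_j-(j+1)\Ro$.

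It remains to show this is at least $(2\Ro+1)S_j$, which is equivalent to $\Ro S_j\ge (j+1)\Ro$. If $\Ro=0$ this is trivial. Otherwise every $b_i\ge 3\Ro+1\ge 4$, so $S_j\ge 4j\ge j+1$ for $j\ge 1$.

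The case $j=0$ (\Sam wins the very first bid) must be handled separately, since $S_0=0$. There \Roxy loses nothing and \Sam's net change is at most $-(3\Ro+1)+\Ro<0$, so the ``at least $2\Ro+1$ times'' statement holds in the degenerate sense.

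\emph{Main obstacle.} The only delicate point is the reward accounting in Step~3: deciding precisely which vertices' rewards \Sam collects, including the vertex he moves to, and checking that the slack $\Ro S_j$ absorbs them. The argument above handles this via $b_i\ge 4$ whenever $\Ro\ge 1$. Everything else is the geometric-sum induction.
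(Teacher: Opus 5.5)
Your proof is correct and follows the same route as the reasoning built into the observation itself. You bound \Roxy's payments by $S_j$, bound \Sam's net loss from below by his winning bid minus collected rewards, and invoke $b_j=(3\Ro+1)S_j$, but more carefully than the paper: you prove the identity by induction, include the reward at the landing vertex, justify the final inequality via $S_j\ge j+1$, and treat $j=0$ separately, all of which the paper leaves implicit. The affordability aside is not needed for this observation, and the bound you quote there is slightly off (in fact $\sum_{i<k}b_i=(3\Ro+1)(3\Ro+2)^{k-1}$), but this does not affect the argument.
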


    We now show that the strategy is winning. Notice that when step \ref{(i)} is complete, due to property 2 and 3 of growth strategies (\cref{def:b-growth}), the budget of \Sam is at most $\Bs+m\cl{B}+2\Ro$ (The $2\Ro$ term is due to the path of play possibly not starting and ending with a loop). Since the strategy described forces \Sam to spend $2\Ro+1$ times more than \Roxy,
    once the budget of \Roxy has decreased by no more than $\frac{2\cl{B}}{3}$, the budget of \Sam has been decreased by at least 
    $$\frac{(4\Ro+2) \cl{B}}{3} =\Ro\cl{B} + \frac{(\Ro+2)\cl{B}}{3}
    $$
    $$
    =\Ro \cl{B}+ (\Ro+2) \Bs+(\Ro+2)(3\Ro+1)^2(3\Ro+1)^{n-2} \ge \Bs + \Ro \cl{B} + 2\Ro$$
    Which is more than \Sam's entire budget after completing step \ref{(i)}.
    On the other hand, the budget remaining for \Roxy is at least $(3\Ro)^2(3\Ro+1)^{n-2} = \sum_{i=0}^{n-1}b_i$ which is sufficient to repeat Step \ref{(ii)} one last time without losing any bids winning the game.
\end{proof}

\begin{lemma}
\label{lemma:all-vertices-have-growth-strat}
    Let $\cl{G}$ be a reachability \txt{DPBGr}, suppose \Roxy has a $B$-growth strategy from $v$. Then for any $u$, \Roxy has a $B_u$-growth strategy for some $B_u \le B+ \Rz$.
\end{lemma}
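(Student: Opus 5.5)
The plan is to build the $B_u$-growth strategy for $u$ out of the given $B$-growth strategy $\sigma$ for $v$. Roxy first steers the token from $u$ into the region controlled by $\sigma$, and from then on she plays a shifted copy of $\sigma$. The additive slack $\Rz$ is there to absorb the rewards gained and the budget lost during this transfer phase: along any simple path Roxy collects at most $\Rz$ in total, since each vertex of $V\setminus\{t\}$ contributes its reward at most once.

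\textbf{Step 1 (transfer).} From $(u,B_u,x)$, Roxy bids $0$ and always requests a successor on a shortest path towards $v$. If she stays in the set of configurations where $\sigma$ would already be defined, namely configurations occurring on plays consistent with $\sigma$ from some $(v,B,y)$, she immediately switches to the corresponding continuation of $\sigma$. Along the way I will check the four properties of \cref{def:b-growth} for loops based at $u$.
\begin{itemize}
\item Properties~\ref{prop:1} and~\ref{prop:3}: a loop at $u$ during the zero-bidding phase leaves Roxy's budget non-decreasing. If her budget stays the same, then Sam won every round of the loop, which should give the decrease in Sam's budget required by property~\ref{prop:3}. Here I must use that every vertex reached has a path to $t$, so that $t$ is not forbidden and a reward of $-1$ does not arise.
\item Property~\ref{prop:2}: Sam's rewards along a loop are bounded by $\Ro$, so the ratio is at most $\Ro$ whenever Roxy gains at least $1$.
\item Property~\ref{prop:4}: I will argue it from the fact that $\sigma$'s plays only visit $\connTo{V}{t}$, combined with the transfer path itself.
\end{itemize}

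\textbf{Step 2 (inheritance for loops at $u$ inside $\sigma$-plays).} Once Roxy follows $\sigma$, a loop based at $u$ need not be a loop based at $v$. My plan is to decompose such a loop using the occurrences of $v$ inside it, and to combine the inequalities of \cref{def:b-growth} for the $v$-loops with the bounded contributions of the at most two partial segments at its ends. Each end segment is a simple path, so Roxy can lose there at most what she receives, and the budget offset is at most $\Rz$. This is why the initial budget must be $B+\Rz$: starting with that much ensures that, when the first $v$-visit occurs, Roxy holds at least $B$. The shift is then admissible because the properties of $\sigma$ only compare budgets at successive visits to a vertex, so adding a constant to Roxy's budget preserves them.

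\textbf{Main obstacle.} The delicate point is Step 2 when a loop at $u$ contains no visit to $v$ at all, and also the case where $v$ is never reachable from $u$. In the first case I would try to charge the loop to the enclosing $v$-loop: by property~\ref{prop:1} the enclosing loop is non-decreasing for Roxy, and Roxy's budget never drops below the reward-adjusted bound $B$. In the second case the argument must rely solely on the zero-bid phase of Step 1 and on \cref{games without growth strat}. That result guarantees that any vertex $u$ for which some reachable vertex has no path to $t$ admits no growth strategy, so I will restrict the remaining verification to vertices from which every reachable vertex reaches $t$. If this restriction turns out not to cover the statement as worded, this is precisely the place where I expect the argument to need repair.
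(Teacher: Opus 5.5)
Your proposal has genuine gaps, and it runs in the opposite direction from the paper's proof.

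\textbf{The transfer phase fails.} When Roxy bids $0$ she has no say over the move. With ties broken in Sam's favor (the convention of the growth-strategy section), Sam wins every round by bidding $0$, pays nothing, and collects $r_s$; with any other tie-breaking a bid of $1$ suffices. So Roxy cannot steer toward $v$. The inference ``Roxy's budget is unchanged, so Sam won every round, so Sam's budget decreased'' is false: property~3 already fails if $u$ has a self-loop with $r_r(u)=0<r_s(u)$. Your property~2 bound is also wrong. The walk between two consecutive visits to $u$ is not a simple cycle and can revisit other vertices arbitrarily often, so Sam's rewards along it are not bounded by $R_s$. The same objection applies to the Step~2 claim that the end segments are simple paths. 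In any case, Roxy's losses on those segments are her winning bids under $\sigma$, and her rewards do not bound them.

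\textbf{The direction is wrong, and this is the missing idea.} Your construction needs $v$ to be reachable from $u$, whereas the paper propagates forward from $v$. It argues by contradiction: suppose a successor $u$ of $v$ had no growth strategy with budget $B+r_r(u)$. Then Sam, given enough budget, bids $B+1$ at $v$ (more than Roxy's whole budget) and moves the token to $u$ at no cost to Roxy. He then runs the spoiling strategy against the continuation of $\sigma$. Repeating this for successors of newly reached vertices covers every vertex reachable from $v$, with the additive slack absorbing the rewards Roxy collects on the way.

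Vertices upstream of $v$ cannot be covered at all. Let $t$ be the only successor of $v$, and let $u$ have a self-loop and an edge to $v$, with $r_r\equiv 0$ outside $t$ and $r_s(u)=1$. A play from $v$ never revisits $v$, so every strategy is a $0$-growth strategy from $v$, and $R_r=0$. Yet from $(u,0,x)$ with $x\ge 1$, Sam keeps the token at $u$ while Roxy's budget stays $0$ and his own does not decrease, so $u$ has no $0$-growth strategy. Every vertex reachable from this $u$ reaches $t$, so the restriction you plan in your last paragraph does not rescue the argument.

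Your worry that a loop at $u$ need not be a loop at $v$ is legitimate. Read literally, the definition only constrains loops at the start vertex, and the paper's argument tacitly needs the bad loops at $u$ to count against $\sigma$.
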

\begin{proof}
    Let $\sigma$ be a $B$-growth strategy from $v$ for \Roxy. 
    Suppose there exists $u\in succ(v)$ that has no $B_u$-growth strategy. Then for $B+\rz(u)$ there exists some budget $B_1$ for \Sam and strategy $\tau$ such that the play $\langle \sigma,\tau\rangle$ starting at $(u,B+\rz(u),B_1)$ contains loops that do not correspond to the properties of a growth-strategy. 
    
    Clearly when considering a play starting at $(v,B,B_1+B+1)$, \Sam may bid $B+1$ to move to $u$ and win. A similar argument can be made if $B_u > B+ \Rz$. Repeating this argument for neighbors of newly discovered vertices we find that every vertex have a $B'$-growth strategies. 
\end{proof}

\begin{lemma}
\label{lemma:growth-strat-is-positional}
    Let $\cl{G}$ be a reachability \txt{DPBGr}, suppose \Roxy has a $B$-growth strategy $\sigma$ from $v$. Then there exists a positional $B$-growth strategy $\sigma'$ from $v$.
\end{lemma}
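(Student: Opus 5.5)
We have to show that a $B$-growth strategy $\sigma$ from $v$ can be replaced by a positional one, that is, one whose bid depends only on the current configuration $(u,B_r,B_s)$ and not on the history. The obstacle is that the four properties of \cref{def:b-growth} are conditions on loops through $v$ along whole plays. They are not conditions on single configurations, so a naive ``pick one action per configuration'' need not preserve them. The plan is to reduce to a turn-based game on the (infinite) configuration graph with a suitable winning condition. We then show this condition admits positional strategies, or at least that a positional strategy can be carved out of $\sigma$.

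First, following the remark on turn-based versions of bidding games, I will view the situation as a turn-based game $\cl{G}_r$ in which \Roxy reveals her bid first. Its vertices are configurations, and the objective $W_m$, for the fixed $m$ from the definition, is the set of plays satisfying properties \ref{prop:1}--\ref{prop:4}. Properties \ref{prop:1}--\ref{prop:3} only compare consecutive visits to $v$, so $W_m$ is determined by the sequence of ``excursions'' between consecutive visits to $v$. Property \ref{prop:4} is a safety condition: never leave $\connTo{V}{t}$.

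Second, I will define $\sigma'$ by a ``first visit copying'' construction, in the spirit of standard memoryless-determinacy arguments on configuration graphs. For each configuration $c$ reachable under $\sigma$, fix one history $h_c$ consistent with $\sigma$ ending at $c$, chosen canonically, e.g.\ the first such history in a fixed well-order of histories by length. Set $\sigma'(c)=\sigma(h_c)$. The key claim is that any play $\explayseq'$ consistent with $\sigma'$ decomposes into segments, each of which is a suffix of some $\sigma$-consistent play starting from $h_c$. The argument runs by induction: whenever $\explayseq'$ reaches a configuration $c$, its continuation follows $\sigma$ after $h_c$ until it hits a configuration $c'$ whose canonical history $h_{c'}$ is not the one being followed.

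Third, I will check each property for $\explayseq'$. Property \ref{prop:4} is immediate, since every configuration visited is $\sigma$-reachable, and $\sigma$ never leaves $\connTo{V}{t}$. For properties \ref{prop:1}--\ref{prop:3}, consider two consecutive visits to $v$ in $\explayseq'$, at configurations $c_1=(v,x,y)$ and $c_2=(v,x',y')$. I want to show that the excursion from $c_1$ to $c_2$ also occurs as an excursion between consecutive $v$-visits in some $\sigma$-play, which would give $x'\ge x$ together with the ratio bound by $m$ or the strict decrease of \Sam's budget. A cleaner route avoids this: anchor the canonical histories at visits to $v$. So, at each visit to $v$ with configuration $c$, $\sigma'$ restarts $\sigma$'s continuation from $h_c$ and follows it literally until the next visit to $v$. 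Only this excursion-level memory is then needed, and the excursion is exactly one produced by $\sigma$, so properties \ref{prop:1}--\ref{prop:3} transfer verbatim. Positionality within an excursion is then obtained by the same canonical-history trick restricted to a single excursion. Repeating a configuration inside one excursion creates a loop not through $v$, and \Roxy may shortcut it, i.e.\ play as at the later occurrence. Shortcutting can only remove intermediate steps, and the endpoint budgets at the next $v$-visit are those of a genuine $\sigma$-excursion.

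The main obstacle is this last shortcutting step. Excursions from different starting configurations $c$ might pass through the same intermediate configuration $d$ and prescribe different actions there. Choosing one action at $d$ could then splice the beginning of an excursion from $c_1$ with the end of an excursion from $c_2$, breaking the comparison between the starting and ending $v$-budgets. I would try to resolve this by using that the relevant quantities are monotone in the starting budgets: by the option-comparison argument of \cref{claim:disadvantaged-strategy}, \Roxy is never hurt by a higher own budget or a lower \Sam budget. I would then pick at $d$ the action coming from the excursion whose start is ``worst'' for \Roxy among those reaching $d$, arguing that the resulting endpoint still satisfies the required inequalities relative to any actual start.
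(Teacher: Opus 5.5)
\textbf{There is a genuine gap, and it sits in the step you flag yourself.} Your proposed fix for collisions between excursions from different starts does not work. Properties~1--3 of \cref{def:b-growth} are relational: each compares the budgets at the return to $v$ with the budgets at the start of that same excursion. They are therefore not monotone in the start.

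Suppose excursions from $c_1=(v,x_1,y_1)$ and $c_2=(v,x_2,y_2)$ with $x_1<x_2$ both reach $d$. At $d$ you follow the continuation of the excursion from $c_1$, the start that is worse for \Roxy. A play that actually began at $c_2$ then returns to some $(v,x',y')$, and all you know is $x'\ge x_1$ plus the ratio and stagnation conditions relative to $(x_1,y_1)$.

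\begin{itemize}
\item Property~1 requires $x'\ge x_2$, which does not follow. The extra $x_2-x_1$ that \Roxy spent to reach $d$ is exactly what the $c_1$-continuation never promised to recover.
\item Neither $|y'-y|/(x'-x)<m$ nor ``$x'=x\Rightarrow y'<y$'' transfers to a different reference point.
\end{itemize}

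The option-comparison of \cref{claim:disadvantaged-strategy} is about who can \emph{win}, which is a monotone objective. It says nothing about these loop invariants. Taking the start that is best for \Roxy does not rescue the argument either. Such a start need not exist, because the \Roxy budgets of starts reaching $d$ can be unbounded. Even when it exists, properties~2--3 still fail to transfer. In short, splicing via canonical histories preserves objectives determined by the visited configurations, such as property~4. It does not preserve conditions that compare budgets at different times.

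\textbf{You are also targeting a weaker statement than the one the paper needs.} In the paper, a positional strategy means one bid $b_u$ and one successor $s_u$ per \emph{vertex} $u$, independent of budgets. This is what the proof of \cref{B-growth-NP} uses as its polynomial-size witness. A strategy depending on the whole configuration $(u,B_r,B_s)$ is an infinite object and could not play that role.

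Accordingly, the paper does not unfold $\sigma$ on the configuration graph. It invokes \cref{lemma:all-vertices-have-growth-strat}, which gives a $B_u$-growth strategy from every vertex $u$ with $B_u\le B+R_r$. It then defines $\sigma'$ by fixing, at each vertex, the first move of such a growth strategy. This ingredient, the existence of growth strategies from all other vertices, is absent from your proposal. The paper justifies in a single sentence that these first moves compose into a growth strategy. If you follow its route, that composition is the step that needs a real argument.
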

\begin{proof}
    From \cref{lemma:all-vertices-have-growth-strat} we deduce that one must simply choose for every vertex the first bid of a growth-strategy to perform it. 
\end{proof}

\BgrowthNP*
\begin{proof}
    To show membership in NP, we use a witness, $B$ and a positional strategy (meaning a bid $b_u$ for every vertex $u$ and successor $s_u\in succ(u)$) as  shown exists in \cref{lemma:growth-strat-is-positional}. 
    To verify this result we simply add weights to the graph of the game for \Roxy and \Sam. 
    $$
    w_r(u,v)=
        \begin{cases}
    			\rz(v)-b_v & v=s_u\\
                \rz(v) & \text{otherwise}
    	\end{cases}
    $$

    $$
    w_s(u,v)=
        \begin{cases}
    			\ro(v) & v=s_u\\
                \ro(v)-(b_v+1) & \text{otherwise}
    	\end{cases}
    $$

    $W_r$ simulates the change in \Roxy's budget when she plays according to the strategy provided by the witness. $W_s$ simulates the change in \Sam's budget assuming \Sam chooses to outbid \Roxy with the smallest possible amount . (Recall that in this section we assume ties are broken in favor of \Sam, with minor adaptation a similar proof can be provided for other tie breaking mechanisms).
    Using a simple BFS we determine if there is a path for which either \Roxy's budget decreases or \Roxy's budget remains the same but \Sam's grows.
\end{proof}

\end{document}